\newtheorem{theorem}{Theorem}
\newtheorem{corollary}{Corollary}
\newtheorem{lemma}{Lemma}
\newtheorem{remark}{Remark}
\newcommand{\s}{\mathsf{s}}
\newcommand{\asf}{\mathsf{a}}
\newcommand{\bsf}{\mathsf{b}}
\newcommand{\csf}{\mathsf{c}}
\newcommand{\rtwo}{I\hspace{-0.5pt}I}
\newcommand{\II}{(\mathrm{I\hspace{-0.5pt}I})}
\newcommand{\I}{{(\mathrm{I})}}
\begin{document}

\title{Asymmetric Encoding-Decoding Schemes for Lossless Data Compression}

\author{Hirosuke Yamamoto,~\IEEEmembership{Life Fellow,~IEEE, and Ken-ichi Iwata,~\IEEEmembership{Member, ~IEEE}}
        % <-this % stops a space
\thanks{The contents of this paper were presented in part at the IEEE ISIT2024 \cite{YI2024-a} and the ISITA2024 \cite{YI2024-b}. 
This work was supported in part by JSPS KAKENHI Grant Numbers 24K07487 and 24K14818.
}
% <-this % stops a space
%\thanks{Manuscript received October 26, 2023; revised December 8, 2023.}
}

% The paper headers
%\markboth{Journal of \LaTeX\ Class Files,~Vol.~1, No.~2, December~2023}%
%{Shell \MakeLowercase{\textit{et al.}}: A Sample Article Using IEEEtran.cls for IEEE Journals}

\IEEEpubid{0000--0000~\copyright~2023 IEEE}
% Remember, if you use this you must call \IEEEpubidadjcol in the second
% column for its text to clear the IEEEpubid mark.

\maketitle

\begin{abstract}
This paper proposes a new lossless data compression coding scheme named an asymmetric encoding-decoding scheme (AEDS), which can be considered as a generalization of tANS (tabled variant of asymmetric numeral systems). In the AEDS, a data sequence $\bm{s}=s_1s_2\cdots s_n$ is encoded  
in backward order $s_t, t=n, \cdots, 2,1$, while $\bm{s}$ is decoded in forward order $s_t, t=1, 2, \cdots, n$ in the same way as the tANS. But, the code class of the AEDS is much broader than that of the tANS.
We show for i.i.d.~sources 
that an AEDS with 2 states (resp.~5 states) can attain a shorter average code length than the
Huffman code if a child of the root in the Huffman code tree has a probability weight larger than
0.61803 (resp.~0.56984). Furthermore, we derive several upper bounds on the average code length of the AEDS, which also hold for the tANS, and we show that the average code length of the optimal AEDS and tANS with $N$ states converges to the  source entropy with speed $O(1/N)$ as $N$ increases.
\end{abstract}

\begin{IEEEkeywords}
AEDS, ANS, noiseless data compression.
\end{IEEEkeywords}

\section{Introduction}
The most well-known and commonly used noiseless data compression codes are the Huffman code\cite{Huffman}  and the arithmetic code\cite{Rissanen, Pasco, NiMa}.
However, new codes that are superior to these codes have been proposed recently.
The AIFV (almost instantaneous fixed-to-variable length)  code \cite{YTH2015, WYH2017, SKM, SNYK} can attain a better compression rate 
than the Huffman code by using multiple coding trees and allowing a little decoding delay.
Furthermore, the arithmetic code can be replaced with the ANS (asymmetric numeral systems) 
proposed by Duda \cite{duda2009, duda2014, DTGD, duda2022, PDPCMM, PDPCMM2023}.

In conventional data compression coding such as Huffman coding and arithmetic coding, 
a data sequence $\bm{s}=s_1s_2\cdots s_n$ is encoded and decoded in the same order in the way of $s_t, t=1,2,\cdots, n$.
On the other hand, in the ANS, $\bm{s}$ is encoded in backward order $s_t, t=n, \cdots, 2,1$, while $\bm{s}$ is decoded in forward order $s_t, t=1, 2, \cdots, n$. 
This backward-order encoding allows the ANS to encode and decode data sequences using a single integer variable although an interval of $[0,1)$ (or integers) and its division are used in the encoding and decoding of the arithmetic code. Because of this feature, the ANS can achieve almost the same compression rate as the arithmetic code with less arithmetic operations than the arithmetic code, and the ANS is used in many 
practical systems \cite{PDPCMM2023}. 
Especially, the tANS (tabled variant of ANS) \cite{duda2009, duda2014, DTGD, duda2022, PDPCMM, PDPCMM2023} has the advantage of being able to 
encode and decode data sequences at high speed  by using a lookup table instead of arithmetic operations.
Although there have been little detailed information-theoretical analyses on the ANS, the coding rate has been evaluated information-theoretically for each variant of the ANS in \cite{YI2024} recently.

In this paper, we  propose an asymmetric encoding-decoding scheme (AEDS) that uses backward-order encoding and forward-order decoding, as in the ANS, but does not use arithmetic operations such as Huffman coding. The AEDS can be considered as a generalization of the tANS, but the code class of the AEDS is much broader than that of the tANS. 

In Section II, we introduce the AEDS as a generalization of the tANS.
In Section III, we demonstrate that an AEDS can be easily constructed using an arbitrary code tree,
and an AEDS with 2 states (resp.~5 states) based on the Huffman code tree can attain a shorter average code length than the Huffman code if a child of the root in the Huffman code tree has a probability weight larger than 0.61803 (resp.~0.56984).  We also show that 
when the AEDS is applied to sources with uniform distributions,
an AEDS based on a non-Huffman code tree can attain a shorter average code length 
than the AEDS based on the Huffman code tree.
In Section IV, we consider a restricted AEDS called a state-divided AEDS (sAEDS), and we derive
several upper bounds of the average code length, which also hold for the tANS in addition to the sAEDS.
Finally in Section V, we treat  the case such that the number $N$ of states used in sAEDS is large,
and we show that the average code length of the optimal AEDS and tANS converges to the source entropy with speed $O(1/N)$ as $N$ increases.

All theorems and lemmas shown in Sections III--V are proved in Appendixes \ref{App-A}--\ref{App-H}.
In order to construct  an AEDS, the phased-in code\footnote{
The phased-in code is the Huffman code for a uniformly distributed source over a finite alphabet. But,
the phased-in code is also used for sources with non-uniform or unknown probability distributions. 
Refer \cite{phasedin}\cite{phasedin-2}. The phased-in code is also called the CBT (complete binary tree) code. } 
is often used in this paper. Hence, the average code length of the phased-in code is evaluated
in Appendix \ref{App-I}.

In this paper, $\lg a$ means $\log_2a$, $\lfloor a \rfloor$ (resp.~$\lceil a \rceil$) is the greatest  integer less than (resp.~the least integer larger than) or equal to $a$, and $|\mathcal{A}|$ represents the cardinality of a discrete set $\mathcal{A}$.

\section{AEDS generalized from tANS}
\subsection{tANS}
We first introduce tANS (tabled variant of ANS)\cite{PDPCMM,PDPCMM2023}. 
Let $\mathcal{S}$ be a finite discrete alphabet and $p=\{p(s)\,|\, s\in\mathcal{S}\}$ be a memoryless source
probability distribution. 
In the encoding and decoding of the tANS, the following sets are used.
\begin{itemize}
\item $\mathcal{X}=\{N, N+1, \cdots, 2N-1\}$, where $N$ is a positive integer, is the set of internal states with $|\mathcal{X}|=N$.
\item $\mathcal{X}_s$ is a subset of $\mathcal{X}$ corresponding to $s\in\mathcal{S}$ such that
$\mathcal{X}_s\cap \mathcal{X}_{s'} =\emptyset$ if $s\ne s'$ and $\mathcal{X}= \bigcup_{s\in\mathcal{S}} \mathcal{X}_s$. Letting $N_s=|\mathcal{X}_s|$, we have $N=\sum_{s\in\mathcal{S}} N_s$.
\item $\mathcal{Y}_s=\{N_s, N_s+1, \cdots, 2N_s-1\}$ is another set 
corresponding to $s\in\mathcal{S}$ with $N_s=|\mathcal{Y}_s|$.
\end{itemize}
Usually, $N$ and $N_s$ are selected so that $p(s)\approx N_s/N$ for each $s\in\mathcal{S}$.
Since $\mathcal{X}_s$ and $\mathcal{Y}_s$ have the same size $N_s$, 
we can have one-to-one correspondence between $\mathcal{X}_s$ and $\mathcal{Y}_s$, and hence,
between $(s,y)\in\mathcal{S}\times \mathcal{Y}_s$ and $x\in\mathcal{X}_s\subset \mathcal{X}$.
We represent this correspondence by two functions $x=C[s,y]$ and $(s,y)=D[x]$, and we assume
that these functions $C$ and $D$ are given\footnote{How to construct $\mathcal{X}_s, s\in\mathcal{S}$ and functions
$C$ and $D$ are given in e.g.,~\cite{PDPCMM, PDPCMM2023,DY2019, YD2019, BHSM}.}.

In the tANS, an i.i.d.~data sequence $\bm{s}=s_1s_2\cdots s_n$, $s_t\in\mathcal{S}$ is
encoded in backward order of $s_t$, $t=n, n-1, \cdots, 2, 1$ and is decoded in forward order of 
 $s_t$, $t=1, 2, \cdots, n$ in the following way when $N=2^k$ for an integer $k$.

\vspace{0.5cm}
\noindent{\bf Encoding procedure of tANS}
\begin{enumerate}
\item[1.] Set a data sequence $\bm{s}=s_1s_2\cdots s_n$, and select $x_n\in\mathcal{X}$ arbitrarily.
\item[2.] Repeat \eqref{eq2A-1}--\eqref{eq2A-4} in backward order, starting from $t=n$ down to 1.
\begin{align}
k_t&=\left\lfloor \lg\frac{x_t}{N_{s_t}}\right\rfloor, \label{eq2A-1}\\
\beta_t &= x_t\bmod 2^{k_t}, \label{eq2A-2}\\
y_{t-1}&= \left\lfloor \frac{x_t}{2^{k_t}} \right\rfloor, \label{eq2A-3}\\
x_{t-1}&= C[s_t, y_{t-1}]\label{eq2A-4}
\end{align}
\item[3.] The codeword sequence of $\bm{s}$ is given by $x_0 \beta_1 \beta_2 \cdots \beta_n$.
\end{enumerate}

\vspace{0.5cm}
\noindent{\bf Decoding procedure of tANS}
\begin{enumerate}
\item[1.] Set a codeword sequence $x_0 \beta_1 \beta_2 \cdots \beta_n$.
\item[2.] Repeat \eqref{eq2A-5}--\eqref{eq2A-7} from $t=1$ to $n$ in forward order.
\begin{align}
(s_t, y_{t-1})&=D[x_{t-1}], \label{eq2A-5}\\
k_t&= \left\lfloor \lg \frac{N}{y_{t-1}}\right\rfloor, \label{eq2A-6}\\
x_t&=2^{k_t}y_{t-1}+\beta_t. \label{eq2A-7}
\end{align}
\item[3.] The decoded data sequence is given by $s_1 s_2 \cdots s_n$.
\end{enumerate}

Note that \eqref{eq2A-1}--\eqref{eq2A-4} give a mapping $(s_t, x_t) \mapsto (\beta_t, x_{t-1})$ in encoding, 
and \eqref{eq2A-5}--\eqref{eq2A-7} give its reverse mapping $(\beta_t, x_{t-1}) \mapsto (s_t, x_t)$ in decoding. 
The codeword $\beta_t$ of a source symbol $s_t$ depends on state $x_t$, and $x_{t-1}$ obtained by \eqref{eq2A-4} satisfies $x_{t-1}\in\mathcal{X}_{s_t} \subset \mathcal{X}$.
In the codeword sequence  $x_0 \beta_1 \beta_2 \cdots \beta_n$, $x_0$ can be represented by a fixed length code with $\lceil \lg N\rceil$ bits.
In decoding, $\beta_t$ can be parsed correctly in the codeword sequence since 
$k_t$ represents the
bit length of $\beta_t$ which can be obtained by \eqref{eq2A-6}. 
Refer, e.g., \cite{PDPCMM, PDPCMM2023, YI2024} for more details of the encoding and decoding algorithms of the tANS and \cite{YI2024} for an information-theoretic analysis of the tANS.

\subsection{Asymmetric Encoding-Decoding Scheme (AEDS)}
We now consider a generalization of the tANS. We first generalize the set of internal states to an arbitrary finite set $\mathcal{X}=\{\alpha_1, \alpha_2, \cdots, \alpha_N\}$ with $N=|\mathcal{X}|$.
Furthermore, we generalize the encoding mapping $(s_t, x_t) \mapsto (\beta_t, x_{t-1})$ 
and the decoding mapping $ (\beta_t, x_{t-1}) \mapsto (s_t, x_t)$  to the following general functions.

\vspace{0.2cm}
\noindent {\bf Encoding functions}\footnote{To reduce confusion, different symbols $\hat{x}$ and $x$ are used to represent a state in encoding and decoding, respectively. The symbols `$-$' 
and `$+$' in $F$ functions stand for backward and forward state transitions, respectively.}\\
 \qquad For each $\hat{x}\in\mathcal{X}$, 
\begin{align}
E_{\hat{x}}:\; &\mathcal{S}\to \mathcal{B} \hspace{0.7cm}\text{(Encoding of $s$)},\label{eqE-1}\\
F_{\hat{x}}^{-}:\; &\mathcal{S}\to \mathcal{X}\hspace{0.7cm}\text{(Backward state transition)},\label{eqE-2}
\end{align}

\noindent {\bf Decoding functions}\footnotemark[2]\\
\qquad For each $x\in\mathcal{X}$, 
\begin{align}
D_x:\; &\mathcal{B}^{(D)}_x \to \mathcal{S} \hspace{0.5cm}\text{(Decoding of $s$)},\label{eqD-1}\\
F_x^{+}:\; &\mathcal{B}^{(D)}_x \to \mathcal{X}\hspace{0.5cm}\text{(Forward state transition)},\label{eqD-2}
\end{align}
where $\mathcal{B}=\{0,1\}^*$ includes a null sequence $\lambda$ with length 0,
and a set of codewords $\mathcal{B}^{(D)}_x \subset\mathcal{B}$ is 
defined by $\mathcal{B}^{(D)}_x=\{E_{\hat{x}}(s) \;|\; \hat{x}\in\mathcal{X}, s\in\mathcal{S}$ $\text{ such that } x=F_{\hat{x}}^{-}(s)\}$.
We impose the prefix-free condition on $\mathcal{B}^{(D)}_x$ for each $x\in\mathcal{X}$
to ensure instantaneous and unique decoding.
Then, a set of functions\footnote{When one of
$\{E_{\hat{x}}, F_{\hat{x}}^{-} |\, \hat{x}\in\mathcal{X}\}$ and $\{D_{x}, F_{x}^{+}|\, x\in\mathcal{X}\}$ is given, the other is determined uniquely.}
 $(\{E_{\hat{x}}, F_{\hat{x}}^{-} |\, \hat{x}\in\mathcal{X}\},
\{D_{x}, F_{x}^{+}|\, x\in\mathcal{X}\})$
is called an {\em AEDS (Asymmetric Encoding-Decoding Scheme)}.

\begin{figure}[t]
 \begin{center}
   \includegraphics[width=5cm]{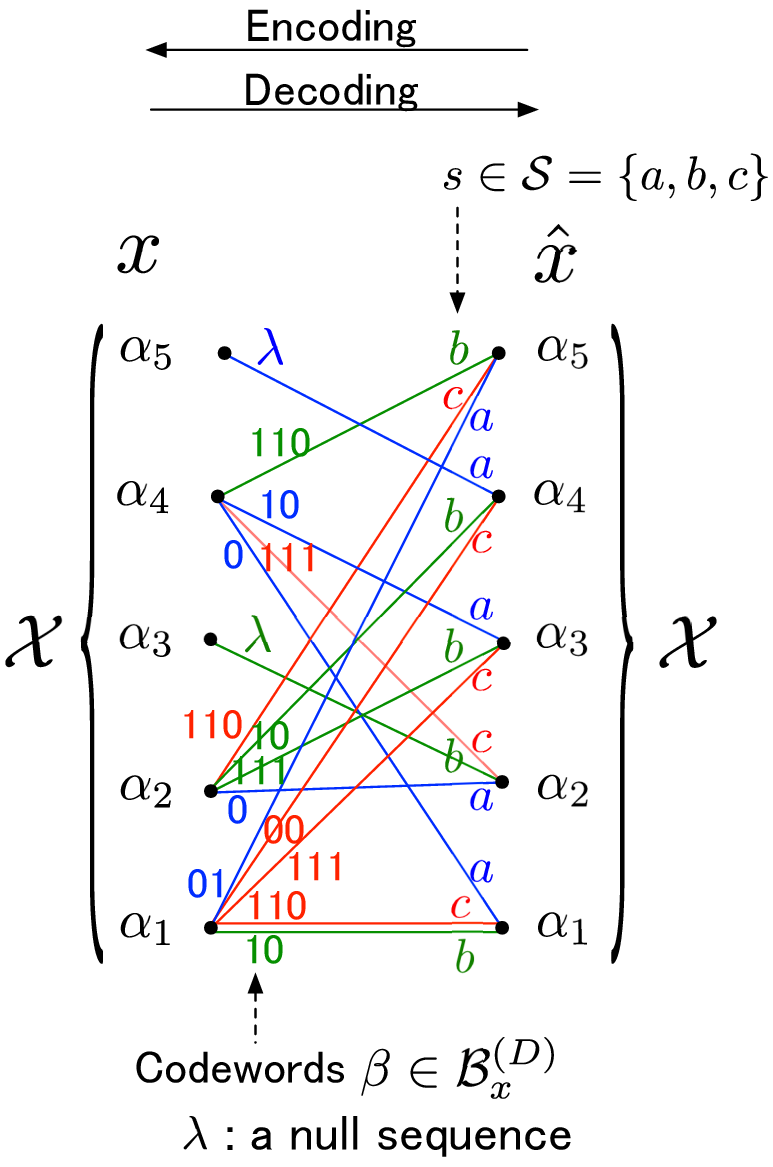}
    \caption{An example of an AEDS for $\mathcal{S}=\{a, b, c\}$, $\mathcal{X}=\{\alpha_1, \alpha_2, \cdots, \alpha_5\}$, and $N=5$.}
      \label{fig1}
 \end{center}
 \vspace*{-0.5cm}
\end{figure}

The encoding and decoding functions $(\{E_{\hat{x}}, F_{\hat{x}}^{-} |\, \hat{x}\in\mathcal{X}\},
\{D_{x}, F_{x}^{+}|\, x\in\mathcal{X}\})$
can be expressed  in an easy-to-understand manner 
using a state transition diagram as shown in Fig.~\ref{fig1}, where $\mathcal{S}=\{a,b,c\}$, $\mathcal{X}=\{\alpha_1, \alpha_2,\cdots, \alpha_5\}$, and $N=5$.
In the figure, each node on the right (resp.~left) side represents a state $\hat{x}\in\mathcal{X}$ in encoding
 (resp.~$x\in\mathcal{X}$ in decoding).
Each line between $x$ and $\hat{x}$ 
represents a state transition, which occurs from the right to the left (resp.~the left to the right)
in encoding (resp.~decoding). Each line has a source symbol $s\in\mathcal{S}$ 
and its codeword  $\beta\in\mathcal{B}$, which correspond to the state transition.

The encoding and decoding procedures of the AEDS can be represented as follows.

\vspace{0.2cm}
\noindent {\bf Encoding procedure of AEDS}
\begin{enumerate}
\item[1.] Set a data sequence $\bm{s}=s_1s_2\cdots s_n$, and select $\hat{x}_n\in\mathcal{X}$ arbitrarily.
\item[2.] Repeat \eqref{eq2B-1}--\eqref{eq2B-2-2} in backward order, starting from $t=n$ down to $t=1$.
\begin{align}
\beta_t&= E_{\hat{x}_t}(s_t), \label{eq2B-1}\\
x_{t}&= F_{\hat{x}_t}^{-}(s_t), \label{eq2B-2}\\
\hat{x}_{t-1}&=x_{t}.       \label{eq2B-2-2} 
\end{align}
\item[3.] The codeword sequence of $\bm{s}$ is given by $\hat{x}_0 \beta_1 \beta_2 \cdots \beta_n$.
\end{enumerate}

\vspace{0.3cm}
\noindent {\bf Decoding procedure of AEDS}
\begin{enumerate}
\item[1.] Set a codeword sequence $\hat{x}_0 \beta_1 \beta_2 \cdots \beta_n$.
\item[2.] Repeat \eqref{eq2B-3-0}--\eqref{eq2B-4} from $t=1$ to $n$ in forward order.
\begin{align}
x_t&= \hat{x}_{t-1}, \label{eq2B-3-0}\\
s_t&= D_{x_{t}}(\beta_t), \label{eq2B-3}\\
\hat{x}_t&= F_{x_{t}}^{+}(\beta_t). \label{eq2B-4}
\end{align}
\item[3.] The decoded data sequence is given by $s_1 s_2 \cdots s_n$.
\end{enumerate}

For instance, for $\bm{s}=s_1s_2s_3s_4=cbba$ and $\hat{x}_4=\alpha_1$ in Fig.~\ref{fig1},
we have that 
$\beta_4=E_{\hat{x}_4}(a)=0$, $\hat{x}_3=x_4=F_{\hat{x}_4}^{-}(a)=\alpha_4$, 
$\beta_3=E_{\hat{x}_3}(b)=10$, $\hat{x}_2=x_3=F_{\hat{x}_3}^{-}(b)=\alpha_2$,
$\beta_2=E_{\hat{x}_2}(b)=\lambda$, $\hat{x}_1=x_2=F_{\hat{x}_2}^{-}(b)=\alpha_3$,
$\beta_1=E_{\hat{x}_1}(c)=111$, $\hat{x}_0=x_1=$ $F_{\hat{x}_1}^{-}(c)=\alpha_1$,
and hence the codeword sequence of $\bm{s}$ is ``$\alpha_1 111 10 0$''.
In this example, we have that $\mathcal{B}^{(D)}_{\alpha_1}=\{00, 01, 10$, $110, 111\}$,
$\mathcal{B}^{(D)}_{\alpha_2}=\{0,10, 110, 111\}$, $\mathcal{B}^{(D)}_{\alpha_3}=\{\lambda\}$,
 $\mathcal{B}^{(D)}_{\alpha_4}=\{0, 10, 110, 111\}$, and $\mathcal{B}^{(D)}_{\alpha_5}=\{\lambda\}$,
 each of which is a prefix-free variable length code, and hence,  $\bm{s}=cbba$ can be easily decoded from 
``$\alpha_1 111 10 0$''.

In the AEDS,  the decoding codeword set $\mathcal{B}^{(D)}_{x}$  satisfies the prefix-free condition for each $x\in\mathcal{X}$.
However, the encoding codeword set $\mathcal{B}^{(E)}_{\hat{x}}=\{E_{\hat{x}}(s)\;|\;s\in\mathcal{S}\}$ does not need to satisfy the prefix-free condition for any  $\hat{x}\in\mathcal{X}$.
Furthermore, the decoding source symbol set $\mathcal{S}^{(D)}_x=\{D_{x}(\beta)\;|\;\beta\in\mathcal{B}^{(D)}_x\}$ does not need to coincide with $\mathcal{S}$ for any  $x\in\mathcal{X}$.
These features, which differ significantly from conventional variable length coding such as Huffman coding,
enable high-performance compression.
On the other hand, we can realize fast encoding and decoding like the Huffman code
since  the AEDS can be implemented without any arithmetic operations 
by using the state transition diagram shown in Fig.~\ref{fig1}.

When we encode $\hat{x}_0$ in a codeword sequence $\hat{x}_0 \beta_1 \beta_2 \cdots \beta_n$ by a fixed length code with $\lceil \ln N \rceil$ bits, 
the code length $L(\bm{s})$ of an i.i.d.~data sequence $\bm{s}=s_1s_2\cdots s_n$ is given by
\begin{align}
L(\bm{s})&=\lceil \lg N\rceil+ \sum_{t=n}^1 l(E_{\hat{x}_t}(s_t)), \label{eql-1}
\end{align}
where $l(\beta_t)$ is the bit length of a codeword $\beta_t=E_{\hat{x}_t}(s_t)$.
When the state transition in $\mathcal{X}$ is Ergodic, i.e.,~irreducible and aperiodic, for
a  source probability distribution $\{p(s)|s\in\mathcal{S}\}$ and $n$ goes to infinity, the average code length $L$ of the AEDS converges to 
\begin{align}
L&=\lim_{n\to\infty}  \frac{1}{n}\sum_{\bm{s}\in{\mathcal{S}^n}}p(\bm{s})L(\bm{s})\nonumber\\
&=\sum_{\hat{x}\in\mathcal{X}}\sum_{s\in\mathcal{S}}  p(s) Q(\hat{x})l(E_{\hat{x}}(s)), \label{eq3-3}
\end{align}
where $Q(\hat{x})$ is the stationary probability of state $\hat{x}\in\mathcal{X}$.
In the following sections, we assume that the state transition is Ergodic, and 
 evaluate the average code length based on \eqref{eq3-3}.

Although the average code length $L$ is represented based on the states $\hat{x}$ of encoding in
\eqref{eq3-3}, $L$ can also be represented based on the states $x$ of decoding as follows.
\begin{align}
L&=\sum_{x\in\mathcal{X}}\sum_{\beta\in\mathcal{B}_x^{(D)}} 
Q(x) \tilde{p}(\beta|x)l(\beta), \label{eq3-3-1}
\end{align}
where $\tilde{p}(\beta|x)$ is given by
\begin{align}
\tilde{p}(\beta|x)=\frac{p(s)Q(\hat{x})}{Q(x)} \text{ for $s=D_x(\beta)$ and $\hat{x}=F^+_x(\beta)$}. \label{eq3-3-2}
\end{align}
We note that $p(s)$ is independent of state $\hat{x}$ in encoding, but $\tilde{p}(\beta|x)$ depends on state $x$ in decoding.
Therefore, we use the simpler representation \eqref{eq3-3} rather than \eqref{eq3-3-1}
in the following sections.

\section{AEDS  based on a code tree}\label{Sec-3}
In this section, we show that an AEDS can be easily constructed from a code tree for a given source.
Let $T$ in Fig.~\ref{figT-1} be a code tree of a prefix-free code for a source distribution $\{p(s)\;|\;
s\in\mathcal{S}\}$. For this code tree, we use the following notation:
\begin{description}
\item{$l_T(s)$:} the codeword length of $s\in\mathcal{S}$ in code tree $T$.
\item{$T_R$:} the right subtree of $T$.
\item{$T_L$:} the left subtree of $T$.
\item{$\mathcal{S}_R$:} the set of source symbols included in $T_R$.
\item{$\mathcal{S}_L$:} the set of source symbols included in $T_L$.
\item{$P_R$:} the probability weight of $T_R$, i.e., $P_R=\sum_{s\in\mathcal{S}_R} p(s)$.
\item{$P_L$:} the probability weight of $T_L$, i.e., $P_L=\sum_{s\in\mathcal{S}_L} p(s)$.
\item{$L_T$:} the average code length of $T$, i.e., $L_T=\sum_{s\in\mathcal{S}}p(s) l_T(s)$.
\item{$L_{T_R}$:} the average  length of $T_R$, i.e., $L_{T_R}=\sum_{s\in\mathcal{S}_R} p(s) (l_T(s)-1)$.
\item{$L_{T_L}$:} the average  length of $T_L$, i.e., $L_{T_L}=\sum_{s\in\mathcal{S}_L} p(s) (l_T(s)-1)$.
\end{description}
Then, we have that $\mathcal{S}_R\cap \mathcal{S}_L=\emptyset$, $\mathcal{S}=\mathcal{S}_R\cup\mathcal{S}_L$, $P_R+P_L=1$, and
\begin{align}
L_T=L_{T_R}+P_R+L_{T_L}+P_L. \label{eqT-1}
\end{align}
We assume $P_R\geq 0.5$ without loss of generality.

\subsection{Type-I AEDS with $N$ states}
Based on a code tree $T$ of Fig.~\ref{figT-1}, we construct an AEDS with $N$ states
as shown in Fig.~\ref{fig-I1}, where $k=\lceil \lg N \rceil$, $\mathcal{B}_{\alpha_1}^{(D)}$ is defined in
Fig.~\ref{fig-I2}, and $\mathcal{B}_{\alpha_j}^{(D)}=T_R$ for $2\leq j \leq N$.
In Fig.~\ref{fig-I2}, $T_{\text{pi}}$ is a code tree of a phased-in code\footnotemark[1]  such that the code tree has $2^k-N$ leaves of $(k-1)$-bit length and $2N-2^k$ leaves of $k$-bit length. 
The codeword composition in encoding is represented on the right side of Fig.~\ref{fig-I1}. 
We call this AEDS {\em Type-I  AEDS}.

\begin{figure}[t]
 \begin{center}
   \includegraphics[width=2.5cm]{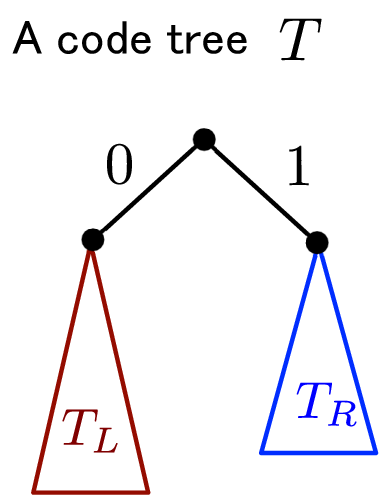}
   \caption{A code tree $T$. Any prefix-free code tree can be used to construct an AEDS. But, a Huffman code tree is used as $T$ in Corollaries 1 and 2.}
      \label{figT-1}
 \end{center}
\end{figure}

\begin{figure}[t]
\centering
\begin{minipage}[b]{0.45\columnwidth}
    \centering
    \includegraphics[width=8cm]{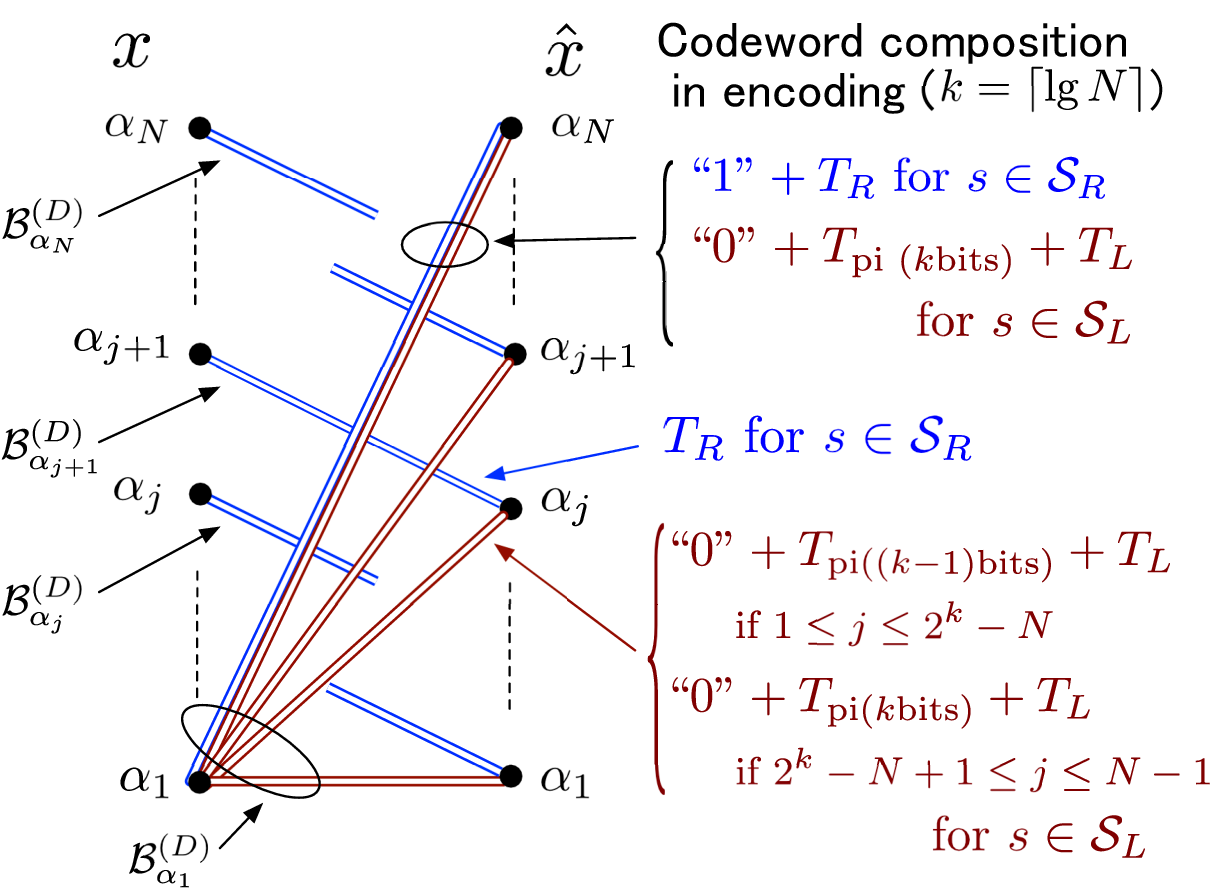}
   \caption{Type-I AEDS with $N$ states based on a code tree $T$,
   where $\mathcal{B}_{\alpha_1}^{(D)}$ is given by Fig.~\ref{fig-I2} and $\mathcal{B}_{\alpha_j}^{(D)}=T_R$ for $2\leq j \leq N$. The right side of the figure indicates the codeword composition in encoding.}
      \label{fig-I1}
\end{minipage}
\begin{minipage}[b]{0.05\columnwidth}
\hspace*{0cm}
\end{minipage}
\begin{minipage}[b]{0.45\columnwidth}
  \centering
    \includegraphics[width=6cm]{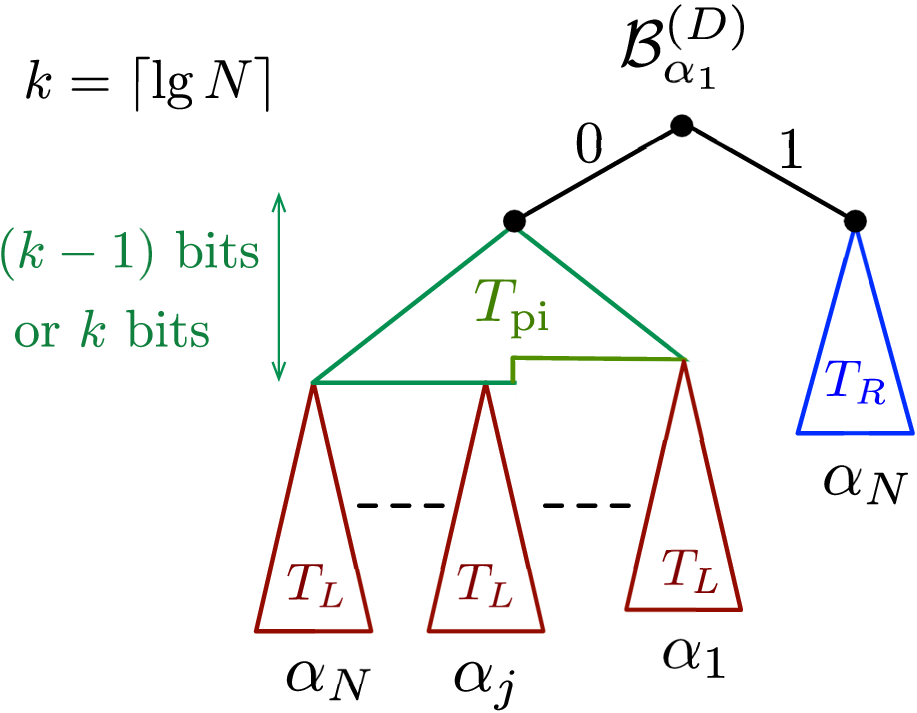}
   \caption{The code tree of $\mathcal{B}_{\alpha_1}^{(D)}$ used in Type-I AEDS.}
   \label{fig-I2}
 \end{minipage}
\end{figure}

Let $L_N^\I$ be the average code length of Type-I AEDS with $N$ states, and we define
the reduction $\delta_N^\I(P_R)$ in average code length attained by Type-I AEDS based on a code tree $T$ with $P_R$ as follows\footnote{We assume that we do not use the AEDS if $L_T-L_N^\I\leq 0$.}.
\begin{align}
\delta_N^\I(P_R) = [L_T-L_N^\I]_0, \label{eq-I1}
\end{align}
where $[u]_0=\max\{u, 0\}$.
Then, the following theorem holds.
\begin{theorem}\label{theorem-I}
For $N\geq 2$ and $k=\lceil \lg N \rceil$, the reduction $\delta_N^\I(P_R)$ of Type-I AEDS is given by
\begin{align}
\delta_N^\I(P_R)=\left[\frac{1-P_R^{N-1}}{1-P_R^N}P_R +\frac{1-P_R^{2^k-N}}{1-P_R^N}(1-P_R)-k(1-P_R)\right]_0, \label{eq-I2}
\end{align}
and $\delta_2^\I(P_R)$ is positive if $P_R>\omega^\I$ where $\omega^\I=(-1+\sqrt{5})/2\approx 0.6180$.
\end{theorem}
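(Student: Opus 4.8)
The plan is to evaluate $L_N^\I$ through \eqref{eq3-3}, which reduces the task to (i) finding the stationary law $Q$ of the encoding state chain and (ii) tallying the per-state codeword lengths $l(E_{\hat x}(s))$. Reading the construction off Figs.~\ref{fig-I1}--\ref{fig-I2}, the backward transition $F^{-}_{\hat x}$ acts as a bounded counter: a symbol $s\in\mathcal{S}_R$ advances $\alpha_i\mapsto\alpha_{i+1}$ for $1\le i\le N-1$, whereas every $s\in\mathcal{S}_L$ (from any state) and an $s\in\mathcal{S}_R$ read at the top state $\alpha_N$ reset the state to $\alpha_1$. Since the source is i.i.d., $\{\hat x_t\}$ is then a Markov chain that moves $\alpha_i\to\alpha_{i+1}$ with probability $P_R$ (for $i<N$) and jumps to $\alpha_1$ otherwise. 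The transitions $\alpha_i\to\alpha_{i+1}$ force $Q(\alpha_{i+1})=P_R\,Q(\alpha_i)$, so $Q(\alpha_i)=P_R^{i-1}Q(\alpha_1)$; normalizing with $\sum_{i=1}^N P_R^{i-1}=(1-P_R^N)/(1-P_R)$ gives $Q(\alpha_i)=\dfrac{(1-P_R)P_R^{i-1}}{1-P_R^N}$, and the $\alpha_1$-balance equation reduces to the identity $1=(1-P_R^{N-1})+P_R^{N-1}$ as a consistency check.

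Next I would read off the lengths. A right symbol decoded from one of $\alpha_2,\dots,\alpha_N$ omits the root bit and costs $l_T(s)-1$; the single right symbol decoded at $\alpha_1$ keeps its full length $l_T(s)$; and every left symbol keeps $l_T(s)$ but additionally carries a phased-in codeword identifying its destination state $\alpha_i$, of length $l_{\mathrm{pi}}(i)$. Inserting these into \eqref{eq3-3} and using $L_T=L_{T_R}+P_R+L_{T_L}+P_L$ from \eqref{eqT-1}, the $L_{T_R},L_{T_L},P_R,P_L$ contributions cancel against $L_T$ and leave
\begin{align}
L_T-L_N^\I = P_R\bigl(1-Q(\alpha_N)\bigr)-(1-P_R)\sum_{i=1}^N Q(\alpha_i)\,l_{\mathrm{pi}}(i).\nonumber
\end{align}
The first term is the bit saved on each right symbol except those forced through $\alpha_N$; the second is the phased-in overhead paid on each left symbol.

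It then remains to evaluate the two pieces. From the stationary law, $1-Q(\alpha_N)=1-\frac{(1-P_R)P_R^{N-1}}{1-P_R^N}=\frac{1-P_R^{N-1}}{1-P_R^N}$, which is exactly the first term of \eqref{eq-I2}. For the phased-in sum, the construction in Fig.~\ref{fig-I2} attaches the $2^k-N$ short $(k-1)$-bit words to the smallest indices, i.e.\ to the heaviest states; writing each length as $k$ minus an indicator for the short words gives $\sum_i Q(\alpha_i)l_{\mathrm{pi}}(i)=k-\sum_{i=1}^{2^k-N}Q(\alpha_i)=k-\frac{1-P_R^{2^k-N}}{1-P_R^N}$, which supplies the remaining two terms. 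Applying $[\,\cdot\,]_0$ yields \eqref{eq-I2}. Finally, setting $N=2,k=1$ makes the middle fraction $\frac{1-P_R^{0}}{1-P_R^2}$ vanish and collapses the bracket to $\frac{P_R}{1+P_R}-(1-P_R)=\frac{P_R^2+P_R-1}{1+P_R}$, which is positive exactly when $P_R^2+P_R-1>0$, i.e.\ $P_R>\frac{-1+\sqrt5}{2}=\omega^\I$.

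The hard part will be the first step: correctly extracting the boundary behaviour of the counter at $\alpha_1$ and $\alpha_N$ from the figures—in particular that a right symbol at $\alpha_1$ is the only right symbol charged a full bit, and that left symbols may target any state and therefore must carry the phased-in index. Once the chain and the length bookkeeping are pinned down, the geometric-series stationary law and the telescoping of the phased-in overhead (short words sitting on the heaviest states) are routine, and the positivity of $\delta_2^\I$ follows from a single quadratic.
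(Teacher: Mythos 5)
Your proposal is correct and follows essentially the same route as the paper's proof in Appendix~\ref{App-A}: the same geometric stationary law $Q(\alpha_j)=P_R^{j-1}(1-P_R)/(1-P_R^N)$, the same per-state length accounting (your savings/overhead split reproduces exactly the paper's Table~\ref{table-I1} and the cancellation via \eqref{eqT-1}), and the same quadratic for the $N=2$ positivity threshold. The only difference is cosmetic: you collapse the three ranges of $j$ into the single identity $L_T-L_N^\I=P_R(1-Q(\alpha_N))-(1-P_R)\sum_i Q(\alpha_i)l_{\mathrm{pi}}(i)$ before evaluating the geometric sums, whereas the paper sums the table term by term.
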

The proof of Theorem~\ref{theorem-I} is given in Appendix \ref{App-A}.

\begin{figure}[t]
 \begin{center}
   \includegraphics[width=7cm]{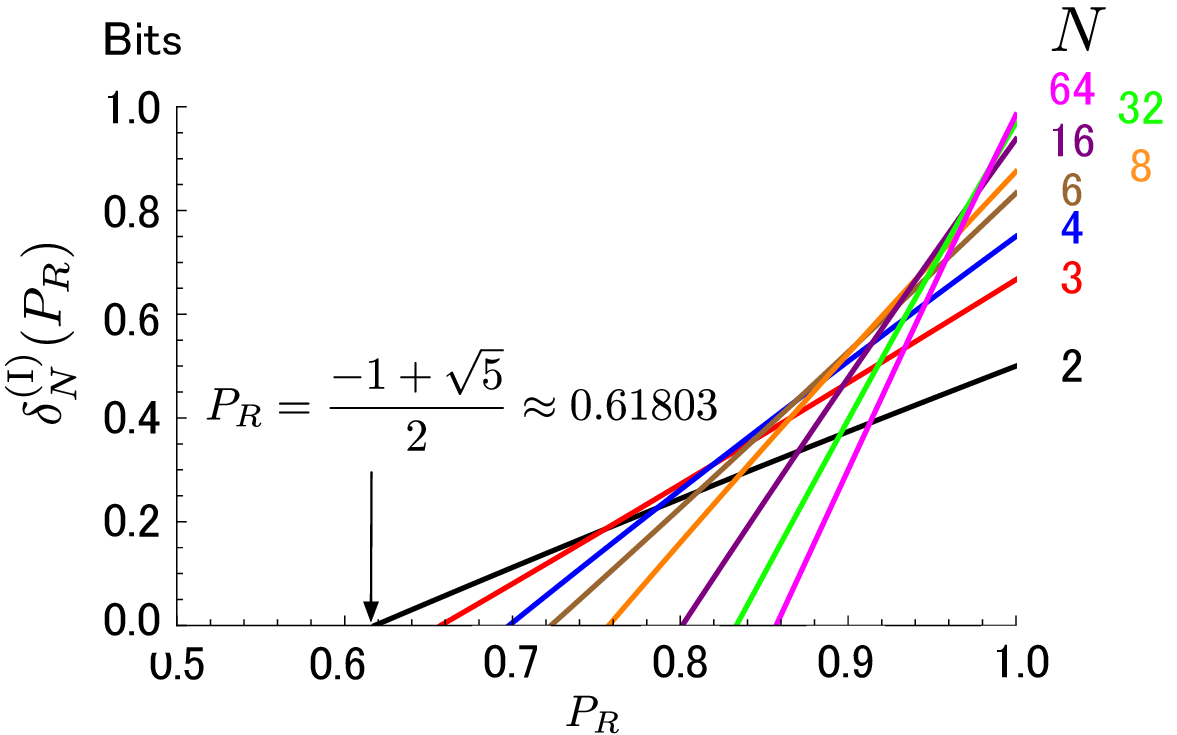}
   \caption{The reduction $\delta_N^\I(P_R)$ attained by Type-I AEDS in the average code length. }
      \label{fig-I3}
 \end{center}
\end{figure}

We note that when $N=2^k$, the phased-in code becomes a fixed $k$-bit-length code
and $\delta_N^\I(P_R)$ is simplified as
\begin{align}
\delta_N^\I(P_R)=&\left[\frac{1-P_R^{N-1}}{1-P_R^N}P_R -k(1-P_R)\right]_0.
\label{eq-I8}
\end{align}
Especially when $N=2$, $\delta_2^\I(P_R)$ is given by
\begin{align}
\delta_2^\I(P_R)&
=\left[\frac{P_R^2+P_R-1}{1+P_R}\right]_0,\label{eq-I7}
\end{align}
which is positive for $P_R>\omega^\I$.
In Fig.~\ref{fig-I3}, $\delta_N^\I(P_R)$ is plotted for several $N$.

\subsection{Type-{\rtwo}~AEDS} 
To reduce the average code length by an AEDS for $P_R\leq \omega^\I\approx0.6180$, we consider another 
type of AEDS with 5 states shown in Fig.~\ref{fig-II1}
based on a code tree $T$ of Fig.~\ref{figT-1}.
In Fig.~\ref{fig-II1}, $\mathcal{B}_{\alpha_1}^{(D)}$ and $\mathcal{B}_{\alpha_3}^{(D)}$ are given in Fig.~\ref{fig-II2}, $\mathcal{B}_{\alpha_2}^{(D)}=T_L$, and $\mathcal{B}_{\alpha_4}^{(D)}=\mathcal{B}_{\alpha_5}^{(D)}=T_R$. The codeword composition in encoding is represented on the right side of Fig.~\ref{fig-II1}. 
We call this AEDS {\em Type-\rtwo~AEDS}.

\begin{figure}[t]
\centering
\begin{minipage}[b]{0.45\columnwidth}
    \centering
   \includegraphics[width=7cm]{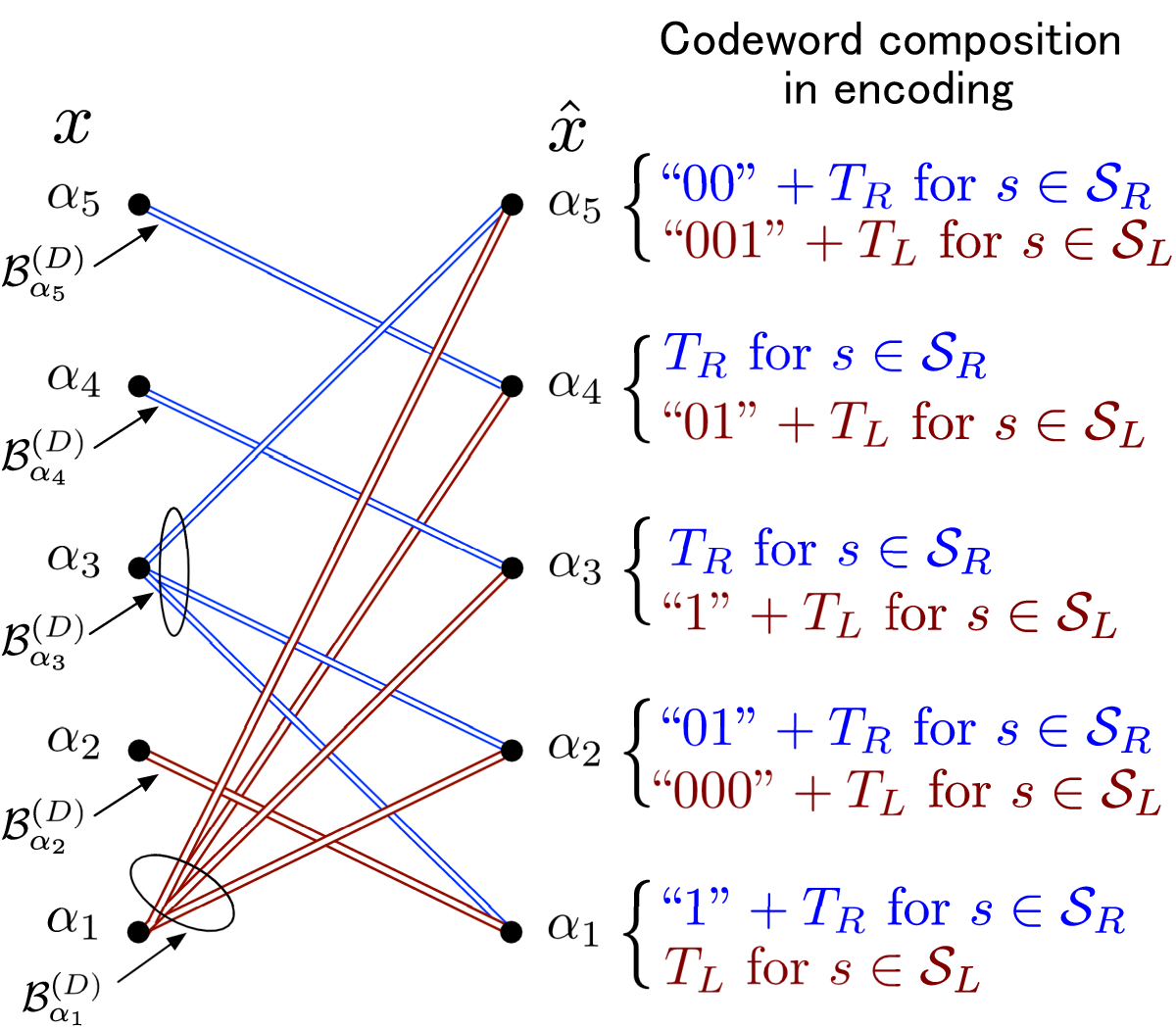}
   \caption{Type-\rtwo~AEDS with 5 states based on a code tree $T$, where $\mathcal{B}_{\alpha_1}^{(D)}$ and $\mathcal{B}_{\alpha_3}^{(D)}$ are given by Fig.~\ref{fig-II2} , $\mathcal{B}_{\alpha_2}^{(D)}=T_L$, and  $\mathcal{B}_{\alpha_4}^{(D)}=\mathcal{B}_{\alpha_5}^{(D)}=T_R$. The right side of the figure indicates the codeword composition in encoding.}
      \label{fig-II1}
\end{minipage}
\begin{minipage}[b]{0.05\columnwidth}
\hspace*{0cm}
\end{minipage}
\begin{minipage}[b]{0.45\columnwidth}
    \centering
  \includegraphics[width=7cm]{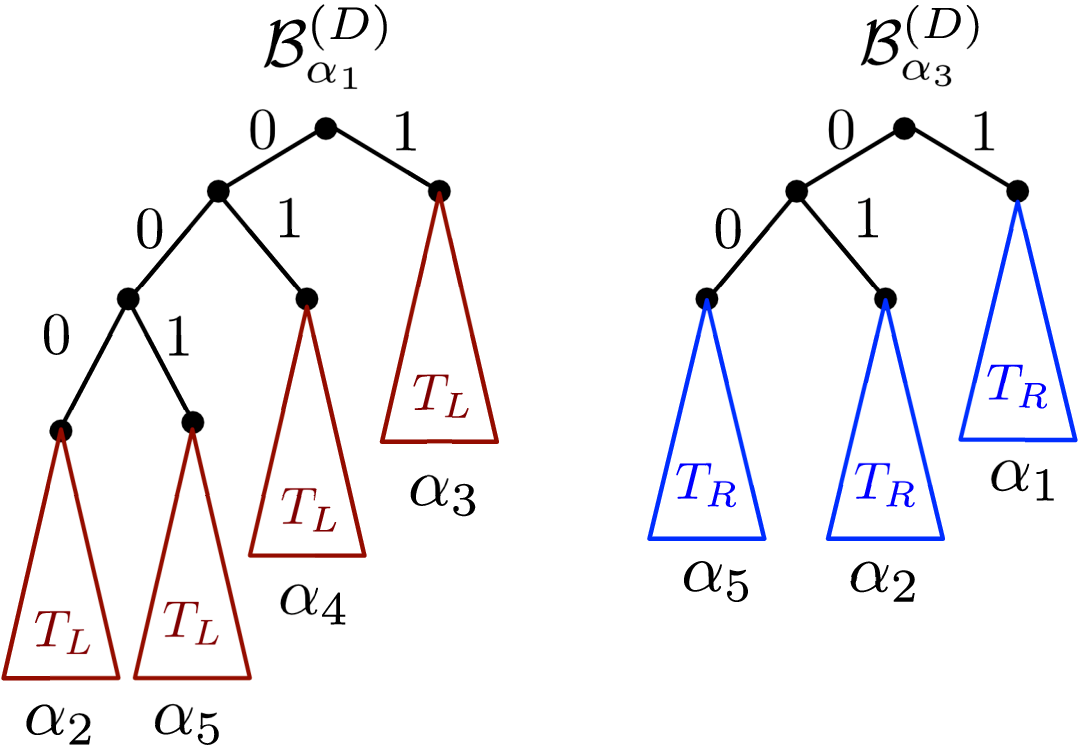}
   \caption{The code trees of $\mathcal{B}_{\alpha_1}^{(D)}$ and $\mathcal{B}_{\alpha_3}^{(D)}$ used in Type-\rtwo~AEDS.}
      \label{fig-II2}
  \end{minipage}
\end{figure}

Let $L^{\II}$ be the average code length of Type-\rtwo~AEDS, 
and we define the reduction $\delta^{\II}(P_R)$ in average code length attained  by Type-{\rtwo}~AEDS based on  a code tree $T$ with $P_R$ as\footnote{We assume that we do not use the AEDS if $L_T-L^{\II}\leq 0$.}\begin{align}
\delta^{\II}(P_R) = [L_T-L^{\II}]_0. \label{eq-II1}
\end{align}
Then, the following theorem holds.
\begin{theorem}\label{theorem-II}
The reduction $\delta^{\II}(P_R)$ of Type-\rtwo~AEDS is given by
\begin{align}
\delta^{\II}(P_R)=\left[\frac{P_R^3-P_R^2+2P_R-1}{(2-P_R)(1+P_R+P_R^2)}\right]_0, \label{eq-II2}
\end{align}
which is positive if $P_R > \omega^{\II}$ where $\omega^{\II}\approx0.56984$ is the real solution of $P_R^3-P_R^2+2P_R-1=0$.
\end{theorem}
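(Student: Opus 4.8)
The plan is to evaluate the average code length $L^{\II}$ from \eqref{eq3-3} and subtract it from $L_T$ using the decomposition \eqref{eqT-1}. First I would read off from Figs.~\ref{fig-II1} and \ref{fig-II2}, for each of the five encoding states $\alpha_i$, the backward transition $F_{\alpha_i}^{-}$ and the codeword $E_{\alpha_i}(s)$. Because Type-\rtwo~AEDS is built from the subtrees $T_R$ and $T_L$ of the \emph{same} tree $T$, I expect every encoding codeword to consist of the subtree codeword of $s$ (of length $l_T(s)-1$) preceded by a state-dependent prefix whose length $c_i(s)$ depends on $\alpha_i$ only through whether $s\in\mathcal{S}_R$ or $s\in\mathcal{S}_L$; concretely, the prefix is empty exactly when $F_{\alpha_i}^{-}(s)$ lands in a state whose decoding codebook is a pure $T_R$ or $T_L$ (here $\alpha_2,\alpha_4,\alpha_5$), and is nonempty only at transitions into the special states $\alpha_1,\alpha_3$ of Fig.~\ref{fig-II2}. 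Writing $c_i^R,c_i^L$ for these two prefix lengths and using $\sum_{s\in\mathcal{S}_R}p(s)(l_T(s)-1)=L_{T_R}$ together with the analogous identity for $L_{T_L}$, the inner sum in \eqref{eq3-3} becomes $\sum_s p(s)\,l(E_{\alpha_i}(s))=L_{T_R}+L_{T_L}+P_R c_i^R+P_L c_i^L$ for each $i$.

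Summing over $i$ against the stationary distribution and using $\sum_i Q(\alpha_i)=1$, the subtree costs factor out, giving $L^{\II}=L_{T_R}+L_{T_L}+C(P_R)$ with $C(P_R)=\sum_{i=1}^{5}Q(\alpha_i)\big(P_R c_i^R+P_L c_i^L\big)$. By \eqref{eqT-1} and $P_R+P_L=1$ this yields the clean identity $\delta^{\II}(P_R)=[\,1-C(P_R)\,]_0$, so the whole problem reduces to computing $C(P_R)$. (As a sanity check, the same reduction applied to Type-I with $N=2$ reproduces \eqref{eq-I7}.) The remaining ingredient is the stationary distribution: since $F_{\alpha_i}^{-}(s)$ depends on $s$ only through its subtree, the induced Markov chain on $\{\alpha_1,\dots,\alpha_5\}$ has transition probabilities that are each $P_R$, $P_L$, or a sum thereof, and I would solve the five balance equations $Q(\alpha_j)=\sum_i Q(\alpha_i)\Pr[\alpha_i\!\to\!\alpha_j]$ with $\sum_i Q(\alpha_i)=1$. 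I anticipate that the identification $\mathcal{B}_{\alpha_4}^{(D)}=\mathcal{B}_{\alpha_5}^{(D)}=T_R$ makes $\alpha_4,\alpha_5$ enter symmetrically and that the common denominator of the $Q(\alpha_i)$ is exactly $(2-P_R)(1+P_R+P_R^2)$, matching \eqref{eq-II2}.

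Substituting the stationary probabilities and the small integer table $\{c_i^R,c_i^L\}$ into $C(P_R)$, clearing denominators, and simplifying, I would check that $1-C(P_R)$ collapses to $\big(P_R^3-P_R^2+2P_R-1\big)/\big((2-P_R)(1+P_R+P_R^2)\big)$, after which \eqref{eq-II2} follows by applying $[\cdot]_0$. For the sign claim, note that $1+P_R+P_R^2>0$ for every real $P_R$ and $2-P_R>0$ since $P_R\le1$, so the denominator is strictly positive and the sign of $\delta^{\II}$ (before truncation) equals that of $g(P_R):=P_R^3-P_R^2+2P_R-1$. Since $g'(P_R)=3P_R^2-2P_R+2$ has discriminant $4-24<0$ and positive leading coefficient, $g'>0$ everywhere, so $g$ is strictly increasing and has a unique real root $\omega^{\II}$; hence $g(P_R)>0\iff P_R>\omega^{\II}$. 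Evaluating $g(0.5)=-0.125<0$ and $g(0.6)=0.056>0$ then confirms $\omega^{\II}\approx0.56984$.

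The main obstacle is the first step: correctly extracting the transition table $F_{\alpha_i}^{-}$ and the prefix lengths $c_i^R,c_i^L$ from Figs.~\ref{fig-II1} and \ref{fig-II2}, and in particular verifying that every codeword really does decompose as (state prefix)$+$(subtree codeword) so that $L_{T_R}$ and $L_{T_L}$ cancel exactly. Once that table and the stationary distribution are in hand, the reduction to $[\,1-C(P_R)\,]_0$, the final algebraic simplification, and the monotonicity argument for the root are all routine.
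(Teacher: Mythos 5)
Your plan is essentially the paper's proof: Appendix B solves the five balance equations for $Q(\alpha_1),\dots,Q(\alpha_5)$ (obtaining exactly the denominators $2-P_R$ and $1+P_R+P_R^2$ you anticipate), writes each per-state cost as $L_{T_R}+c_i^R P_R+L_{T_L}+c_i^L P_L$, and uses $\sum_i Q(\alpha_i)=1$ together with \eqref{eqT-1} to collapse everything to $L_T$ minus the stated rational function --- precisely your $[\,1-C(P_R)\,]_0$ reduction. Only the guessed figure details are slightly off (e.g.\ $\alpha_3,\alpha_4,\alpha_5$ form a chain with $Q(\alpha_5)=P_R\,Q(\alpha_4)$ rather than $\alpha_4,\alpha_5$ entering symmetrically, and the prefix-length table is $(c_i^R,c_i^L)=(1,0),(2,3),(0,1),(0,2),(2,3)$), which does not change the method; your monotonicity argument via $g'(P_R)=3P_R^2-2P_R+2>0$ is a clean justification of the uniqueness of the root, which the paper leaves as ``easily checked.''
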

Theorem \ref{theorem-II} is proved in Appendix \ref{App-B}.

The reduction $\delta^{\II}(P_R)$ is plotted by a red line in Fig.~\ref{fig-II3}.
We note from the figure that $\delta^{\II}(P_R)$ is larger than $\delta_2^\I(P_R)$
for $0.56984\lesssim P_R \lesssim 0.66536$.

\begin{figure}[t]
 \begin{center}
   \includegraphics[width=7cm]{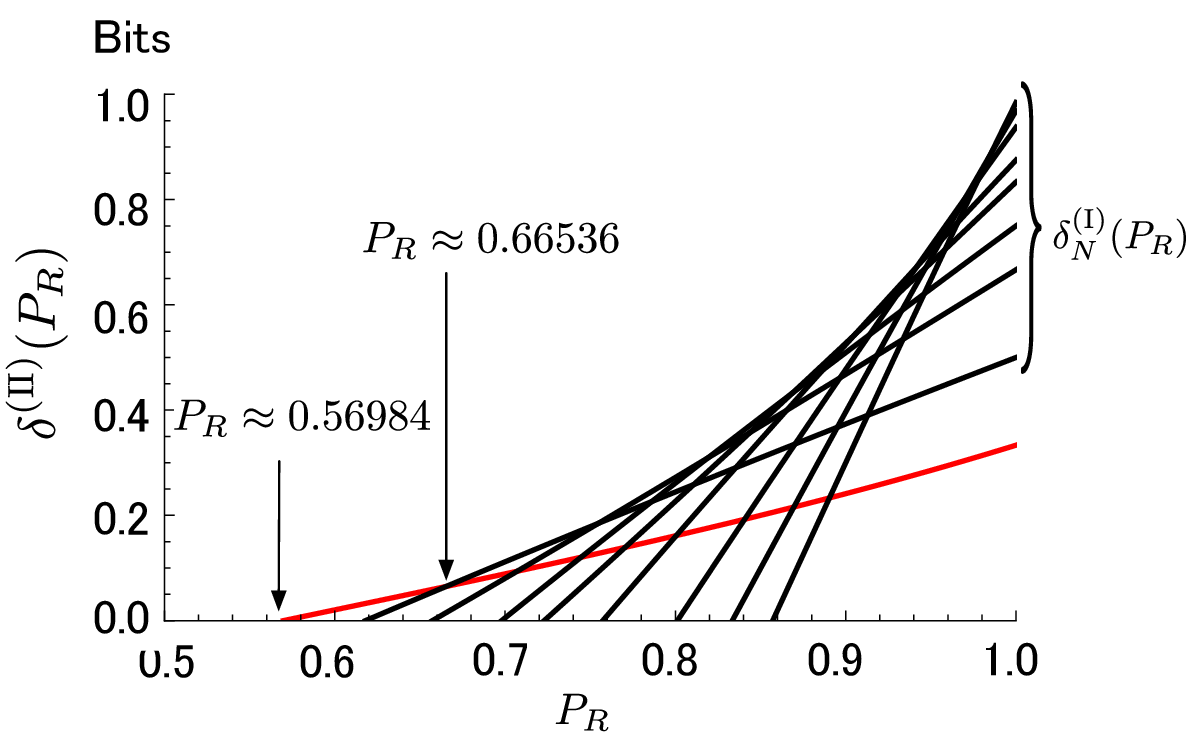}
   \caption{The reduction $\delta^{\II}(P_R)$ attained by Type-\rtwo~AEDS in the average code length.
   Black lines are $\delta_N^\I(P_R)$ shown in Fig.~\ref{fig-I3}. }
      \label{fig-II3}
 \end{center}
\end{figure}

\subsection{Type-I and Type-{\rtwo}~AEDSs based on the Huffman code tree}

In this subsection, we consider the case that Type-I  and Type-\rtwo~AEDSs 
are constructed based on the Huffman code tree for a given source.
Let $L_H$ be the average code length of the Huffman code. 
Then, the next corollary holds from Theorems \ref{theorem-I} and \ref{theorem-II}.
\begin{corollary}\label{Corollary-1}
Assume that the Huffman code tree has $P_R$ for a given source probability distribution.
Then the average code lengths $L_N^\I$ of Type-I AEDS and $L^{\II}$of Type-{\rtwo}~AEDS based on the Huffman code tree are obtained by
\begin{align}
L_N^\I= L_H- \delta_N^\I(P_R), \label{eq-C1}\\
L^{\II}=L_H-\delta^{\II}(P_R), \label{eq-C2}
\end{align}
where $\delta_N^\I(P_R)$ and $\delta^{\II}(P_R)$ are given by \eqref{eq-I2} and \eqref{eq-II2}, respectively.
If the Huffman code tree satisfies $P_R> \omega^\I\approx0.6180$ (resp.~$P_R>\omega^{\II}\approx0.56984$), 
then Type-I AEDS with 2 states (resp.~Type-\rtwo~AEDS with 5 states) attains a shorter average code length than the Huffman code.
\end{corollary}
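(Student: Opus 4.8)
The plan is to obtain the corollary as a direct specialization of Theorems~\ref{theorem-I} and~\ref{theorem-II} to the case in which the code tree $T$ used to build the AEDS is itself the Huffman code tree for the given source. The only new observation required is that for this choice the average code length $L_T=\sum_{s\in\mathcal{S}}p(s)l_T(s)$ of the tree coincides, by definition, with the Huffman average length $L_H$. Substituting $L_T=L_H$ into the defining relations \eqref{eq-I1} and \eqref{eq-II1}, which then read $\delta_N^\I(P_R)=[L_H-L_N^\I]_0$ and $\delta^{\II}(P_R)=[L_H-L^{\II}]_0$, and solving for $L_N^\I$ and $L^{\II}$ yields \eqref{eq-C1} and \eqref{eq-C2}.

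To make the equalities \eqref{eq-C1} and \eqref{eq-C2} precise I would treat the $[\,\cdot\,]_0$ clamp in two regimes. When the bracket is positive, i.e.\ $L_H>L_N^\I$, we have $\delta_N^\I(P_R)=L_H-L_N^\I$ and \eqref{eq-C1} holds verbatim; when it is nonpositive the footnote convention instructs us to employ the Huffman code directly, so the achieved length is $L_H=L_H-\delta_N^\I(P_R)$ since then $\delta_N^\I(P_R)=0$. In both cases the length actually realized equals $\min\{L_H,L_N^\I\}=L_H-\delta_N^\I(P_R)$, which is exactly \eqref{eq-C1}; the same reasoning gives \eqref{eq-C2}. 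The standing assumption $P_R\geq0.5$ needs no separate verification, since the two children of the Huffman root carry probability weights summing to $1$, so labelling the heavier one as $T_R$ forces $P_R\geq0.5$.

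For the threshold statements I would simply read off the sign conditions already established. Theorem~\ref{theorem-I} gives $\delta_2^\I(P_R)=[(P_R^2+P_R-1)/(1+P_R)]_0$, which is strictly positive exactly when $P_R^2+P_R-1>0$, i.e.\ $P_R>\omega^\I=(-1+\sqrt{5})/2$; hence \eqref{eq-C1} yields $L_2^\I=L_H-\delta_2^\I(P_R)<L_H$ in that range. Likewise Theorem~\ref{theorem-II} shows $\delta^{\II}(P_R)>0$ precisely when $P_R^3-P_R^2+2P_R-1>0$, i.e.\ $P_R>\omega^{\II}$, whence \eqref{eq-C2} gives $L^{\II}=L_H-\delta^{\II}(P_R)<L_H$. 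The computation is entirely routine once $L_T=L_H$ is in place; the only point demanding any care, and the closest thing to an obstacle, is making the clamping convention explicit so that the exact equalities hold in both the beneficial and non-beneficial regimes, with the strict improvements over the Huffman code following directly from the strict positivity of the reductions proved in the two theorems.
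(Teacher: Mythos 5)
Your proposal is correct and follows essentially the same route as the paper, which likewise obtains \eqref{eq-C1} and \eqref{eq-C2} by substituting $L_H$ for $L_T$ in \eqref{eq-I1} and \eqref{eq-II1} and then reads the strict-improvement thresholds off the positivity conditions in Theorems~\ref{theorem-I} and~\ref{theorem-II}. Your additional care about the $[\,\cdot\,]_0$ clamp and the $P_R\geq 0.5$ convention is consistent with the paper's footnoted conventions and does not change the argument.
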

\begin{proof}
\eqref{eq-C1} and \eqref{eq-C2} are obtained from Theorems \ref{theorem-I} and \ref{theorem-II}
by substituting $L_H$ into $L_T$ in \eqref{eq-I1} and \eqref{eq-II1}.
\end{proof}

Let $p_1, p_2, \cdots, p_{|\mathcal{S}|}$ denote probabilities arranged in descending order.
For example, consider a source with $|\mathcal{S}|=6$ and  $\{p_1=0.35$,
$p_2=p_3=p_4=0.15$, $p_5=p_6=0.1\}$.
In this case,  the Huffman code tree  has $P_R=0.65$ which is larger than $\omega^\I$.
Hence, Type-I AEDS with 2 states (resp.~Type-\rtwo~AEDS)
can achieve an average code length that is $\delta_2^\I(0.65)\approx 0.044$ 
(resp.~$\delta^{\II}(0.65)\approx 0.0544$) shorter than the Huffman code.
If $p_1\geq 0.5$, we have $P_R=p_1$. Hence, for any source satisfying $p_1>\omega^\I\approx 0.6180$ (resp.~$p_1>\omega^{\II}\approx0.56984$), Type-I AEDS with two states (resp.~Type-II AEDS) can attain shorter average code length than the Huffman code.

Next we consider the worst-case redundancy of the AEDS for a given $p_1\geq 0.5$.
It is shown in \cite{Gallager} that the worst-case redundancy $\mu_H(p_1)$ of the Huffman code
is given by
\begin{align}
\mu_H(p_1)&=\sup_{p_2,\cdots,p_{|\mathcal{S}|}} [L_H-H(p)] \label{eq-C3-0}\\
&=2-p_1-h(p_1), \label{eq-C3}
\end{align}
where $H(p)=-\sum_{i=1}^{|\mathcal{S}|} p_i\lg p_i$ is the source entropy and $h(u)=-u\lg u-(1-u)\lg (1-u)$ is the binary entropy function.
On the other hand, the worst-case redundancies of Type-I and Type-{\rtwo} AEDSs, 
which are defined by
\begin{align}
\mu_N^\I(p_1)&=\sup_{p_2,\cdots,p_{|\mathcal{S}|}}  [L_N^\I-H(p)],  \label{eq-C4}\\
\mu^{\II}(p_1)&=\sup_{p_2,\cdots,p_{|\mathcal{S}|}} [L^{\II}-H(p)], \label{eq-C5}
\end{align}
are determined by the following corollary.
\begin{corollary}\label{Corollary-2}
When Type-I and Type-{\rtwo} AEDSs are constructed based on the Huffman code tree for 
a source probability distribution $\{p_1, p_2, \cdots, p_{|\mathcal{S}|}\}$, the worst-case redundancies
for $p_1\geq 0.5$ are given by
\begin{align}
\mu_N^\I(p_1)&= \mu_H(p_1) - \delta_N^\I(p_1), \label{eq-C6}\\
\mu^{\II}(p_1)&= \mu_H(p_1) - \delta^{\II}(p_1). \label{eq-C7}
\end{align}
\end{corollary}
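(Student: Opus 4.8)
The plan is to reduce the claim to Corollary~\ref{Corollary-1} together with the single structural fact that, when $p_1\geq 0.5$, the reductions $\delta_N^\I$ and $\delta^{\II}$ depend on the distribution only through $p_1$. The starting point is the observation, already noted in the text, that for $p_1\geq 0.5$ the heaviest symbol sits at a leaf of depth one in the Huffman code tree, so the heavier subtree is the single leaf of weight $p_1$ and hence $P_R=p_1$. First I would justify this briefly: in Huffman's merging procedure the two lightest weights are combined at each step, and since $p_1\geq 0.5\geq\sum_{i\geq 2}p_i$, the leaf of weight $p_1$ is never selected for merging until the final combination, so it lies directly below the root. With $P_R=p_1$, Corollary~\ref{Corollary-1} gives, for every admissible distribution with this value of $p_1$,
\begin{align}
L_N^\I=L_H-\delta_N^\I(p_1),\qquad L^{\II}=L_H-\delta^{\II}(p_1). \nonumber
\end{align}

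Next I would substitute these identities into the definitions \eqref{eq-C4} and \eqref{eq-C5}. The crucial point is that, by Theorems~\ref{theorem-I} and~\ref{theorem-II}, $\delta_N^\I(p_1)$ and $\delta^{\II}(p_1)$ are functions of $p_1$ (and $N$) alone, so once $p_1$ is held fixed they are constant with respect to the free variables $p_2,\dots,p_{|\mathcal{S}|}$ over which the supremum is taken. Using that subtracting a constant commutes with the supremum, namely $\sup_x[f(x)-c]=\sup_x f(x)-c$, I obtain
\begin{align}
\mu_N^\I(p_1)=\sup_{p_2,\dots,p_{|\mathcal{S}|}}\bigl[L_H-H(p)\bigr]-\delta_N^\I(p_1)=\mu_H(p_1)-\delta_N^\I(p_1), \nonumber
\end{align}
where the last equality invokes the definition \eqref{eq-C3-0} of $\mu_H(p_1)$, and the same computation with $\delta^{\II}$ in place of $\delta_N^\I$ yields $\mu^{\II}(p_1)=\mu_H(p_1)-\delta^{\II}(p_1)$. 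This establishes \eqref{eq-C6} and \eqref{eq-C7}.

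Because the algebra collapses to a single rearrangement, the only step demanding care --- and the main obstacle --- is ensuring that $P_R=p_1$ holds uniformly across the entire feasible region of the supremum, not merely at its maximizer. I would emphasize that $P_R=p_1$ is valid for every distribution with $p_1\geq 0.5$, independently of $p_2,\dots,p_{|\mathcal{S}|}$, so Corollary~\ref{Corollary-1} applies at each point of the feasible set and the constant offset $\delta$ factors out of the supremum cleanly. It is also worth remarking that the supremum defining $\mu_H(p_1)$ ranges over exactly the same constraint set, so the passage from the Huffman redundancy to the AEDS redundancies requires no modification of the feasible region; the formulas remain valid even where $\delta_N^\I(p_1)$ or $\delta^{\II}(p_1)$ vanishes, in which case the scheme simply coincides with the Huffman code.
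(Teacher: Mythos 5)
Your proposal is correct and follows essentially the same route as the paper's own (very terse) proof: set $P_R=p_1$ using the fact that for $p_1\geq 0.5$ the heaviest symbol is a child of the root, apply Corollary~\ref{Corollary-1}, and pull the constant $\delta_N^\I(p_1)$ (resp.\ $\delta^{\II}(p_1)$) out of the supremum defining the worst-case redundancy. The extra care you take in checking that $P_R=p_1$ holds uniformly over the feasible set of the supremum is a point the paper leaves implicit, but it is the same argument.
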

\begin{proof}
\eqref{eq-C6} (resp.~\eqref{eq-C7}) is obtained by combining \eqref{eq-C1}, 
\eqref{eq-C3-0}, and \eqref{eq-C4} 
(resp.~\eqref{eq-C2}, \eqref{eq-C3-0}, and \eqref{eq-C5}) for $P_R=p_1$.
\end{proof}

$\mu_H(p_1)$, $\mu_N^\I(p_1)$, and $\mu^{\II}(p_1)$
are plotted for $0.5\leq p_1<1$ in Fig.~\ref{fig-C1}.
We note that $\lim_{p_1\to 1}\mu_H(p_1)=1$ but $\lim_{p_1\to 1}\mu_N^\I(p_1)=1/N$ and
$\lim_{p_1\to 1}\mu^{\II}(p_1)=2/3$.

\begin{figure}[t]
 \begin{center}
   \includegraphics[width=7cm]{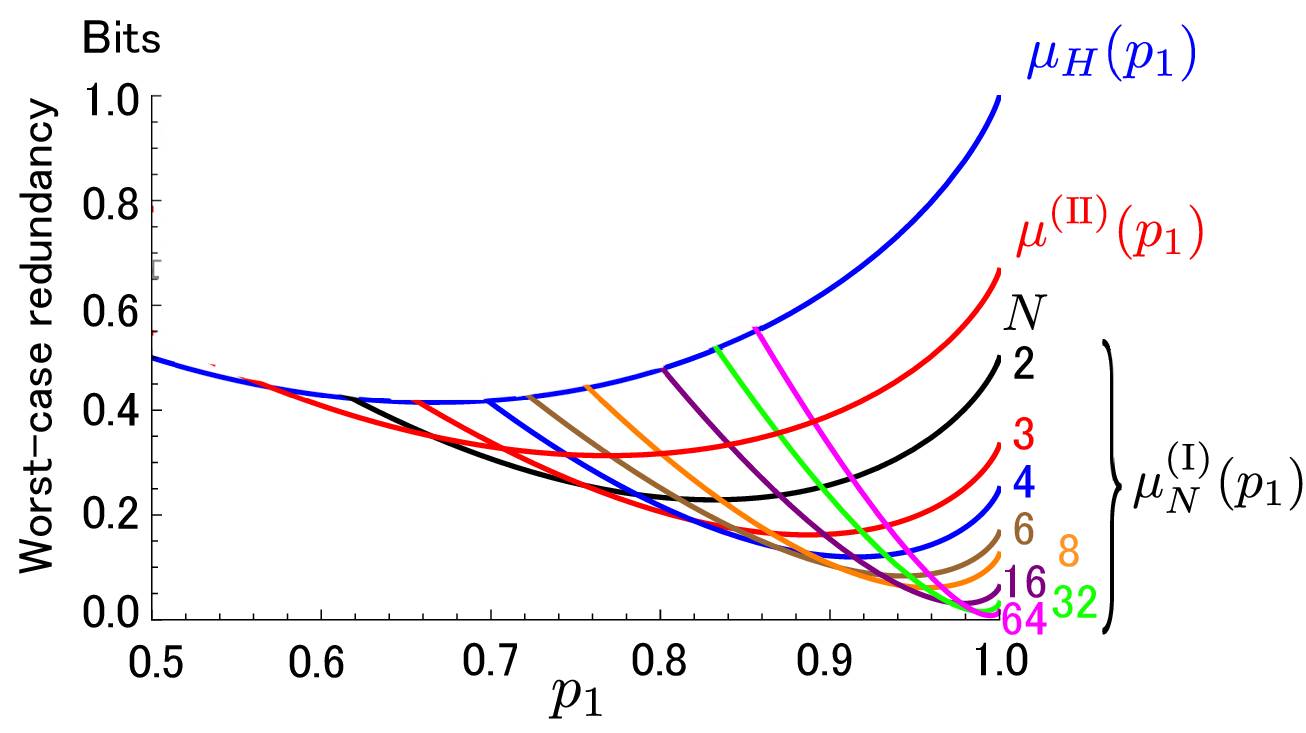}
   \caption{The worst-case redundancies $\mu_H(p_1)$,  $\mu_N^\I(p_1)$, and $\mu_N^{\II}(p_1)$.}
      \label{fig-C1}
 \end{center}
\end{figure}
\begin{remark}\label{remark-1}
When Type-I and Type-{\rtwo}~AEDSs are constructed based on the Huffman code tree, 
they require $N$ states and 5 states, respectively.
However, every $\mathcal{B}_{\alpha_i}^{(D)}$ in the AEDSs consists of $T_R$, $T_L$, or a combination of $T_R$ and $T_L$ of the Huffman code tree as shown in Fig.~\ref{fig-I3} and Fig.~\ref{fig-II1}. 
Hence, with a little ingenuity, the encoding and decoding of these AEDSs can be implemented with nearly the same complexity in space and time as the Huffman coding even if $N$ is large.
\end{remark}

\begin{remark}\label{remark-2}
If a codeword set $\mathcal{B}_x^{(D)}$ in Type I or Type-\rtwo~AEDSs
is not optimal, i.e., not the Huffman code, 
for the probability distribution $\{\tilde{p}(\beta|x) \;|\; \beta\in\mathcal{B}^{(D)}_x\}$ defined by \eqref{eq3-3-2}, 
the average code length of Type-I or Type-\rtwo~AEDSs can be further improved by modifying  
$\mathcal{B}_x^{(D)}$ to the Huffman code of $\{\tilde{p}(\beta|x) \;|\; \beta\in\mathcal{B}^{(D)}_x\}$. But, this improvement increases the space complexity because a different  $\mathcal{B}_x^{(D)}$ must be stored at each state $x\in\mathcal{X}$.
\end{remark}

\subsection{Application of Type-I and Type-{\rtwo}~AEDSs to uniform probability distributions}

The average code length $L_H$ of the Huffman code satisfies $L_H\leq L_T$ for any prefix-free code tree $T$. On the other hand,  $\delta_N^\I(P_R)$ is a monotonically increasing function of $P_R$ for $0.5\leq P_R<1$.
Hence, it may be possible to reduce the average code length $L_N^\I(P_R)=L_T-\delta_N^\I(P_R)$
further by using a code tree $T$ with larger $P_R$ than the Huffman code tree.
In this subsection, we show that such reduction is possible for sources with uniform probability distributions.

Assume that $|\mathcal{S}|=M$ and $p(s)=1/M$ for every $s\in\mathcal{S}$, where $M\geq 2$ is an integer.
Then, for $\kappa=\lceil \lg M\rceil$, the Huffman code\footnote{The Huffman code for a uniform probability distribution is a phased-in code \cite{phasedin}\cite{phasedin-2}.}  has
$(2^{\kappa} -M)$ codewords of $(\kappa-1)$ bits and $(2M-2^{\kappa})$ codewords of $\kappa$ bits. 
Hence,  the average code length $L_{H}(M)$ and the redundancy $\mu_{H}(M)$ of the Huffman code 
for a given $M$ are obtained by
\begin{align}
L_{H}(M)&=\kappa-\frac{2^{\kappa}-M}{M} \nonumber\\
&=\kappa+1 -\frac{2^{\kappa}}{M}, \label{eq-D1}\\
\mu_{H}(M)&=L_{H}(M)-\lg M\nonumber\\
&=\kappa+1 -\frac{2^{\kappa}}{M}-\lg M. \label{eq-D2}
\end{align}
Let $M_R$ (resp.~$M_L$) represent the number of source symbols included in $T_R$ (resp.~$T_L$)
of a Huffman code tree. 
To maximize $P_R$ of the Huffman code tree, we make $M_R$ as large as possible. Then, $P_R$ of the
Huffman code tree, say $P_{R,H}(M)$, is given by
\begin{align}
P_{R,H}(M)=\left\{\begin{array}{ll}
\frac{2^{\kappa-1}}{M}& \text{if $M\geq 3\times 2^{\kappa-2}$,}\\
\frac{M-2^{\kappa-2}}{M}& \text{if $M\leq 3\times 2^{\kappa-2}$.}
\end{array}\right. \label{eq-D3}
\end{align}
Hence,  Type-I AEDS based on the Huffman code tree has reduction $\delta_N^\I(P_{R,H}(M))$,
and the average code length $L_{N,H}^\I(M)$ and the redundancy $\mu_{N,H}^\I(M)$ are
obtained by
\begin{align}
L_{N,H}^\I(M)&=L_H(M)-\delta_N^\I(P_{R,H}(M)), \label{eq-D4}\\
\mu_{N,H}^\I(M)&=L_{N,H}^\I(M)-\lg M\nonumber\\
&=\mu_H(M)-\delta_N^\I(P_{R,H}(M)).  \label{eq-D5}
\end{align}

For a given $M$, we now consider a code tree $T$ such that $T_R$ and $T_L$ are 
the code trees of phased-in codes, i.e.,~Huffman code trees for uniform distributions with cardinality $M_R$ and $M_L$. Then, the average code length 
$L_T(M,M_R)$ and the right probability weight $P_R$ of the code tree $T$ are given by
\begin{align}
L_T(M,M_R)&=\frac{M_R}{M}(1+L_H(M_R))+\frac{M_L}{M}(1+L_H(M_L))\nonumber\\
 &=1+\frac{M_R}{M}L_H(M_R)+\left(1-\frac{M_R}{M}\right)L_H(M-M_R),\label{eq-D6}\\
 P_R&=\frac{M_R}{M}. \label{eq-D7}
\end{align}
Hence, from \eqref{eq-I1} and \eqref{eq-I2}, the average code length $L^\I_N(M,M_R)$ of
Type-I AEDS based on this code tree $T$ becomes
\begin{align}
L_{N}^\I(M,M_R)=L_T(M,M_R)-\delta_N^\I\left(\frac{M_R}{M}\right). \label{eq-D8}
\end{align}
Since $P_R\geq 0.5$, the minimum average code length $L_{N}^\I(M)$ of Type-I AEDS attained by adjusting $M_R$ is represented by
\begin{align}
L_{N}^\I(M)=\min_{M_R: \lceil M/2\rceil \leq M_R\leq M-1} L_{N}^\I(M,M_R). \label{eq-D9}
\end{align}
Although it is difficult to derive the minimum of \eqref{eq-D9} analytically, it is easy to 
obtain the minimum numerically.
Let $\hat{\delta}_N^\I(M)$ be the reduction of $L_{N}^\I(M)$ compared with $L_H(M)$
and $\hat{\mu}_{N}^\I(M)$ be the redundancy of $L_{N}^\I(M)$, which are defined by
\begin{align}
\hat{\delta}_{N}^\I(M)=&L_H(M)-L_{N}^\I(M), \label{eq-D10}\\
\hat{\mu}_{N}^\I(M)=&L_{N}^\I(M)-\lg M \nonumber\\
=& \mu_H(M)-\hat{\delta}_{N}^\I(M).\label{eq-D11}
\end{align}

First we consider the case of $N=2$. 
In Fig.~\ref{fig-D1}, $\mu_H(M)$, $\delta_2^\I(P_{R,H}(M))$, and $\hat{\delta}_2^\I(M)$ are plotted for
$16\leq M\leq 128$.
As shown in Fig.~\ref{fig-D1}, the characteristics for $2^{\ell-1} \leq M \leq 2^{\ell}$ are nearly identical 
for any $\ell$. Therefore, the following describes the characteristics of $\delta_2^\I(P_{R,H}(M))$ and $\hat{\delta}_2^\I(M)$ specially for $\ell=7$, which corresponds to the range of $64 \leq M \leq 128$.

\begin{figure}[t]
 \begin{center}
   \includegraphics[width=9cm]{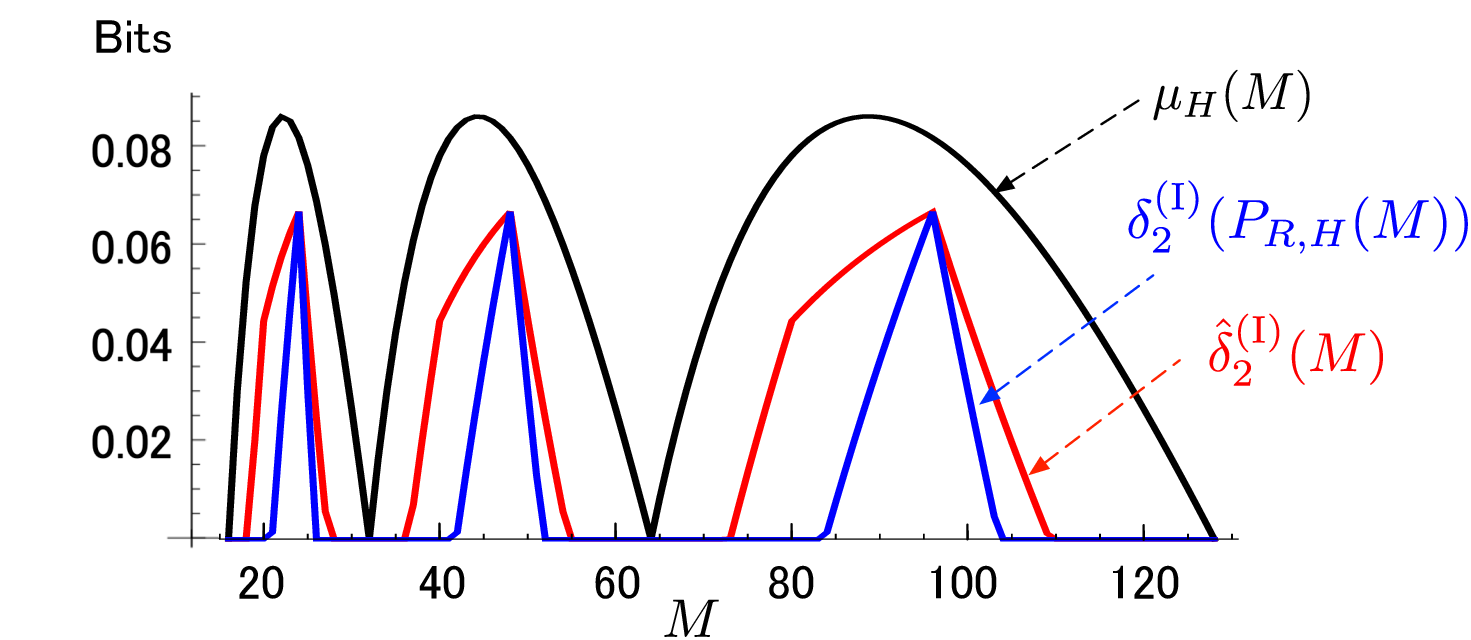}
   \caption{Redundancy $\mu_H(M)$ of Huffman code, reduction $\delta_2^\I(P_{R,H}(M))$ of Type-I AEDS  based on the Huffman code tree, and reduction $\hat{\delta}_2^\I(M)$ of Type-I AEDS based on the optimal code tree. }
      \label{fig-D1}
 \end{center}
\end{figure}

\subsubsection{Characteristics of $\delta_2^\I(P_{R,H}(M))$}
From \eqref{eq-D3} and $3\times 2^5=96$, $P_{R,H}(M)$ is monotonically increasing for $64 \leq M \leq 96$ and monotonically decreasing for $96 \leq M \leq 128$, and it takes the maximum value $P_{R,H}(96)=2/3$ at $M=96$.
Furthermore, we can easily check that $\delta_2^\I(P_{R,H}(M))$ is positive for $84\leq M \leq 103$ since $P_{R,H}(M)>\omega^\I \approx 0.61803$ in this range.   
We note that when $M=96$, the Huffman code tree has $M_R=64$ and $M_L=32$, and the subtrees $T_R$ 
and $T_L$ have zero redundancy, i.e.,~$\mu_H(64)=\mu_H(32)=0$. Therefore, the redundancy $\mu_H(96)$ is caused solely by the bias $P_R$ of the first bit of the Huffman code, and the redundancy can be reduced by Type-I AEDS effectively.

\subsubsection{Characteristics of $\hat{\delta}_2^\I(M)$}
In Table \ref{Table-M_R}, the optimal $M_R$ and $M_L$ to achieve the minimum in \eqref{eq-D9}
 are shown for each $M$ in $73\leq M \leq 109$.
Except for $M=96$, the code tree with the optimal $M_R$ and $M_L$ differs from the Huffman code tree, and $P_R=M_R/M$ of the optimal code tree is larger than $P_{R,H}(M)$.
As a result, we have in $73\leq M \leq 109$ that $\hat{\delta}_2^\I(M)>0$, $\hat{\delta}_2^\I(96)=\delta_2^\I(P_{R,H}(96))$, and $\hat{\delta}_2^\I(M)>\delta_2^\I(P_{R,H}(M))$ for $M\neq 96$ as shown in Fig.~\ref{fig-D1}.

For instance, in the case of $M=80$, the Huffman code tree has $L_H(80)=6.4$,
and Type-I AEDS based on the Huffman code tree cannot reduce the average code length
because it has $P_{R,H}(80)=0.6<\omega^\I\approx 0.61803$ and $\delta_{2}^\I(P_{R,H}(80))=0$. 
On the other hand, the optimal code tree with $M_R=64$ and $M_L=16$ has 
$L_T(80,64)=7\times (64/80)+5\times (16/80)=6.6$, $P_R=M_R/M=0.8$, and  $\delta_{2}^\I(0.8)\approx 0.2444$. Hence,  the average code length of Type-I AEDS based on the optimal code tree
is given by $L_2^\I(80)= L_T(80,64)-\delta_{2}^\I(0.8)\approx6.6-0.2444=6.3556$.
Finally we obtain the reduction $\hat{\delta}_2^\I(80)=L_H(80)-L_2^\I(80)\approx 6.4-6.3556=0.0444$
compared with the Huffman code.

\begin{table}[t]
\begin{center}
\caption{Optimal $M_R$ and $M_L$ for each $M$ in $73\leq M \leq 109$} \label{Table-M_R}
\vspace*{0.1cm}
\begin{tabular}{c||c|c|c|c|c}
$M$& $73, \cdots, 79$ & 80 & $81, \cdots, 95$& 96 & $97, \cdots, 109$ \\ \hline
$M_R$& $M-16$        & 64 &       64           & 64 &    $M-32$ \\ \hline
$M_L$& 16                & 16 &  $M-64$         & 32 &    32 \\
\end{tabular}
\end{center}
\end{table}

It is worth noting that in the case of $M=80$, we have $\mu_H(M_R)=\mu_H(M_L)=0$
for  $M_R=64$ and $M_L=16$ in the same way as the case of $M=96$.
For $73\leq M \leq 109$ other than 80 and 96, the optimal code tree $T$ has either $M_R=64=2^6$,  $M_L=16=2^4$, or $M_L=32=2^5$ as shown in Table \ref{Table-M_R}.

\subsubsection{Characteristics of $\hat{\delta}_N^\I(M)$}
Next, we consider $\hat{\delta}_N^\I(M)$ for $N\geq 2$,
which is plotted for $64\leq M\leq 128$ in Fig.~\ref{fig-D2}.
As shown in Fig.~\ref{fig-I3}, as $N$ increases, the lower bound of $P_R$ satisfying $\delta_N^\I(P_R)>0$
shifts toward 1 and  $\delta_N^\I(P_R)$  becomes larger for $P_R$ close to 1.
On the other hand, Type-I AEDS can attain good performance when $M_R=2^{\ell_R}$ and/or $M_L=2^{\ell_L}$ for some integers $\ell_R$ and $\ell_L$. When $M$ is large in $64\leq M\leq 128$, it is difficult for $P_R=M_R/M$ to satisfy the above two conditions. Hence, $\delta_N^\I(P_R)$ becomes larger for smaller $M$ in $64\leq M\leq 128$ as $N$ increases.

\begin{figure}[t]
 \begin{center}
   \includegraphics[width=6cm]{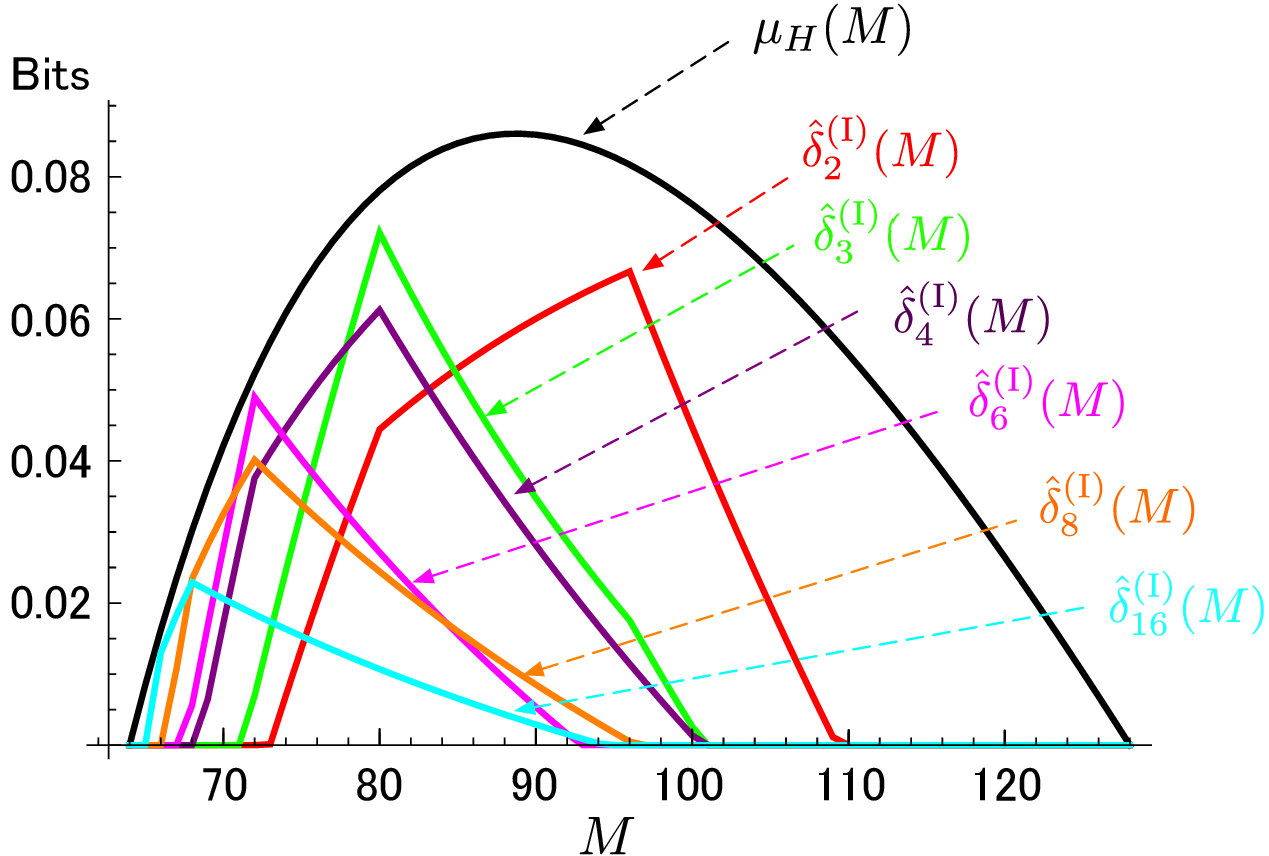}
   \caption{$\hat{\delta}_N^\I(M)$ for $N=2, 3, 4, 6, 8, 16$.}
      \label{fig-D2}
 \end{center}
\end{figure}

For instance, the optimal division of $M=72$ for $N=4,6,8$ is  $(M_R, M_L)=(64, 8)$,
which gives  $P_R=64/72\approx0.889$. 
Similarly the optimal division of $M=68$ for $N=8,16$ is  $(M_R, M_L)=(64,4)$,
which gives $P_R=64/68\approx 0.941$.
In this way, by using an appropriate $N$ in the lower range of $64\leq M\leq 128$,
the reduction $\hat{\delta}_N^\I(M)$ can be made close to the redundancy $\mu_H(M)$ of the Huffman code,
i.e., the redundancy $\hat{\mu}^\I(M)$ of Type-I AEDS can be made small.
From Fig.~\ref{fig-D2}, the optimal Type-I AEDS with an appropriate $N$ can attain
the redundancy $\hat{\mu}^\I(M)$ less than 0.02 (resp.~0.01) for $64\leq M \leq 82$ (resp.~$64\leq M \leq73$).

\subsubsection{Characteristics of $\hat{\delta}^{\II}(M)$}
Next, we consider the case that Type-{\rtwo} AEDS is used instead of Type-I AEDS
for uniform distributions.
Then, in the same way as \eqref{eq-D8}--\eqref{eq-D11} of Type-I AEDS, the reduction $\hat{\delta}^{\II}(M)$ 
and the redundancy $\hat{\mu}^{\II}(M)$ of Type-{\rtwo} AEDS
are defined by using $\delta^{\II}(M_R/M)$ instead of $\delta_N^\I(M_R/M)$ in \eqref{eq-D8}
as follows.
\begin{align}
L^{\II}(M,M_R)&=L_T(M,M_R)-\delta^{\II}\left(\frac{M_R}{M}\right), \label{f}\\
L^{\II}(M)&=\min_{M_R: \lceil M/2\rceil \leq M_R\leq M-1} L^{\II}(M,M_R), \label{eq-D13}\\
\hat{\delta}^{\II}(M)& =L_H(M)-L^{\II}(M), \label{eq-D14}\\
\hat{\mu}^{\II}(M)& =L^{\II}(M)-\lg M \nonumber\\
& = \mu_H(M)-\hat{\delta}^{\II}(M).\label{eq-D15}
\end{align}

\begin{figure}[t]
 \begin{center}
   \includegraphics[width=6cm]{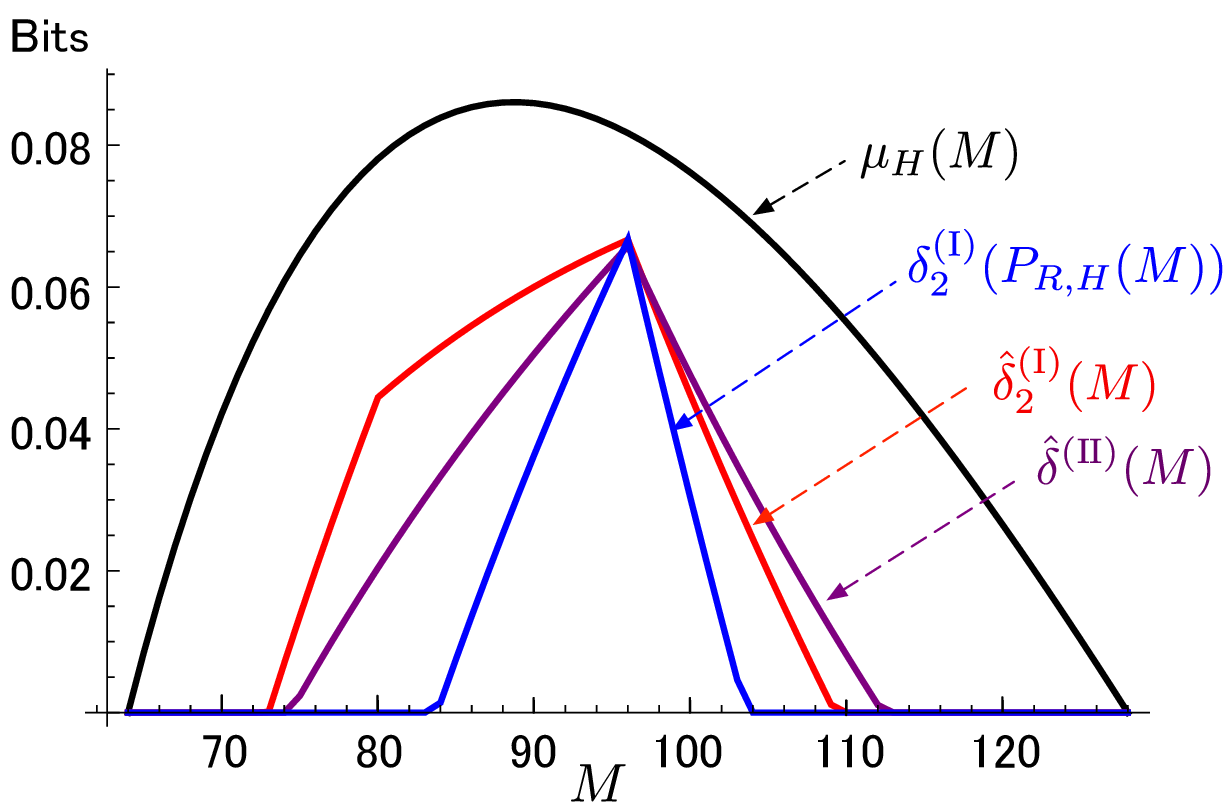}
   \caption{$\hat{\delta}^{\II}(M)$ compared with $\hat{\delta}^{\I}_2(M)$ and $\hat{\delta}^{\I}_2(P_{R,H}(M))$.}
      \label{fig-D3}
 \end{center}
\end{figure}

In Fig.~\ref{fig-D3}, $\hat{\delta}^{\II}(M)$ is plotted with $\mu_H(M)$, $\delta_2^\I(P_{R,H}(M))$,
and $\hat{\delta}_2^\I(M)$ for $64\leq M\leq 128$.
In this range of $M$,  when $M$ increases from 96 to 128, the optimal $M_R$ is fixed at 64 while the optimal $M_L$ increases from 32 to 64, which means that $P_R=M_R/M$ decreases from 2/3 to 1/2.
Since $\delta^{\II}(P_R)>\delta_2^\I(P_R)$ for $0.56984<P_R<0.66536$,
we have $\hat{\delta}^{\II}(M)>\hat{\delta}_2^\I(M)$ for $97\leq M\leq 112$, i.e., $0.6597>P_R=64/M>0.5714$.
On the other hand, for $M\leq96$, i.e, $P_R=64/M\geq 0.6666$, 
we have $\hat{\delta}^{\II}(M)<\hat{\delta}_2^\I(M)$.

\subsection{Type-I and Type-\rtwo~AEDSs for binary sources} \label{subsec-4B}
In this subsection, we treat a binary source with 
$\mathcal{S}=\{a,b\}$, $p(a)=r$, and $p(b)=1-r$.
Without loss of generality, we assume that $0.5\leq r<1$.

For binary sources, the Huffman code tree has $L_H=1$ and $P_R=r$.
Hence, from Corollary \ref{Corollary-1}, 
Type-I and Type-{\rtwo} AEDSs attain the following average code lengths.
\begin{align}
L^\I_N &=1-\delta^\I_N(r), \label{eq-4B-1}\\
L^{\II} &= 1- \delta^{\II}(r). \label{eq-4B-2}
\end{align}
The binary source has redundancy $\mu_S(r)=1-h(r)$, which is also the redundancy of the Huffman code,
where $h(r)$ is the binary entropy function.
On the other hand, from Corollary \ref{Corollary-2}, the redundancies of Type-I and Type-{\rtwo} AEDSs
are given by
\begin{align}
\mu_N^\I(r) &= 1-h(r) - \delta_N^\I(r), \label{eq-4B-3}\\
\mu^{\II}(r) &= 1-h(r) -\delta^{\II}(r), \label{eq-4B-4}
\end{align}
which are plotted in Fig.~\ref{fig3}.
We note from Fig.~\ref{fig3} that Type-I AEDS with an appropriate $N$ or Type-II AEDS can attain redundancy less than 0.0155 for any $r$.

\begin{remark}
In the case of binary sources, we note from Fig.~\ref{fig-I1} that 
the encoding of Type-I AEDS  has $E_{\alpha_i}(a)=\lambda$ for $1\leq i \leq N-1$, where $\lambda$ is a null codeword with length 0. Hence, Type-I AEDS  is closely related to the run length coding
because each codeword $E_{\alpha_i}(b)$ represents the run length $i-1$ of $a$. 
\end{remark}

\begin{figure}[t]
 \begin{center}
   \includegraphics[width=7cm]{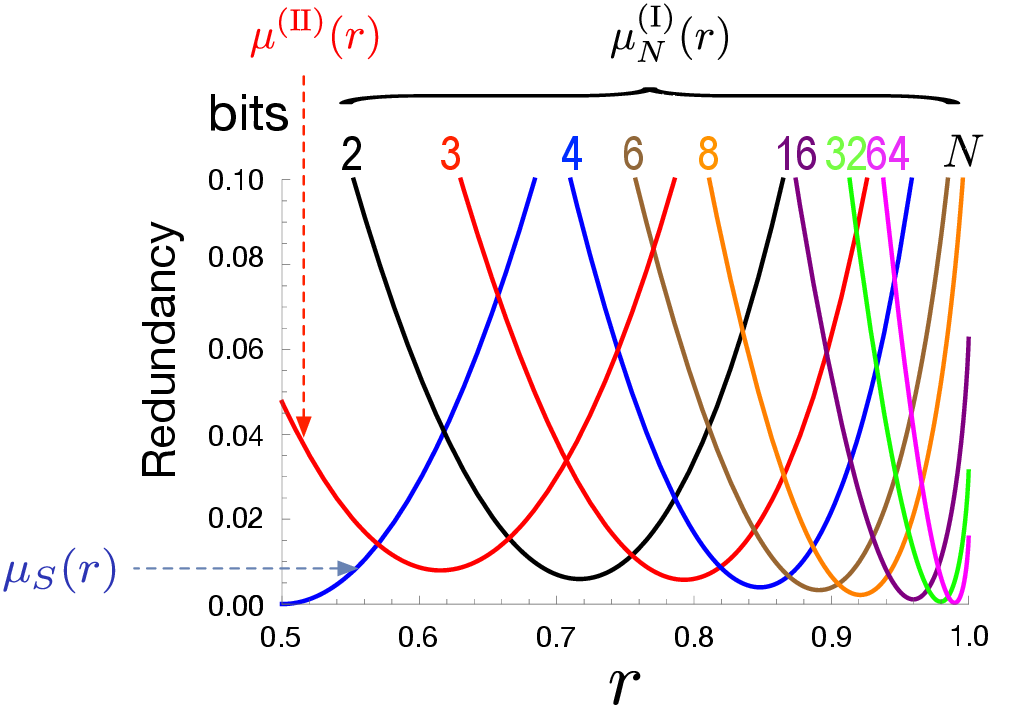}
    \caption{$\mu_S(r)$, $\mu_N^\I(r)$, and $\mu^{\II}(r)$ for binary sources which have $p(a)=r\geq 0.5$ for
    $\mathcal{S}=\{a,b\}$.}
      \label{fig3}
 \end{center}
\vspace*{-0.2cm}
\end{figure}

\section{State-divided AEDS}
As shown in Section \ref{Sec-3}, Type-I and Type-{\rtwo} AEDSs can  easily be constructed and
they can attain a shorter average code length than the Huffman code when 
$P_R>\omega^\I\approx 0.6180$ or $P_R>\omega^{\II}\approx0.56984$ is satisfied in the Huffman code tree.
However, we cannot determine how close the average code length of an AEDS can be made to the source entropy and how an AEDS is closely related to the tANS.
To clarify these problems, 
we consider  a restricted AEDS named a state-divided AEDS (sAEDS) 
 such that $\mathcal{X}$ can be divided based on $s\in\mathcal{S}$ in this section.

For each $s\in\mathcal{S}$, we define $\mathcal{X}_s$ by \eqref{eq2-3}, which
is the set of states reached by  $F_{\hat{x}}^{-}(s)$ from all $\hat{x}\in\mathcal{X}$ in encoding.
\begin{align}
\mathcal{X}_s&=\{F^{-}_{\hat{x}}(s) \;|\; \hat{x}\in\mathcal{X}\}\quad \text{for each $s\in\mathcal{S}$}. \label{eq2-3}
\end{align}
Then, the sAEDS is defined as an AEDS satisfying  \eqref{eq2-4}--\eqref{eq2-6},
where the sets $\mathcal{X}_s$ are mutually disjoint for all $s\in\mathcal{S}$.
\begin{align}
&\mathcal{X}_s\cap \mathcal{X}_{s'}=\emptyset\quad \text{for $s\neq s'$},\label{eq2-4}\\
&|\mathcal{X}_s|\geq 1 \hspace{1.05cm} \text{for each $s\in\mathcal{S}$},\label{eq2-5}\\
&\mathcal{X}=\bigcup_{s\in\mathcal{S}}\mathcal{X}_s. \label{eq2-6}
\end{align}
Since $\mathcal{X}_s$ depends on only $s$, we represent it  as 
$\mathcal{X}_s=\{\s_1, \s_2$, $\cdots, \s_{N_s}\}$ for $N_s=|\mathcal{X}_s|$ using sans-serif font instead of $\alpha_i$.
From \eqref{eq2-4}--\eqref{eq2-6}, we note that $N$ is given by $N=\sum_{s\in\mathcal{S}} N_s$.
An example of an sAEDS is shown in Fig.\!~\ref{fig1-2},
where $\mathcal{S}=\{a, b, c\}$, $N_a=2$, $N_b=2$, $N_c=1$, and $N=5$.

\begin{figure}[t]
 \begin{center}
   \includegraphics[width=5cm]{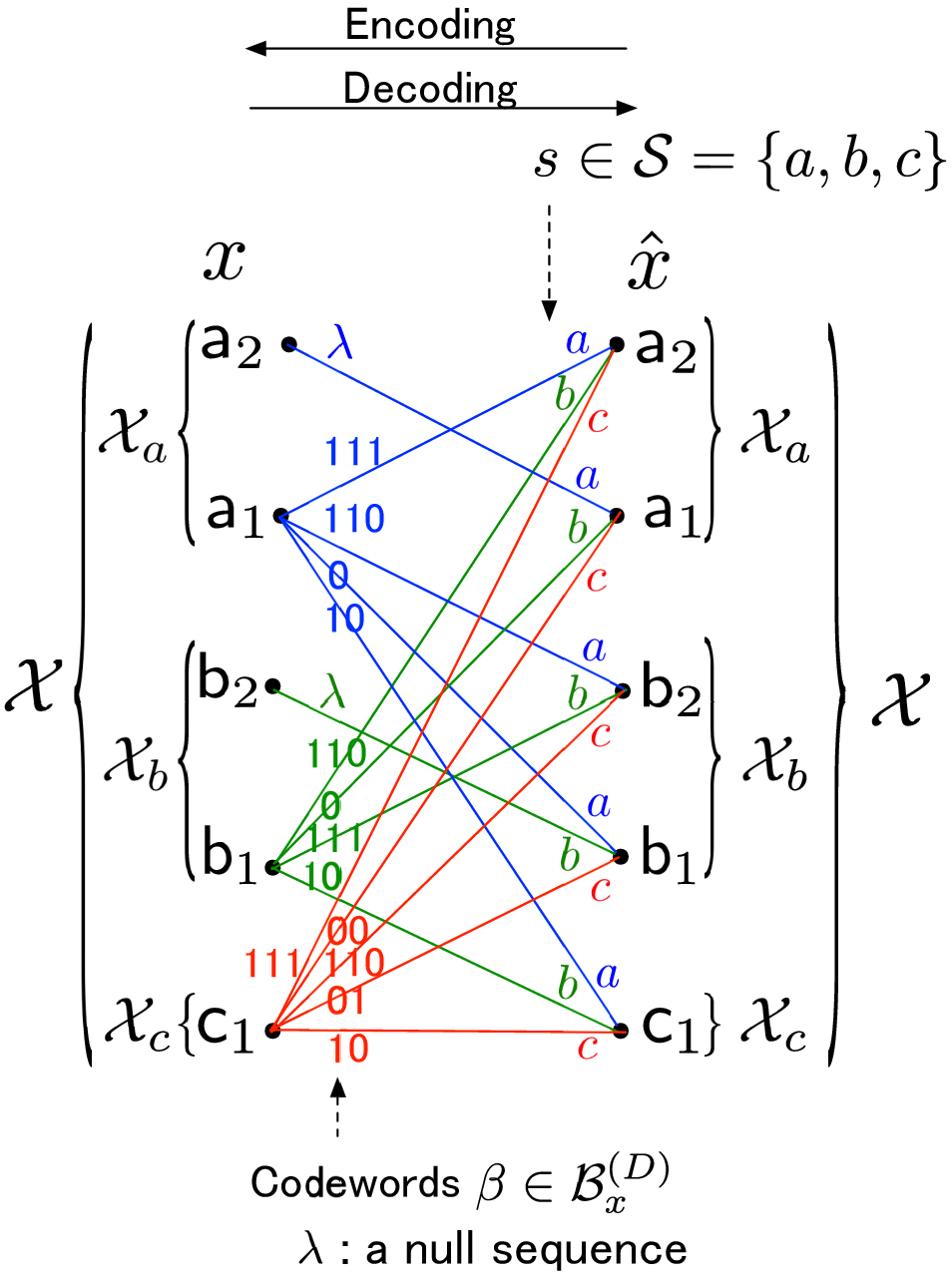}
   \caption{An example of an sAEDS for $\mathcal{S}=\{a, b, c\}$, $N_a=2$, $N_b=2$, and $N_c=1$.}
      \label{fig1-2}
 \end{center}
 \vspace*{-0.5cm}
\end{figure}

In an sAEDS, the set of encoding-decoding functions and state transition functions 
$(\{E_{\hat{x}}, F^-_{\hat{x}}|\hat{x}\in\mathcal{X}\}, \{D_x, F^+_x | x\in\mathcal{X}\})$
is  defined by \eqref{eqE-1}--\eqref{eqD-2} in the same way as a general AEDS. But, 
we note in any sAEDS that if $x\in\mathcal{X}_s$, then $D_{x}(\beta)=s$ for all codewords $\beta\in\mathcal{B}_x^{(D)}$.
Therefore, the decoding function $D_{x}$ can be represented as 
\begin{align}
D_x:\; \emptyset\to \mathcal{S} \hspace{0.5cm}\text{(Decoding of $s$)}\label{eqD-3}
\end{align}
for every $x\in\mathcal{X}_s$ instead of \eqref{eqD-1}.

For each $x\in\mathcal{X}$, we define $\mathcal{F}^{+}_x$ by 
\begin{align}
\mathcal{F}^{+}_x&=\{F^{+}_{x}(\beta) \;|\; \beta\in \mathcal{B}_x^{(D)}\},\label{eq2-7}
\end{align}
which is the set of states reached from $x$ by $F_{x}^{+}(\beta)$ for all $\beta \in\mathcal{B}_x^{(D)}$ in decoding.
For instance, the sAEDS in Fig.~\ref{fig1-2} has the
following $\mathcal{F}_x^{+}$.
\vspace{-0.2cm}
\begin{align*}
\mathcal{F}^{+}_{\asf_1}=\{\asf_2, \bsf_2, \bsf_1, \csf_1\},\quad  \mathcal{F}^{+}_{\asf_2}=\{\asf_1\},\quad 
\mathcal{F}^{+}_{\bsf_1}=\{\asf_2, \asf_1, \bsf_2, \csf_1\},\quad  \mathcal{F}^{+}_{\bsf_2}=\{\bsf_1\},\quad   
\mathcal{F}^{+}_{\csf}=\{\asf_2, \asf_1, \bsf_2, \bsf_1, \csf_1\}.
\end{align*}
Then, since the domain of encoding function $E_{\hat{x}}$ is $\mathcal{S}$ for each $\hat{x}\in\mathcal{X}$, the following relations hold.
\begin{align}
&\mathcal{F}^{+}_x\cap \mathcal{F}^{+}_{x'}=\emptyset
\hspace{0.5cm} \text{if $x\neq x'$, $x\in\mathcal{X}_s, x'\in\mathcal{X}_s$ for each $s\in\mathcal{S}$}, \label{eq2-8}\\
&|\mathcal{F}^{+}_x|\geq 1 \hspace{1.2cm} \text{for each $x\in\mathcal{X}_s$ and each $s\in\mathcal{S}$},\label{eq2-9}\\
&\mathcal{X}=\bigcup_{x\in\mathcal{X}_s}\mathcal{F}^{+}_x\hspace{0.5cm} 
\text{for each $s\in\mathcal{S}$},\label{eq2-10}
\end{align}
which mean that
\begin{align}
N=\sum_{x\in\mathcal{X}_s} |\mathcal{F}^{+}_x| \hspace{0.3cm} 
\text{for each $s\in\mathcal{S}$}.  \label{eq2-10-2}
\end{align}

\begin{remark} \label{rem-2}
Since the tANS satisfies \eqref{eq2-4}--\eqref{eq2-6} \cite{PDPCMM}\cite{PDPCMM2023}, the tANS is included in the class of sAEDS. 
In an sAEDS, any prefix-free code can be used as $\mathcal{B}_x^{(D)}$ for each $x\in\mathcal{X}$.
However, in the case of the tANS,  since the codeword length $k_t$ is given by \eqref{eq2A-1} and \eqref{eq2A-6}, the difference of $k_t$ must be within 1 bit in $\bigcup_{x\in\mathcal{X}_s}\mathcal{B}_x^{(D)}$
for each $s\in\mathcal{S}$.
Hence, the code class of the sAEDS is broader than that of the tANS.
\end{remark}

\subsection{Average code length of sAEDS}
In this subsection, we evaluate the average code length of an sAEDS for i.i.d.~sources 
based on \eqref{eq3-3}.
Let $\{Q(x) \,|\, x\in\mathcal{X}\}$ be the stationary probability distribution on $\mathcal{X}$ 
of an sAEDS for a given source probability distribution $\{p(s) \,|\, s\in\mathcal{S}\}$.
Then, for any $s\in\mathcal{S}$ and any $x\in\mathcal{X}_s$, 
the stationary probability $Q(x)$ satisfies the following relation:
\begin{align}
Q(x)&=p(s)\sum_{\hat{x}\in\mathcal{F}_x^{+}} Q(\hat{x}). \label{eq3-1}
\end{align}
Furthermore, from \eqref{eq2-8}, \eqref{eq2-10}, and \eqref{eq3-1}, we obtain
\begin{align}
\sum_{x\in\mathcal{X}_s} Q(x) &=p(s) \sum_{x\in\mathcal{X}_s} \sum_{\hat{x}\in\mathcal{F}_x^{+}} Q(\hat{x})\nonumber\\
&=p(s) \sum_{\hat{x}\in\mathcal{X}} Q(\hat{x})\nonumber\\
&=p(s). \label{eq3-2}
\end{align}

On the other hand, from \eqref{eq3-3}, \eqref{eq2-8} and \eqref{eq2-10}, 
the average code length $L$ of an sAEDS can be represented as follows.
\begin{align}
L=\sum_{s\in\mathcal{S}}p(s)\sum_{x\in\mathcal{X}_s}\sum_{\hat{x}\in\mathcal{F}_x^{+}}
 l(E_{\hat{x}}(s))Q(\hat{x}). \label{eq3-4}
\end{align}

In a general AEDS, the average code length $L$ is given by \eqref{eq3-3}, and hence
$l(E_{\hat{x}}(s))$ must be designed for the product of $p(s)$ and $Q(\hat{x})$
to realize a small $L$. On the other hand, in the case of sAEDS, we note 
from \eqref{eq3-4} that for each $s\in\mathcal{S}$ and each $x\in\mathcal{X}_s$, $l(E_{\hat{x}}(s))$ 
can be designed based on only $\{Q(\hat{x})|\hat{x}\in\mathcal{F}^+_x\}$. Hence, the theoretical analysis and design of
an sAEDS becomes easier than those of a general AEDS.
It is also worth noting that the stationary probability distribution $\{Q(\hat{x})|\hat{x}\in\mathcal{X}\}$
depends on only $\{\mathcal{F}^+_x |x\in\mathcal{X}_s, s\in\mathcal{S}\}$ and $\{p(s)|s\in\mathcal{S}\}$, 
regardless of what prefix-free codes $\mathcal{B}^{(D)}_x$ 
are used for $\mathcal{F}^+_x, x\in\mathcal{X}$.

\subsection{sAEDS with the same code length as Huffman code}
Assume that for a source probability distribution $\{p(s) \,|\, s\in\mathcal{S}\}$, 
the Huffman code has code length $l_H(s)$ for $s\in\mathcal{S}$, 
the average code length $L_H=\sum_{s\in\mathcal{S}}p(s)l_H(s)$, and the maximum code length
$l_{H}^{\max}=\max_{s\in\mathcal{S}}l_H(s)$.
Now, we consider an sAEDS with $N=2^{l_{H}^{\max}}$ and $N_s=2^{l_{H}^{\max}-l_H(s)}$ for  $s\in\mathcal{S}$.
Then,  for each $s\in\mathcal{S}$, we can divide $\mathcal{X}$ into $\{\mathcal{F}_x^{+}\}$
so that $|\mathcal{F}_x^{+}|=N/N_s=2^{l_H(s)}$ for every $x\in \mathcal{X}_s$.
When we use a fixed length code for $\mathcal{B}_x^{(D)}$ for each $x\in\mathcal{X}_s$, the code length is given by  $l(E_{\hat{x}}(s))=l_H(s)$ for  each $\hat{x}\in\mathcal{X}$. 
Therefore, from  \eqref{eq2-8}, \eqref{eq2-10}, and  \eqref{eq3-4}, the average code length of this sAEDS is given by
\begin{align}
L&=\sum_{s\in\mathcal{S}}p(s)\sum_{x\in\mathcal{X}_s}\sum_{\hat{x}\in\mathcal{F}_x^{+}}
 l_H(s) Q(\hat{x})\nonumber\\
&= \sum_{s\in\mathcal{S}} p(s) l_H(s) \sum_{\hat{x}\in\mathcal{X}}Q(\hat{x})\nonumber\\
&= \sum_{s\in\mathcal{S}} p(s) l_H(s) \nonumber\\
&=L_H. \label{eq4-1}
\end{align}

The above sAEDS is usually not optimal because only fixed length codes are used for all $\mathcal{B}_x^{(D)}$.
Hence the following theorem holds.
\begin{theorem}\label{th-2}
If the Huffman code has the maximum code length ${l_H^{\max}}$ for a given source,
the optimal sAEDS with $2^{l_H^{\max}}$ states can attain 
an average code length shorter than (or at worst equal to) the Huffman code.
\end{theorem}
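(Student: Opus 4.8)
The plan is to prove the bound by exhibiting a single admissible sAEDS that already attains the Huffman length and then appealing to optimality. Since the optimal sAEDS with $2^{l_H^{\max}}$ states is by definition the one minimizing $L$ over all admissible configurations (partitions $\{\mathcal{X}_s\}$, transition sets $\{\mathcal{F}^+_x\}$, and prefix-free codes $\{\mathcal{B}^{(D)}_x\}$ on $2^{l_H^{\max}}$ states), producing one admissible configuration with $L=L_H$ immediately forces the optimal value to satisfy $L\le L_H$, which is exactly the assertion.

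First I would confirm that the fixed-length sAEDS described immediately before the statement is indeed admissible. With $N=2^{l_H^{\max}}$ and $N_s=2^{l_H^{\max}-l_H(s)}$, the partition requirement $\sum_{s\in\mathcal{S}}N_s=N$ of \eqref{eq2-6} holds precisely because the Huffman tree is full and hence obeys the Kraft equality $\sum_{s\in\mathcal{S}}2^{-l_H(s)}=1$, giving $\sum_s N_s=2^{l_H^{\max}}\sum_s 2^{-l_H(s)}=2^{l_H^{\max}}=N$. I would then check that the transition sets can be chosen with $|\mathcal{F}^+_x|=N/N_s=2^{l_H(s)}$ for every $x\in\mathcal{X}_s$ while satisfying \eqref{eq2-8}--\eqref{eq2-10}: for each fixed $s$ one only needs a map $\mathcal{X}\to\mathcal{X}_s$ (namely $F^-_{\cdot}(s)$) whose fibers all have size $2^{l_H(s)}$, and such a map exists because $N_s\cdot 2^{l_H(s)}=2^{l_H^{\max}}=N=|\mathcal{X}|$, so the prescribed block sizes sum to $|\mathcal{X}|$.

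With admissibility in hand, I would invoke the computation \eqref{eq4-1}, which shows that assigning a fixed-length code of length $l_H(s)$ to every $\mathcal{B}^{(D)}_x$ with $x\in\mathcal{X}_s$ yields $l(E_{\hat{x}}(s))=l_H(s)$ for all $\hat{x}\in\mathcal{X}$, and therefore $L=L_H$; combined with optimality this gives $L\le L_H$. The argument carries no real obstacle: the only substantive check is the admissibility of the explicit construction (the Kraft equality together with the existence of the equal-fiber partition), after which the theorem is the elementary observation that a minimum cannot exceed the value attained at a feasible point. The remark following \eqref{eq4-1} already indicates why the inequality is typically strict --- replacing the fixed-length codes by Huffman codes tuned to the induced distributions $\tilde{p}(\beta\mid x)$ of \eqref{eq3-3-2} shortens $L$ --- but the stated theorem requires only $L\le L_H$, so no strictness argument is needed.
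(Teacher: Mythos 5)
Your proposal is correct and follows essentially the same route as the paper: the paper's argument is precisely the construction preceding the theorem (choose $N_s=2^{l_H^{\max}-l_H(s)}$, take $|\mathcal{F}^+_x|=2^{l_H(s)}$ with fixed-length codes, compute $L=L_H$ via \eqref{eq4-1}, and conclude by optimality). Your added verification of admissibility through the Kraft equality of the full Huffman tree is a detail the paper leaves implicit, but it is the right justification and introduces no divergence.
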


The above sAEDS is not efficient because it requires $N$ times the space complexity of
the Huffman code. However, Theorem~\ref{th-2} is useful for ensuring that the 
optimal sAEDS is not worse than the Huffman code in the average code length
when we can use $N$ states satisfying $N\geq 2^{l_H^{\max}}$.

\subsection{Upper bounds on the average code length of sAEDS}
Assume that an sAEDS has $\{N_s | s\in\mathcal{S}\}$ and $N=\sum_{s\in\mathcal{S}}N_s$
for a source probability distribution $p=\{p(s)| s\in\mathcal{S}\}$ where $2\leq |\mathcal{S}|<\infty$. 
Then, we define a probability distribution $q=\{q(s) | s\in\mathcal{S}\}$ by $q(s)=N_s/N$, and  represent the source entropy and the relative entropy by $H(p)=-\sum_{s\in\mathcal{S}}p(s)\lg p(s)$ and $D(p\|q)=\sum_{s\in\mathcal{S}}p(s)\lg (p(s)/q(s))$, respectively.

We evaluate the average code length $L$ of an sAEDS 
in several cases such that each $\mathcal{B}^{(D)}_x$ is constructed by a phased-in code.

\vspace{0.1cm}
\subsubsection{Case 1 (All $N/N_s$ are integers)}

 Assume that $M_s=N/N_s$ is an integer for each $s\in\mathcal{S}$.
 Then,  from $|\mathcal{X}_s|=N_s$ and \eqref{eq2-10-2}, 
 we can choose $\mathcal{F}^{+}_x$ so that $|\mathcal{F}^{+}_x|=M_s$ is satisfied for all $x\in\mathcal{X}_s$,
 and hence we can use the phased-in code with $M_s$ codewords for each $\mathcal{F}_x^+$, $x\in\mathcal{X}_s$.  See Fig.~\ref{fig-sAEDS-D1}.  
 Then, the following theorem holds.
 \begin{theorem}\label{th-3}
In the case that $N/N_s$ is an integer for each $s\in \mathcal{S}$, the average code length $L$ of the above sAEDS is bounded by
\begin{align}
L\leq H(p) +D(p\|q) +\sigma, \label{eq5-th1}
\end{align}
where  $\sigma=\lg \lg e + 1-\lg e\approx 0.08607$.
\end{theorem}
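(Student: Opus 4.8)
The plan is to bound the average code length using the phased-in code structure established in Case 1, and then invoke the known redundancy bound for phased-in codes (which the paper says is evaluated in Appendix~\ref{App-I}). First I would start from the representation~\eqref{eq3-4} of the average code length. Since $N/N_s = M_s$ is an integer for each $s$, we can choose the decomposition so that $|\mathcal{F}^+_x| = M_s$ for every $x \in \mathcal{X}_s$, and use a phased-in code with $M_s$ codewords for each such $\mathcal{B}^{(D)}_x$. The key point is that a phased-in code on $M_s$ symbols has codeword lengths in $\{\lfloor \lg M_s \rfloor, \lceil \lg M_s \rceil\}$, so its average length against the \emph{conditional} distribution $\{\tilde{p}(\beta|x)\}$ is close to $\lg M_s$. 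The natural strategy is therefore to pass to the decoding representation~\eqref{eq3-3-1} and analyze each state $x$ separately.

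The main computation would be to bound, for each $x \in \mathcal{X}_s$, the conditional expected codeword length $\sum_{\beta} \tilde{p}(\beta|x) l(\beta)$ of the phased-in code by something like $\lg M_s$ plus a controlled overhead. Here I expect to use the phased-in redundancy bound from Appendix~\ref{App-I}: the phased-in code over an alphabet of size $M$ has average length at most $\lg M + \sigma$ against \emph{any} distribution on its codewords, where $\sigma = \lg\lg e + 1 - \lg e \approx 0.08607$. This constant is exactly the worst-case redundancy of a phased-in (CBT) code relative to the entropy-matched target $\lg M$; it arises from the standard bound on how far $\lceil \lg M \rceil$-type integer lengths can exceed $\lg M$ in expectation. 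Assuming that lemma, each state contributes at most $\lg M_s + \sigma = \lg(N/N_s) + \sigma$ in conditional expectation.

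With the per-state bound in hand, I would assemble the global bound by weighting with the stationary probabilities. Using $\sum_{x \in \mathcal{X}_s} Q(x) = p(s)$ from~\eqref{eq3-2}, summing the per-state contribution $\lg(N/N_s) + \sigma$ over $x \in \mathcal{X}_s$ and then over $s$ gives
\begin{align}
L \leq \sum_{s \in \mathcal{S}} p(s)\left(\lg\frac{N}{N_s} + \sigma\right)
= \sum_{s \in \mathcal{S}} p(s)\lg\frac{1}{q(s)} + \sigma, \label{eq-plan-1}
\end{align}
since $q(s) = N_s/N$. The final step is the routine identity
\begin{align}
\sum_{s \in \mathcal{S}} p(s)\lg\frac{1}{q(s)}
= \sum_{s} p(s)\lg\frac{p(s)}{q(s)} + \sum_s p(s)\lg\frac{1}{p(s)}
= H(p) + D(p\|q), \label{eq-plan-2}
\end{align}
which yields~\eqref{eq5-th1}.

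The hard part will be establishing the per-state phased-in redundancy bound with the precise constant $\sigma$ rather than the crude $\lg M_s + 1$ that follows trivially from $l(\beta) \leq \lceil \lg M_s \rceil$. Getting $\sigma \approx 0.08607$ instead of $1$ requires the sharp analysis of the phased-in code's length profile (a mix of $\lfloor \lg M_s \rfloor$ and $\lceil \lg M_s \rceil$ lengths), and maximizing the expected excess over $\lg M_s$ across all admissible conditional distributions $\{\tilde{p}(\beta|x)\}$. Since the theorem statement explicitly defers the phased-in evaluation to Appendix~\ref{App-I}, I would treat that sharp bound as a cited lemma and let the rest of the argument reduce cleanly to the entropy/relative-entropy decomposition above. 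One subtlety to verify is that the bound holds uniformly over the conditional distributions induced by~\eqref{eq3-3-2}, so that summing against $Q$ is legitimate; but since the phased-in bound is worst-case over all input distributions, this uniformity is automatic.
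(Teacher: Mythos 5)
Your proposal is correct and follows essentially the same route as the paper's proof in Appendix~\ref{App-C}: per-state application of the phased-in bound $L_{\mathrm{pi}}(M_s|\cdot)\leq \lg M_s+\sigma$ from Appendix~\ref{App-I}, followed by averaging over states and the decomposition $\sum_s p(s)\lg(1/q(s))=H(p)+D(p\|q)$. The only cosmetic difference is that you weight states by $Q(x)$ via \eqref{eq3-2} and \eqref{eq3-3-1}, whereas the paper regroups \eqref{eq3-4} using $\tilde{Q}_s(x)$ and the normalized distributions $\tilde{Q}_{s,x}$ on $\mathcal{F}_x^+$; since $Q(x)=p(s)\tilde{Q}_s(x)$ these are the same computation.
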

Theorem \ref{th-3} is proved in Appendix \ref{App-C}.
 
 \begin{figure}[t]
 \begin{center}
   \includegraphics[width=5cm]{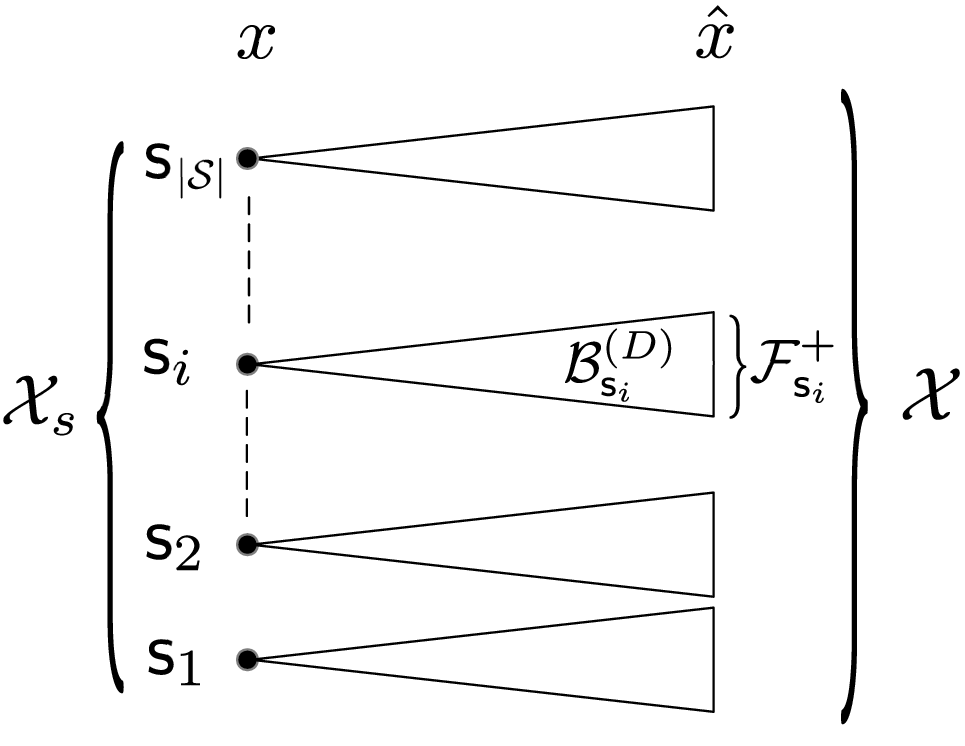}
    \caption{An sAEDS with $N=|\mathcal{X}|$, $N_s=|\mathcal{X}_s|$ and $M_s=N/N_s=|\mathcal{F}_x^+|=|\mathcal{B}^{(D)}_x|$ for $x\in\mathcal{X}_s$, where $\mathcal{B}^{(D)}_x$ is a phased-in code.}
      \label{fig-sAEDS-D1}
 \end{center}
\vspace*{-0.2cm}
\end{figure}

\subsubsection{Case 2: ($N/N_s$ may not be integers)}
Next we consider a general case such that $N/N_s$ may not be an integer.
Letting $M_s=\lfloor N/N_s\rfloor$ and $r_s=N\bmod N_s$ for each $s\in\mathcal{S}$,
we can choose $\{\mathcal{F}_x^{+} \,|\, x\in\mathcal{X}_s\}$ so that it satisfies $|\mathcal{F}_x^{+}|=M_s$ for $N_s-r_s$ states of $x$ and $|\mathcal{F}_x^{+}|=M_s+1$ for $r_s$ states of $x$ in $\mathcal{X}_s$.
See Fig.~\ref{fig-sAEDS-D2}.
Then when we use phased-in codes with $M_s$ and $M_s+1$ codewords for the former
and latter $\mathcal{F}_x^{+}$, respectively, this sAEDS satisfies the following theorem.
\begin{theorem}\label{th-4}
For any positive integers $\{N_s\}$ and $N=\sum_{s\in\mathcal{S}}N_s$,
the average code length $L$ of the above sAEDS is bounded by
\begin{align}
L \leq H(p) + D(p\|q) +\sigma
+ \sum_{s\in\mathcal{S}}p(s)\lg \frac{M_s+\tilde{Q}_{M_s+1}}{N/N_s}, \label{eq5-th2}
\end{align}
where $M_s=\lfloor N/N_s \rfloor$ and $\tilde{Q}_{M_s+1}$ is defined for $\tilde{Q}_s(x)=\sum_{\hat{x}\in\mathcal{F}^+_s} Q(\hat{x})$ by
\begin{align}
\tilde{Q}_{M_s+1} =\sum_{x:\; x\in\mathcal{X}_s,\; |\mathcal{F}_x^{+}|=M_s+1}\tilde{Q}_s(x).\label{eq5-5}
\end{align}
\end{theorem}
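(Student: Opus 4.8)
\textbf{Proof plan for Theorem~\ref{th-4}.}
The plan is to reduce the general (non-integer) case to the integer case of Theorem~\ref{th-3} by treating each symbol $s$ as if its states used a uniform branching factor of $N/N_s$, and then accounting for the discrepancy introduced because the actual branching factors are $M_s=\lfloor N/N_s\rfloor$ and $M_s+1$ rather than the (non-integer) ideal value $N/N_s$. First I would write the contribution to $L$ from each symbol $s$ as a sum over the states $x\in\mathcal{X}_s$, splitting $\mathcal{X}_s$ into the $N_s-r_s$ states with $|\mathcal{F}^+_x|=M_s$ and the $r_s$ states with $|\mathcal{F}^+_x|=M_s+1$. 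For each such state the codeword set $\mathcal{B}^{(D)}_x$ is a phased-in code with $M_s$ or $M_s+1$ codewords, so I would invoke the bound on the average length of a phased-in code established in Appendix~\ref{App-I} to control the local contribution $\sum_{\hat{x}\in\mathcal{F}^+_x} l(E_{\hat{x}}(s))Q(\hat{x})$ in terms of $\tilde{Q}_s(x)$ and $\lg|\mathcal{F}^+_x|$.

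The key algebraic step is to convert the per-state $\lg|\mathcal{F}^+_x|$ terms into the single aggregate term $\lg\bigl(M_s+\tilde{Q}_{M_s+1}\bigr)$ appearing in \eqref{eq5-th2}. I would use \eqref{eq3-1}, namely $Q(x)=p(s)\tilde{Q}_s(x)$ for $x\in\mathcal{X}_s$, together with \eqref{eq3-2} to normalize the weights $\tilde{Q}_s(x)$ over $x\in\mathcal{X}_s$. The quantity $\tilde{Q}_{M_s+1}$ defined in \eqref{eq5-5} is precisely the total stationary weight of the states whose branching factor is $M_s+1$, so a weighted count of branching factors over $\mathcal{X}_s$ yields an effective factor of $M_s(1-\tilde Q_{M_s+1})+(M_s+1)\tilde Q_{M_s+1}=M_s+\tilde{Q}_{M_s+1}$. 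The idea is that after applying concavity of $\lg$ (Jensen's inequality) to the weighted mixture of the two logarithmic terms $\lg M_s$ and $\lg(M_s+1)$, the effective branching factor collapses to $M_s+\tilde{Q}_{M_s+1}$, which produces the extra additive correction $\sum_{s}p(s)\lg\frac{M_s+\tilde{Q}_{M_s+1}}{N/N_s}$ relative to the ideal value $\lg(N/N_s)$.

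Concretely, I would start from the representation \eqref{eq3-4} of $L$, bound each inner sum by the phased-in code estimate, and collect terms to obtain
\begin{align*}
L \leq \sum_{s\in\mathcal{S}}p(s)\Bigl[\lg\bigl(M_s+\tilde{Q}_{M_s+1}\bigr)+\sigma\Bigr]
  = \sum_{s\in\mathcal{S}}p(s)\lg(N/N_s)+\sigma+\sum_{s\in\mathcal{S}}p(s)\lg\frac{M_s+\tilde{Q}_{M_s+1}}{N/N_s}.
\end{align*}
Finally I would rewrite $\sum_{s}p(s)\lg(N/N_s)=\sum_{s}p(s)\lg\frac{1}{q(s)}=H(p)+D(p\|q)$ using $q(s)=N_s/N$, which gives exactly \eqref{eq5-th2}. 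The main obstacle I anticipate is the careful handling of the mixed branching factors: unlike Case~1 where every $|\mathcal{F}^+_x|$ equals the same integer $M_s$, here the logarithmic average over the two group sizes must be bounded by a single logarithm, and the precise justification that the weighted combination of $\lg M_s$ and $\lg(M_s+1)$ is at most $\lg(M_s+\tilde{Q}_{M_s+1})$ via Jensen's inequality (together with correctly identifying $\tilde{Q}_{M_s+1}$ as the normalized weight of the larger group) is where the bookkeeping is most delicate and most likely to introduce errors.
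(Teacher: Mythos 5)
Your proposal matches the paper's proof essentially step for step: bound each state's contribution via the phased-in code estimate $L_{\mathrm{pi}}(M|Q)\leq \lg M+\sigma$ from Appendix~\ref{App-I}, split $\mathcal{X}_s$ by branching factor $M_s$ versus $M_s+1$ with weights $1-\tilde{Q}_{M_s+1}$ and $\tilde{Q}_{M_s+1}$, apply Jensen's inequality to collapse the mixture to $\lg(M_s+\tilde{Q}_{M_s+1})$, and rewrite $\sum_s p(s)\lg(N/N_s)$ as $H(p)+D(p\|q)$. The approach and all key steps are correct and identical to the paper's Appendix~\ref{App-D}.
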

Theorem \ref{th-4} is proved in Appendix \ref{App-D}.

 \begin{figure}[t]
 \begin{center}
   \includegraphics[width=5cm]{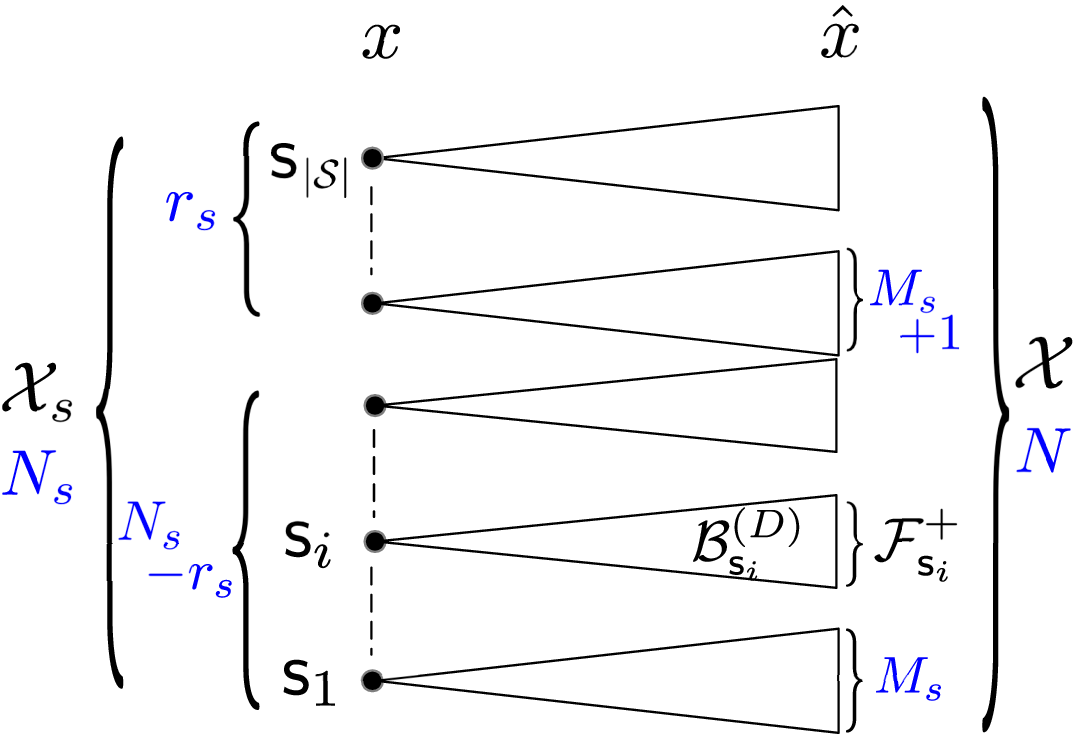}
    \caption{An sAEDS with $N=|\mathcal{X}|$, $N_s=|\mathcal{X}_s|$,  $M_s=\lfloor N/N_s\rfloor$, 
    $r_s= N \bmod N_s$, $|\mathcal{F}_x|=|\mathcal{B}^{(D)}_x|=M_s$ for $N_s-r_s$ states of $x$ and  $|\mathcal{F}_x|=|\mathcal{B}^{(D)}_x|=M_s+1$ for $r_s$ states of $x$ in $\mathcal{X}_s$,
   where $\mathcal{B}^{(D)}_x$ is a phased-in code.}
      \label{fig-sAEDS-D2}
 \end{center}
\vspace*{-0.2cm}
\end{figure}

Compared with \eqref{eq5-th1}, the second term $D(p\|q)$ in \eqref{eq5-th2}  can be made small because $M_s=N/N_s$ does not need to be an integer although \eqref{eq5-th2} has the additional fourth term. 
We note that if $\tilde{Q}_{M_s+1}\leq r_s/N_s$ holds, we have
\begin{align}
\lg \frac{M_s+\tilde{Q}_{M_s+1}}{N/N_s}
\leq \lg\frac{M_s+(r_s/N_s)}{N/N_s}=\lg\frac{M_sN_s+r_s}{N}=\lg\frac{N}{N}= 0. \label{eq5-7}
\end{align}
Hence, if $\tilde{Q}_{M_s+1}\leq r_s/N_s$ holds for every $s\in\mathcal{S}$, 
we have $L\leq H(S) + D(p\|q) +\sigma$. If there is a certain degree of bias in $\{Q(\hat{x}) \,|\, \hat{x}\in\mathcal{X}\}$ and elements $\hat{x}$ with smaller probabilities $Q(\hat{x})$ are included in $\mathcal{F}^+_x$ with $|\mathcal{F}^+_x|=M_s+1$, this condition can be satisfied.

\subsubsection{Case3 ($N$ is a power of 2)}
Lastly, we consider the case that a given $\{N_s \,|\, s\in\mathcal{S}\}$ satisfies $N=\sum_{s\in\mathcal{S}}N_s=2^k$ for a positive integer $k$.
In this case,  $\mathcal{X}$ can be divided into $\{\mathcal{F}^+_x \,|\, x\in\mathcal{X}_s\}$ so that   a fixed length code can be used for each $\mathcal{F}^+_x$ as follows.
For $k_s=\lceil \lg N_s\rceil$, let $N_{s,1}$ (resp.~$N_{s,2}$) be the number of $\mathcal{F}^+_x$ that uses a fixed length code with $k-k_s+1$ bits (resp.~$k-k_s$ bits).
Then, if we set $N_{s,1}=2^{k_s}-N_s$ and $N_{s,2}=2N_s-2^{k_s}$, we can 
satisfy both \eqref{eq5-8} and \eqref{eq5-9}. 
\begin{align}
N_{s,1}+N_{s,2}&=2^{k_s}-N_s+2N_s-2^{k_s}\nonumber\\
&=N_s. \label{eq5-8}\\
N_{s,1}2^{k-k_s+1}+N_{s,2}2^{k-k_s}&=(2^{k_s}-N_s)2^{k-k_s+1}+(2N_s-2^{k_s})2^{k-k_s}\nonumber\\
&= 2^{k+1}-N_s2^{k-k_s+1}+2N_s2^{k-k_s}-2^k\nonumber\\
&=2^k\nonumber\\
&=N. \label{eq5-9}
\end{align}
See Fig.~\ref{fig-sAEDS-D3}. 
Since \eqref{eq5-8} and \eqref{eq5-9} mean that \eqref{eq2-10-2} is satisfied, 
we can always use a fixed length code for each $\mathcal{F}^+_x$, $x\in\mathcal{X}_s$ for any given $N_s$. 
Then, the following theorem holds.
 \begin{figure}[t]
 \begin{center}
   \includegraphics[width=5cm]{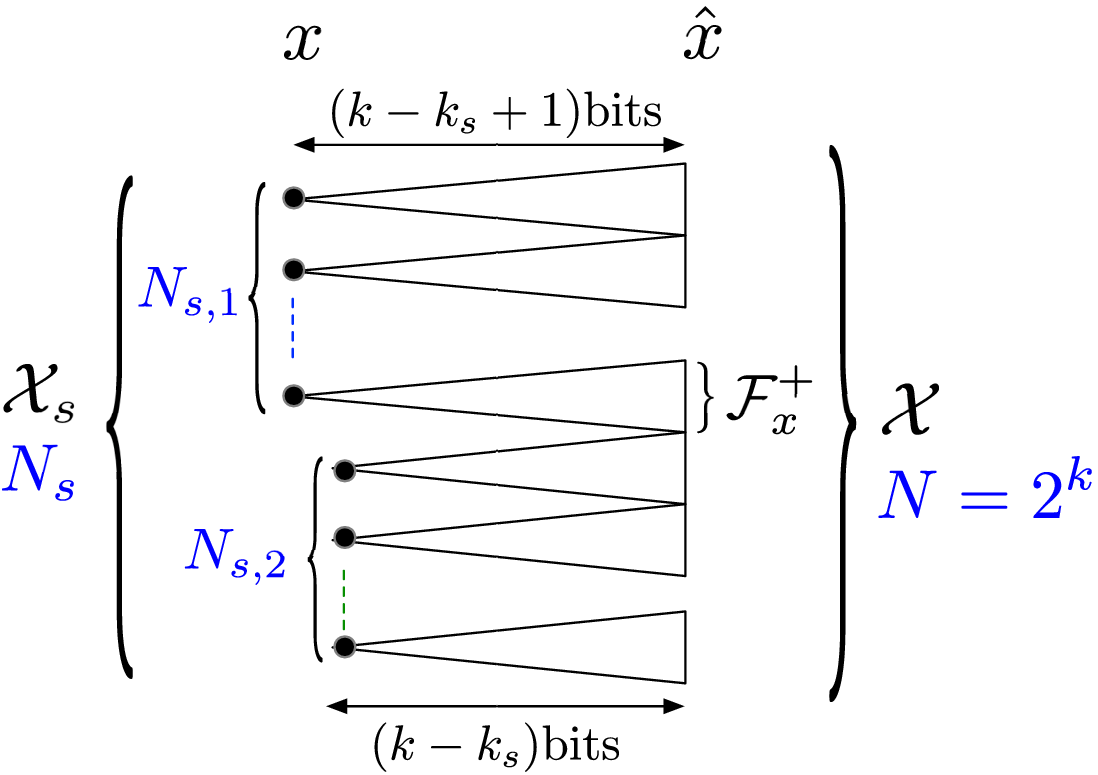}
    \caption{An sAEDS with $N=|\mathcal{X}|=2^k$, $N_s=|\mathcal{X}_s|$ and $M_s=N/N_s=|\mathcal{F}_x^+|=|\mathcal{B}^{(D)}_x|$ for $x\in\mathcal{X}_s$, where $\mathcal{B}^{(D)}_x$ is a phased-in code.}
      \label{fig-sAEDS-D3}
 \end{center}
\vspace*{-0.2cm}
\end{figure}

\begin{theorem}\label{th-5}
In the case that $N$ is a power of 2, the average code length $L$ of the above sAEDS is bounded by
\begin{align}
L&\leq H(p)+D(p\|q)+\sum_{s\in\mathcal{S}}p(s)(\nu_{N_s, \tilde{Q}_s}-\mu_{\mathrm{pi}}(N_s))\nonumber\\
&\leq H(p)+D(p\|q)+\sum_{s\in\mathcal{S}}p(s)\nu_{N_s,\tilde{Q}_s}, \label{eq5-th3}
\end{align}
and $\mu_{\mathrm{pi}}(N_s)$ and $\nu_{N_s, \tilde{Q}_{s}}$ are given by
\begin{align}
0\leq \mu_{\mathrm{pi}}(N_s)& = k_s+1-(2^{k_s}/N_s)-\lg N_s\leq \sigma,\label{eq5-th3-1}\\
0\leq \nu_{N_s, \tilde{Q}_{s}}&=\frac{2N_s-2^{k_s}}{N_s}-\sum_{x\in\check{\mathcal{X}_s}}\tilde{Q}_s(x)<\frac{2N_s-2^{k_s}}{N_s}\leq 1,\label{eq5-th3-2}
\end{align}
where $k_s=\lceil \lg N_s\rceil$, $\tilde{Q}_s(x)=\sum_{\hat{x}\in\mathcal{F}_x^+}Q(\hat{x})$ for $x\in \mathcal{X}_s$, and $\check{\mathcal{X}}_s$ is the subset of $\mathcal{X}_s$ with $2N_s-2^{k_s}$ elements of smaller probabilities $\tilde{Q}_s(x)$.
\end{theorem}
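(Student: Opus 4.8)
The plan is to bound the average code length $L$ of the sAEDS in Case 3 by reusing the general formula \eqref{eq3-4}, decomposing it into an idealized arithmetic-coding-like cost plus two correction terms. First I would start from \eqref{eq3-4}, and for each $s\in\mathcal{S}$ write the inner sum $\sum_{x\in\mathcal{X}_s}\sum_{\hat{x}\in\mathcal{F}_x^+} l(E_{\hat{x}}(s))Q(\hat{x})$ using the fact that each $\mathcal{F}_x^+$ carries a fixed-length code. Since a state $x\in\mathcal{X}_s$ with $|\mathcal{F}_x^+|=2^{k-k_s+1}$ emits codewords of length $k-k_s+1$ and one with $|\mathcal{F}_x^+|=2^{k-k_s}$ emits length $k-k_s$, the contribution of $s$ to the inner sum is
\begin{align}
(k-k_s)\sum_{x\in\mathcal{X}_s}\tilde{Q}_s(x) + \sum_{x:\,|\mathcal{F}_x^+|=2^{k-k_s+1}}\tilde{Q}_s(x), \label{eq-plan-1}
\end{align}
where $\tilde{Q}_s(x)=\sum_{\hat{x}\in\mathcal{F}_x^+}Q(\hat{x})$ and $\sum_{x\in\mathcal{X}_s}\tilde{Q}_s(x)=\sum_{\hat{x}\in\mathcal{X}}Q(\hat{x})=1$ by \eqref{eq2-8} and \eqref{eq2-10}. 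The second sum runs over the $N_{s,1}=2^{k_s}-N_s$ longer-codeword states; complementarily the shorter-codeword states form $\check{\mathcal{X}}_s$ with the $N_{s,2}=2N_s-2^{k_s}$ smallest probabilities, so the second term equals $1-\sum_{x\in\check{\mathcal{X}}_s}\tilde{Q}_s(x)$.

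Next I would assemble these per-symbol contributions back into $L=\sum_s p(s)\bigl[(k-k_s)+1-\sum_{x\in\check{\mathcal{X}}_s}\tilde{Q}_s(x)\bigr]$ and rewrite $k-k_s+1$ in terms of the target quantities. The key algebraic identity is to recognize that the phased-in redundancy $\mu_{\mathrm{pi}}(N_s)=k_s+1-(2^{k_s}/N_s)-\lg N_s$ from \eqref{eq5-th3-1} and the definition $\nu_{N_s,\tilde{Q}_s}=\frac{2N_s-2^{k_s}}{N_s}-\sum_{x\in\check{\mathcal{X}}_s}\tilde{Q}_s(x)$ from \eqref{eq5-th3-2} let me convert the raw length $k-k_s+1-\sum_{x\in\check{\mathcal{X}}_s}\tilde{Q}_s(x)$ into $\lg(N/N_s)+\mu_{\mathrm{pi}}(N_s)\cdots$; after substituting $k=\lg N$ and collecting, the $\sum_s p(s)\lg(N/N_s)$ term splits as $H(p)+D(p\|q)$ using $q(s)=N_s/N$, exactly as in Theorem~\ref{th-3}. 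The remaining pieces combine into $\sum_s p(s)(\nu_{N_s,\tilde{Q}_s}-\mu_{\mathrm{pi}}(N_s))$, which gives the first inequality of \eqref{eq5-th3}.

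For the subsidiary bounds I would verify \eqref{eq5-th3-1} and \eqref{eq5-th3-2} separately. The bound $0\le\mu_{\mathrm{pi}}(N_s)\le\sigma$ should follow directly from the phased-in code analysis deferred to Appendix~\ref{App-I} (the average-length evaluation of the phased-in code), since $\mu_{\mathrm{pi}}(N_s)$ is precisely the phased-in redundancy for an alphabet of size $N_s$; I would just cite that computation and note $\sigma=\lg\lg e+1-\lg e$ is its maximum. For \eqref{eq5-th3-2}, nonnegativity of $\nu_{N_s,\tilde{Q}_s}$ holds because $\check{\mathcal{X}}_s$ contains only $N_{s,2}=2N_s-2^{k_s}$ states, so $\sum_{x\in\check{\mathcal{X}}_s}\tilde{Q}_s(x)$ is a sum of $N_{s,2}$ terms drawn from a distribution summing to $1$ over $N_s$ states, and the strict upper bound $\frac{2N_s-2^{k_s}}{N_s}$ is immediate from dropping that nonnegative sum; the final $\le 1$ uses $2N_s-2^{k_s}\le N_s$. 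The second inequality of \eqref{eq5-th3} then follows trivially by discarding the nonnegative $-\mu_{\mathrm{pi}}(N_s)$ term.

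The main obstacle I anticipate is the bookkeeping in the algebraic conversion in the second paragraph: correctly matching which states carry the longer versus shorter codewords (the $N_{s,1}$ versus $N_{s,2}$ assignment of \eqref{eq5-8}--\eqref{eq5-9}) to the definition of $\check{\mathcal{X}}_s$ as the \emph{smallest}-probability states, and ensuring the identity $\sum_{x\in\mathcal{X}_s}\tilde{Q}_s(x)=1$ is applied to the right index set. The choice of $\check{\mathcal{X}}_s$ as the low-probability states is what makes $\nu_{N_s,\tilde{Q}_s}$ as small as possible and hence tightens the bound, so I would take care to justify that assigning shorter codewords to the least probable $\mathcal{F}_x^+$ is both feasible under \eqref{eq2-10-2} and optimal for this estimate.
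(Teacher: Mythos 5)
Your proposal follows essentially the same route as the paper's Appendix~\ref{App-E}: expand \eqref{eq3-4} over the fixed-length codes, use $\sum_{x\in\mathcal{X}_s}\tilde{Q}_s(x)=1$, and convert $k-k_s+1-\sum_{x\in\check{\mathcal{X}}_s}\tilde{Q}_s(x)$ into $\lg(N/N_s)+\nu_{N_s,\tilde{Q}_s}-\mu_{\mathrm{pi}}(N_s)$; the algebra you sketch is exactly the identity $L_{\mathrm{pi}}(N_s|\tilde{Q}_s)=\lg N_s+\mu_{\mathrm{pi}}(N_s)-\nu_{N_s,\tilde{Q}_s}$ of \eqref{eq-A10}, and deferring \eqref{eq5-th3-1} to Appendix~\ref{App-I} is what the paper does.

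The one step you need to repair is the line where you write $L=\sum_s p(s)\bigl[(k-k_s)+1-\sum_{x\in\check{\mathcal{X}}_s}\tilde{Q}_s(x)\bigr]$ as an equality, together with your closing remark that taking $\check{\mathcal{X}}_s$ to be the low-probability states ``makes $\nu_{N_s,\tilde{Q}_s}$ as small as possible and hence tightens the bound'' and is ``optimal for this estimate.'' Both claims are reversed. The construction does not (and, as the paper notes, cannot) control which states of $\mathcal{X}_s$ receive the $2N_s-2^{k_s}$ sets $\mathcal{F}_x^+$ of size $2^{k-k_s}$, because reassigning them would change the stationary distribution $\{Q(\hat{x})\}$ itself. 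So the actual set $A$ of shorter-codeword states is arbitrary, and the correct statement is the inequality $\sum_{x\in A}\tilde{Q}_s(x)\geq\sum_{x\in\check{\mathcal{X}}_s}\tilde{Q}_s(x)$, since $\check{\mathcal{X}}_s$ \emph{minimizes} that sum over all $(2N_s-2^{k_s})$-element subsets; this \emph{maximizes} $\nu_{N_s,\tilde{Q}_s}$ and yields the \emph{loosest} but assignment-independent bound --- the worst case, which is precisely how the paper argues (bounding $\sum_x l_{\mathrm{pi}}(x)\tilde{Q}_s(x)$ from below by $L_{\mathrm{pi}}(N_s|\tilde{Q}_s)$). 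The inequality goes in the direction needed for the theorem, so your chain survives once the equality is replaced by $\leq$ and the justification is stated as a worst-case (not optimal-case) argument; there is no feasibility issue to check beyond \eqref{eq5-8}--\eqref{eq5-9}, which you already cite.
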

The proof of Theorem \ref{th-5} is given in Appendix \ref{App-E}.
$\mu_{\mathrm{pi}}(N_s)$ is the redundancy of the phased-in code
in the case that $\tilde{Q}_s=\{\tilde{Q}_s(x) \,|\, x\in\mathcal{X}_s\}$ 
is a uniform distribution, and $\nu_{N_s, \tilde{Q}_{s}}$ represents 
a deviation of $\tilde{Q}_s$ from the uniform distribution.
Refer Appendix~\ref{App-I} for more details, where the average code length and redundancy of
the phased-in code is evaluated.

\begin{remark} 
In the sAEDSs used in Theorems \ref{th-2}-\ref{th-5},
the length difference of codewords is within 1 bit
in $\bigcup_{x\in\mathcal{X}_s} \mathcal{B}^{(D)}_x$ for each $s\in\mathcal{S}$.  
Hence, each of these sAEDSs can be realized as a tANS by
selecting functions $C[s,y]$ and $D[x]$ appropriately. Refer Remark \ref{rem-2}.
Therefore, Theorems \ref{th-2}-\ref{th-5} also hold for the optimal tANS with $N=2^k$ states. 
\end{remark}

\section{Performance of the sAEDS in large $N$}
Yokoo and Dub\'{e} \cite{YD2019} proved that the tANS with their functions $C[s,y]$ and $D[x]$ is asymptotically optimal, i.e.,~the average code length $L$ of the tANS goes to $H(p)$ when $N$ tends to infinity. Since the tANS can be considered as a special case of sAEDS, the optimal sAEDS is also asymptotically optimal.
However, they have not studied about how fast $L$ converges to $H(p)$ as $N$ increases. 
In this section, we consider  the sAEDS shown in Fig.~\ref{fig6},\footnote{
This sAEDS is equivalent to the tANS treated in \cite{YD2019} in the sense
that this sAEDS has the same codeword length as the tANS.} and will show
that this sAEDS satisfies  $L \leq H(p)+O(1/N)$.

\begin{figure}[t]
 \begin{center}
   \includegraphics[width=7cm]{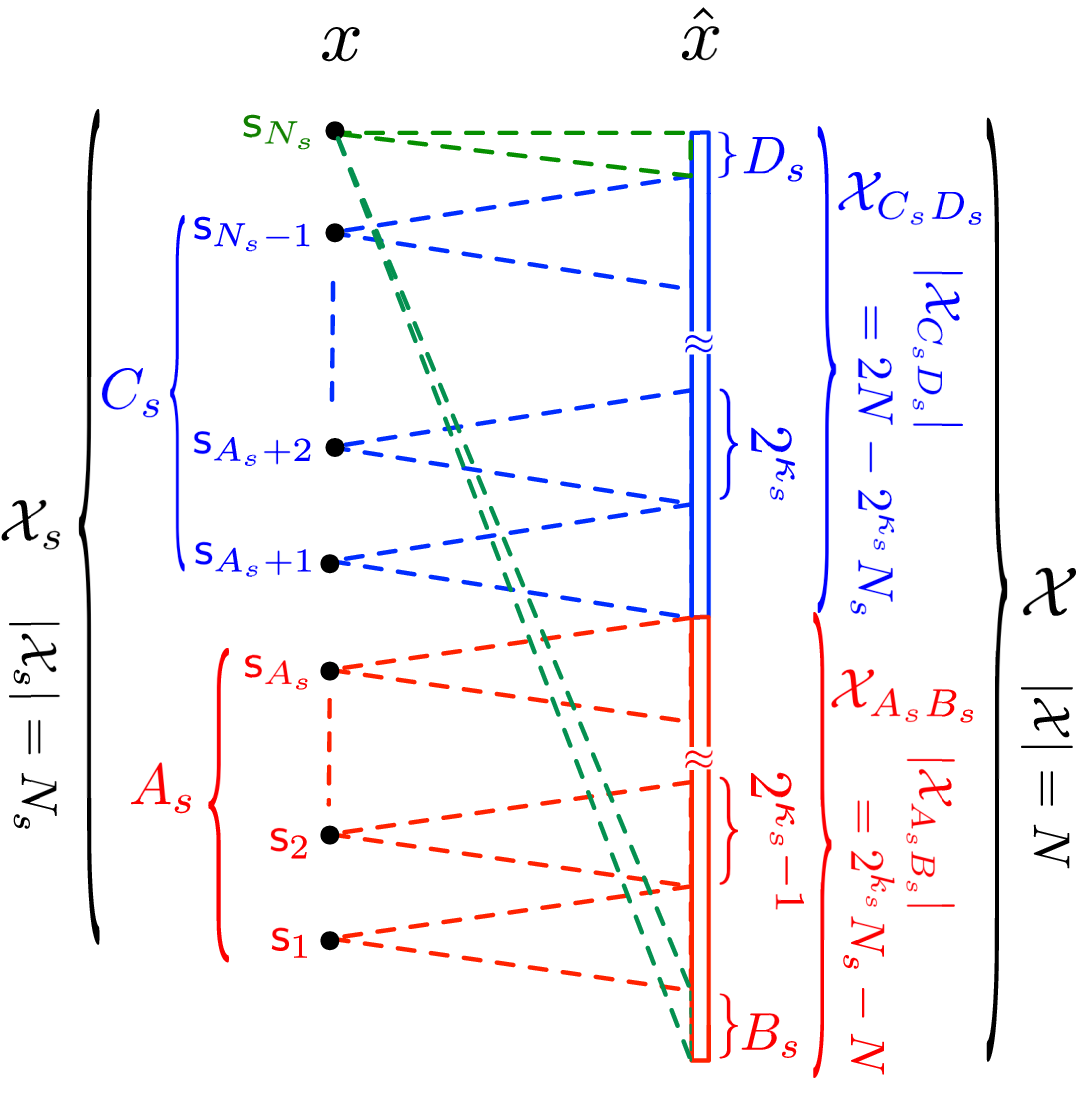}
    \caption{An sAEDS, which is equivalent to the tANS treated in \cite{YD2019}.}
      \label{fig6}
 \end{center}
\end{figure}

For each $s\in\mathcal{S}$ with $N_s=|\mathcal{X}_s|$, we define $\kappa_s$ as
\begin{align}
\kappa_s=\left\lceil \lg \frac{N}{N_s} \right\rceil. \label{eq6-0}
\end{align}
In Fig.~\ref{fig6}, the right side represents all states $\hat{x}\in\mathcal{X}$ such that states $\hat{x}$ are arranged
from bottom to top in descending order of probability $Q(\hat{x})$.
On the other hand, the left side represents $\mathcal{X}_s=\{\s_1, \s_2, \cdots, \s_{N_s}\}$ for 
a source symbol $s\in\mathcal{S}$.
Then, we divide $\mathcal{X}_s$ into three groups: $\{\s_i \,|\, 1\leq i \leq A_s\}$ with $A_s$ states, $\{\s_i \,|\, A_s+1\leq i \leq N_s-1\}$ with $C_s(=N_s-A_s-1)$ states, and $\{\s_{N_s}\}$ with one state.
For each state $\s_i$ in the first, second, and third groups, we set $|\mathcal{F}^{+}_{\s_i}|=2^{\kappa_s-1}$,
$|\mathcal{F}^{+}_{\s_i}|=2^{\kappa_s}$, and $|\mathcal{F}^{+}_{\s_{N_s}}|=B_s+D_s$, respectively. 
Note that codeword length becomes 
$\kappa_s-1$ bits and $\kappa_s$ bits for  the first and second groups, respectively, when  we use a fixed length code.

From Fig.~\ref{fig6}, parameters $A_s, B_s, C_s$, and $D_s$ must satisfy $2^{\kappa_s}N_s-N=2^{\kappa_s-1}A_s+B_s$, $2N-2^{\kappa_s}N_s=2^{\kappa_s}C_s+D_s$, and $N_s=A_s+C_s+1$. Hence, by deleting $A_s$ and $C_s$ from these
equations, we have the following relation.
\begin{align}
2B_s+D_s=2^{\kappa_s}.\label{eq6-3}
\end{align}
This relation means that for $\mathcal{B}^{(D)}_{\mathsf{s}_{N_s}}$,
we can use a phased-in code with  $\kappa_s$ bits for $D_s$ states and $\kappa_s-1$ bits for $B_s$ states in $\mathcal{F}^{+}_{\s_{N_s}}$.
Hence, when we divide $\mathcal{X}$ into $\mathcal{X}_{A_sB_s}$ with larger $Q(\hat{x})$
and $\mathcal{X}_{C_sD_s}$ with smaller $Q(\hat{x})$ 
so that $|\mathcal{X}_{A_sB_s}|=2^{\kappa_s}N_s-N$ and  $|\mathcal{X}_{C_sD_s}|=2N-2^{\kappa_s}N_s$
as shown in Fig.~\ref{fig6},
we have $l(E_{\hat{x}}(s))=\kappa_s-1$ if $\hat{x}\in\mathcal{X}_{A_sB_s}$
and $l(E_{\hat{x}}(s))=\kappa_s$ if  $\hat{x}\in \mathcal{X}_{C_sD_s}$.
We now define $\tilde{Q}_s$ by
\begin{align}
\tilde{Q}_s =\sum_{\hat{x}\in\mathcal{X}_{C_sD_s}} Q(\hat{x}).\label{eq6-4}
\end{align}
Then, the average code length is given from \eqref{eq3-4} by 
\begin{align}
L&=\sum_{s\in\mathcal{S}}p(s)\left[\sum_{\hat{x}\in\mathcal{X}_{A_sB_s}}Q(\hat{x}) (\kappa_s-1)
  +\sum_{\hat{x}\in\mathcal{X}_{C_sD_s}}Q(\hat{x}) \kappa_s\right]\nonumber\\
  &= \sum_{s\in\mathcal{S}}p(s) (\kappa_s -1+\tilde{Q}_s). \label{eq6-5}
\end{align}

For $\mathcal{X}=\{\alpha_1, \alpha_2, \cdots, \alpha_N\}$, 
we define a probability distribution $Q^*(\alpha_i)$ by
\begin{align}
Q^*(\alpha_i)=\lg\frac{N+i}{N+i-1},  \label{eq6-6-0} 
\end{align}
which satisfies
\begin{align}
\sum_{i=1}^NQ^*(\alpha_i)=\sum_{i=1}^N\lg \frac{N+i}{N+i-1}=\lg \frac{2N}{N}=1.\label{eq6-6}
\end{align}
Note that this probability distribution $\{Q^*(\alpha_i)\}$ is equivalent to 
the optimal stationary probability distribution given by Yokoo and Dub\'{e} for the tANS \cite{DY2019}\cite{YD2019}.

Let  $Q(\alpha_i)$ be the stationary probability distribution of the AEDS given in Fig.~\ref{fig6}.
Then, the following lemma holds.
\begin{lemma}\label{lm-1}
If it holds that $Q(\alpha_i)=Q^*(\alpha_i)$ for all $\alpha_i\in\mathcal{X}$, then the average code length $L$ satisfies
\begin{align}
L=H(p)+D(p\|q), \label{eqL1-1}
\end{align}
where $q=\{q(s)=N_s/N \;|\; s\in\mathcal{S}\}$.
\end{lemma}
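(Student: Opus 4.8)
The plan is to evaluate the average code length formula \eqref{eq6-5} under the hypothesis $Q(\alpha_i)=Q^*(\alpha_i)$, and show it collapses to $H(p)+D(p\|q)$. The starting point is the clean expression
\begin{align*}
L=\sum_{s\in\mathcal{S}}p(s)(\kappa_s-1+\tilde{Q}_s),
\end{align*}
so the whole task reduces to computing $\tilde{Q}_s=\sum_{\hat{x}\in\mathcal{X}_{C_sD_s}}Q(\hat{x})$ under the assumed stationary distribution and checking that $\kappa_s-1+\tilde{Q}_s$ equals $\lg(N/N_s)=\lg N-\lg N_s$. If I can verify that last identity for each $s$, then summing against $p(s)$ gives $\sum_s p(s)\lg(N/N_s)=-\sum_s p(s)\lg(N_s/N)=-\sum_s p(s)\lg q(s)$, and the standard splitting $-\lg q(s)=-\lg p(s)+\lg(p(s)/q(s))$ yields $H(p)+D(p\|q)$ after summing. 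So the crux is a single per-symbol computation of $\tilde{Q}_s$.

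First I would pin down which states lie in $\mathcal{X}_{C_sD_s}$. By construction this is the low-probability block with $|\mathcal{X}_{C_sD_s}|=2N-2^{\kappa_s}N_s$, and since states $\hat{x}=\alpha_i$ are arranged from bottom to top in descending order of $Q(\hat{x})$, the set $\mathcal{X}_{C_sD_s}$ consists of the states with the smallest probabilities, i.e. the top indices $\alpha_i$ for $i$ in a range of length $2N-2^{\kappa_s}N_s$ ending at $i=N$. Using the telescoping form $Q^*(\alpha_i)=\lg\frac{N+i}{N+i-1}$, the partial sum over any contiguous index range telescopes: $\sum_{i=a}^{N}Q^*(\alpha_i)=\lg\frac{2N}{N+a-1}$. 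Setting the block size $2N-2^{\kappa_s}N_s$ fixes the lower index $a=2^{\kappa_s}N_s-N+1$, so that $N+a-1=2^{\kappa_s}N_s$, giving
\begin{align*}
\tilde{Q}_s=\lg\frac{2N}{2^{\kappa_s}N_s}=1+\lg N-\kappa_s-\lg N_s.
\end{align*}
Then $\kappa_s-1+\tilde{Q}_s=\lg N-\lg N_s=\lg(N/N_s)$ exactly, which is precisely the identity needed.

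I would then assemble the pieces: substitute $\kappa_s-1+\tilde{Q}_s=\lg(N/N_s)$ back into \eqref{eq6-5}, recognize $\lg(N/N_s)=-\lg q(s)$, and split $-\lg q(s)=-\lg p(s)+\lg\frac{p(s)}{q(s)}$ so that $\sum_s p(s)(-\lg q(s))=H(p)+D(p\|q)$, establishing \eqref{eqL1-1}. The main obstacle is the bookkeeping in the $\tilde{Q}_s$ step: I must confirm that $\mathcal{X}_{C_sD_s}$ really is the contiguous top-index block of smallest-probability states (rather than an arbitrary subset of the claimed cardinality), and that the telescoping sum picks out exactly the index range whose lower endpoint makes $N+a-1=2^{\kappa_s}N_s$. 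Everything else is a routine substitution once that identity is in hand; the role of the hypothesis $Q=Q^*$ is solely to make the partial sum telescope cleanly, and no appeal to stationarity of $Q$ beyond this assumed form is needed.
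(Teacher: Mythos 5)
Your proposal is correct and follows essentially the same route as the paper's proof in Appendix F: it evaluates $\tilde{Q}_s$ as the telescoping sum $\sum_{i=2^{\kappa_s}N_s-N+1}^{N}Q^*(\alpha_i)=\lg\frac{2N}{2^{\kappa_s}N_s}$, substitutes into \eqref{eq6-5} to get $L=\sum_s p(s)\lg(N/N_s)$, and identifies this with $H(p)+D(p\|q)$. Your worry about $\mathcal{X}_{C_sD_s}$ being the contiguous top-index block is settled by the construction in Fig.~\ref{fig6}, where states are ordered by decreasing $Q$ and $Q^*(\alpha_i)$ is decreasing in $i$, so the smallest-probability block is exactly the index range you use.
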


\begin{lemma}\label{lm-2}
If $Q(\alpha_i)$ satisfies  
\begin{align}
Q(\alpha_i)&< Q^*(\alpha_i)+\frac{\eta}{N^2},  \label{eqL2-1}\\
Q(\alpha_i)&< Q^*(\alpha_i)+\frac{\eta}{N\lg N},  \label{eqL2-2}\\
Q(\alpha_i)&< Q^*(\alpha_i)+\frac{\eta}{N},\label{eqL2-3}
\end{align}
for a constant $\eta> 0$ and all $\alpha_i\in\mathcal{X}$, then the average code length $L$ is upper bounded by
\begin{align}
L&< H(p)+D(p\|q)+\frac{\eta}{N},\label{eqL2-4}\\
L&< H(p)+D(p\|q) +\frac{\eta}{\lg N},\label{eqL2-5}\\
L&< H(p)+D(p\|q) + \eta,\label{eqL2-6}
\end{align}
respectively.
\end{lemma}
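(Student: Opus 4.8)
The plan is to start from the exact expression for the average code length established in \eqref{eq6-5}, namely $L=\sum_{s\in\mathcal{S}}p(s)(\kappa_s-1+\tilde{Q}_s)$, and to compare it term-by-term against the target quantity $H(p)+D(p\|q)=\sum_{s\in\mathcal{S}}p(s)\lg(1/q(s))=\sum_{s\in\mathcal{S}}p(s)\lg(N/N_s)$. Subtracting, the per-symbol gap I must control is $\kappa_s-1+\tilde{Q}_s-\lg(N/N_s)$. Since $\kappa_s=\lceil\lg(N/N_s)\rceil$ by \eqref{eq6-0}, the fractional part $\kappa_s-\lg(N/N_s)$ lies in $[0,1)$, and the whole gap is governed by how $\tilde{Q}_s$ compares to a reference value. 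The strategy is to bound $\tilde{Q}_s-1$ from above by a quantity of the form (constant)$\times(Q(\alpha_i)-$baseline$)$ summed over the states in $\mathcal{X}_{C_sD_s}$, so that the hypotheses \eqref{eqL2-1}--\eqref{eqL2-3} can be injected directly.

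First I would recall from Lemma~\ref{lm-1} that under the \emph{exact} stationary distribution $Q^*$, the average length equals $H(p)+D(p\|q)$ precisely; this identifies the ``ideal'' value of $\tilde{Q}_s$, call it $\tilde{Q}_s^*=\sum_{\hat{x}\in\mathcal{X}_{C_sD_s}}Q^*(\hat{x})$, for which $\kappa_s-1+\tilde{Q}_s^*=\lg(N/N_s)$ holds termwise (this is essentially the content of Lemma~\ref{lm-1} read backward). Consequently, the difference $L-(H(p)+D(p\|q))$ collapses to $\sum_{s\in\mathcal{S}}p(s)(\tilde{Q}_s-\tilde{Q}_s^*)=\sum_{s\in\mathcal{S}}p(s)\sum_{\hat{x}\in\mathcal{X}_{C_sD_s}}(Q(\alpha_i)-Q^*(\alpha_i))$. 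Thus the entire problem reduces to bounding a sum of the pointwise deviations $Q(\alpha_i)-Q^*(\alpha_i)$, which is exactly what each of the three hypotheses supplies.

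Next I would plug in each hypothesis in turn. Using \eqref{eqL2-3}, each deviation is at most $\eta/N$, and since $|\mathcal{X}_{C_sD_s}|=2N-2^{\kappa_s}N_s\leq N$ (indeed $|\mathcal{X}_{C_sD_s}|\le N$ because it is a subset of $\mathcal{X}$), the inner sum is at most $N\cdot(\eta/N)=\eta$; summing against $p(s)$ and using $\sum_s p(s)=1$ yields \eqref{eqL2-6}. For the sharper bound \eqref{eqL2-5}, I would use hypothesis \eqref{eqL2-2}, giving inner sum at most $N\cdot\eta/(N\lg N)=\eta/\lg N$, which gives \eqref{eqL2-5} directly. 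Finally, for the sharpest bound \eqref{eqL2-4}, I would use \eqref{eqL2-1}, yielding inner sum at most $N\cdot\eta/N^2=\eta/N$, hence \eqref{eqL2-4}. In each case the bound on $|\mathcal{X}_{C_sD_s}|$ is the key combinatorial input; I would state it cleanly as $|\mathcal{X}_{C_sD_s}|=2N-2^{\kappa_s}N_s\in[0,N]$ from the construction in Fig.~\ref{fig6}.

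The main obstacle I anticipate is verifying rigorously that $\kappa_s-1+\tilde{Q}_s^*=\lg(N/N_s)$ holds at the level of individual symbols $s$, rather than merely in aggregate, so that the cancellation against $H(p)+D(p\|q)$ is legitimate. This requires unpacking the proof of Lemma~\ref{lm-1}: one must confirm that the partition of $\mathcal{X}$ into $\mathcal{X}_{A_sB_s}$ and $\mathcal{X}_{C_sD_s}$ (by descending $Q^*$-probability) makes $\sum_{\hat{x}\in\mathcal{X}_{C_sD_s}}Q^*(\hat{x})$ equal to exactly the fractional complement $1-(\kappa_s-\lg(N/N_s))$. The telescoping form of $Q^*$ in \eqref{eq6-6-0}, $Q^*(\alpha_i)=\lg\frac{N+i}{N+i-1}$, should make this identity transparent, since partial sums telescope to $\lg$ of a ratio; but care is needed because $\mathcal{X}_{C_sD_s}$ consists of the $2N-2^{\kappa_s}N_s$ states of \emph{smallest} $Q^*$-probability, i.e.\ the top indices $i$, and I must check that this selection matches the telescoped value. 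Once that identity is pinned down, the three bounds follow by the uniform pointwise estimates described above with essentially no further calculation.
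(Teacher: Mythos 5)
Your proposal is correct and follows essentially the same route as the paper's own proof: the paper likewise computes $\tilde{Q}_s^*=\sum_{i=2^{\kappa_s}N_s-N+1}^{N}\lg\frac{N+i}{N+i-1}=1-\kappa_s+\lg(N/N_s)$ by telescoping (so the per-symbol cancellation you were worried about does hold exactly), then bounds $\tilde{Q}_s-\tilde{Q}_s^*$ by $(2N-2^{\kappa_s}N_s)$ times the pointwise deviation and uses $2N-2^{\kappa_s}N_s\leq N$. No gaps.
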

The proofs of Lemmas \ref{lm-1} and \ref{lm-2} are given in Appendix \ref{App-F}.

As an example of probability distribution that satisfies \eqref{eqL2-1},
we consider $Q^o(\alpha_i)$ defined by
\begin{align}
Q^o(\alpha_i)&=\frac{\theta}{N+i-1}, \label{eqL2-6-1}
\end{align}
where $\theta>0$ is the normalization constant to satisfy $\sum_{i=1}^N Q^o(\alpha_i)=1$.
Note that $\{Q^o(\alpha_i)\}$ is equivalent to the state probability distribution treated in \cite[Section V]{PDPCMM} and \cite[Section 6]{PDPCMM2023} for the tANS.
As the second example, we consider $\{Q^*_{\gamma}(\alpha_i)\}$ defined by
\begin{align}
Q^*_{\gamma}(\alpha_i)&=\lg\frac{N+[i-\gamma]_1}{N+[i-\gamma]_1-1}, \label{eqL2-6-2}
\end{align}
where $[a]_1=\max\{a,1\}$ and $\gamma>2$ is a constant.

Then the following Lemmas hold.
\begin{lemma}\label{lm-3}
$Q^o(\alpha_i)$  satisfies
\begin{align}
Q^o(\alpha_i)&<Q^*(\alpha_i)+\frac{\lg \text{e}}{2N^2}. \label{eqL3-1}
\end{align}
If it holds that $Q(\alpha_i)= Q^o(\alpha_i)$ for all $\alpha_i\in\mathcal{X}$, 
the average code length $L$ of the sAEDS satisfies
\begin{align}
L < H(S)+D(p\|q)+\frac{\lg e}{2N}. \label{eqL3-2}
\end{align}
\end{lemma}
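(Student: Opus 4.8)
<br>

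The plan is to prove Lemma~\ref{lm-3} in two stages: first establish the pointwise bound \eqref{eqL3-1}, then invoke Lemma~\ref{lm-2} to obtain \eqref{eqL3-2}. Let me sketch each stage.

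\textbf{Stage 1 (pointwise bound).} For the first claim I would start from the definitions $Q^o(\alpha_i)=\theta/(N+i-1)$ and $Q^*(\alpha_i)=\lg\frac{N+i}{N+i-1}$, and compare them term by term. First I need to pin down the normalization constant $\theta$. Since $\sum_{i=1}^N Q^o(\alpha_i)=\theta\sum_{i=1}^N\frac{1}{N+i-1}=1$, we have $\theta=\big(\sum_{j=N}^{2N-1}\frac{1}{j}\big)^{-1}$. The sum $\sum_{j=N}^{2N-1}\frac{1}{j}$ is a partial harmonic sum approximating $\ln 2$; I would use the elementary inequality $\ln\frac{b+1}{b}<\frac{1}{b}$, i.e.\ comparing each reciprocal $1/j$ against $\ln\frac{j+1}{j}$, to relate $\theta$ to $\lg e=1/\ln 2$. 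Concretely, since $\frac{1}{j}>\ln\frac{j+1}{j}$ for each $j$, the harmonic sum exceeds $\ln 2$, giving $\theta<1/\ln 2=\lg e$. The key per-term comparison is then between $\frac{\theta}{N+i-1}$ and $\lg\frac{N+i}{N+i-1}=\frac{1}{\ln 2}\ln\big(1+\frac{1}{N+i-1}\big)$. Writing $b=N+i-1$, I want to bound $\frac{\theta}{b}-\frac{1}{\ln 2}\ln(1+\tfrac{1}{b})$. Using the expansion $\ln(1+x)=x-\frac{x^2}{2}+\cdots > x-\frac{x^2}{2}$ with $x=1/b$ controls the logarithm from below, so $Q^*(\alpha_i)>\frac{1}{\ln 2}\big(\frac1b-\frac{1}{2b^2}\big)$; combined with $\theta<\lg e$, this should yield $Q^o(\alpha_i)-Q^*(\alpha_i)<\frac{\lg e}{2b^2}\le\frac{\lg e}{2N^2}$ since $b\ge N$.

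\textbf{Stage 2 (average code length bound).} Once \eqref{eqL3-1} is established, the bound \eqref{eqL3-2} follows almost immediately. The inequality \eqref{eqL3-1} is exactly the hypothesis \eqref{eqL2-1} of Lemma~\ref{lm-2} with the choice $\eta=\frac{\lg e}{2}$ and the distribution $Q=Q^o$. Applying the first conclusion \eqref{eqL2-4} of Lemma~\ref{lm-2} then gives $L<H(p)+D(p\|q)+\frac{\eta}{N}=H(p)+D(p\|q)+\frac{\lg e}{2N}$, which is \eqref{eqL3-2} (noting that $H(S)$ and $H(p)$ denote the same source entropy).

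\textbf{Main obstacle.} The routine part is Stage 2, which is a direct substitution into an already-proved lemma. The real work is in Stage 1, and specifically in getting the constant $\frac{\lg e}{2}$ sharp rather than merely some $O(1/N^2)$ bound. The delicate point is that two competing error sources must both be controlled to second order: the deviation of $\theta$ from $\lg e$ (an $O(1/N)$ relative effect coming from the harmonic-sum/logarithm gap) and the second-order term in $\ln(1+1/b)$. I expect the cleanest route is to handle them together---bound $\theta\le\lg e$ first so that the normalization error only helps, then the entire slack $\frac{\lg e}{2b^2}$ comes from the $-\frac{x^2}{2}$ term of the logarithm expansion. I would double-check that the $O(x^3)$ tail of $\ln(1+x)$ does not spoil the strict inequality; since $\ln(1+x)>x-\frac{x^2}{2}$ holds exactly for $x>0$, no third-order correction is needed and the strict bound with constant $\frac{\lg e}{2}$ goes through for all $N\ge1$.
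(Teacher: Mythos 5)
Your proposal is correct and follows essentially the same route as the paper: bound $\theta<\lg e$ via a harmonic-sum comparison (the paper uses $\sum_{i=N}^{2N-1}\theta/i>\theta\int_N^{2N}u^{-1}\,\mathrm{d}u=\theta\ln 2$, you use the telescoping bound $1/j>\ln\frac{j+1}{j}$, which is equivalent), then apply $\lg(1+1/u)>(\lg e)(1/u-1/(2u^2))$ termwise to get the $\frac{\lg e}{2N^2}$ slack, and finally invoke Lemma~\ref{lm-2} with $\eta=\frac{\lg e}{2}$. No gaps.
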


\begin{lemma}\label{lm-4}
$Q^*_{\gamma}(\alpha_i)$ satisfies
\begin{align}
Q^*_{\gamma}(\alpha_i)&<Q^*(\alpha_i)+ \frac{(\gamma+1/2)\lg e}{N^2}, \label{eqL4-1}\\
Q^*_{\gamma}(\alpha_i)&>Q^*(\alpha_i)+\frac{(\gamma-2)\lg e}{4N^2}. \label{eqL4-2}
\end{align}
If it holds that $Q(\alpha_i) \leq Q^*_{\gamma}(\alpha_i)$ for all $\alpha_i\in\mathcal{X}$, the average code length $L$ of the sAEDS satisfies
\begin{align}
L < H(S)+D(p\|q)+\frac{(\gamma+1/2)\lg e}{N}. \label{eqL4-3}
\end{align}
\end{lemma}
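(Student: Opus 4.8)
The plan is to prove Lemma~\ref{lm-4} in two stages: first establish the two-sided bounds \eqref{eqL4-1}--\eqref{eqL4-2} on $Q^*_\gamma(\alpha_i)$ by direct comparison with $Q^*(\alpha_i)$, and then deduce the average-code-length bound \eqref{eqL4-3} by invoking Lemma~\ref{lm-2}. The structure mirrors Lemma~\ref{lm-3}, where an analogous logarithmic-gap estimate fed into \eqref{eqL2-1}--\eqref{eqL2-4}.

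For the upper bound \eqref{eqL4-1}, I would compare the argument of the logarithm in $Q^*_\gamma(\alpha_i)=\lg\frac{N+[i-\gamma]_1}{N+[i-\gamma]_1-1}$ with that of $Q^*(\alpha_i)=\lg\frac{N+i}{N+i-1}$. Writing $j=[i-\gamma]_1$, so that $j\le i-\gamma$ when $i>\gamma+1$ and $j=1$ otherwise, the difference
\begin{align}
Q^*_\gamma(\alpha_i)-Q^*(\alpha_i)
&=\lg\frac{N+j}{N+j-1}-\lg\frac{N+i}{N+i-1}\nonumber
\end{align}
should be controlled by writing each term as $\lg\bigl(1+\tfrac{1}{N+m-1}\bigr)$ and using $\lg(1+u)=u\lg e - O(u^2)$. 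The leading behavior is $\lg e\bigl(\tfrac{1}{N+j-1}-\tfrac{1}{N+i-1}\bigr)$, which is positive since $j\le i$; bounding $\tfrac{1}{N+j-1}-\tfrac{1}{N+i-1}=\tfrac{i-j}{(N+j-1)(N+i-1)}$ and using $i-j\le\gamma$ together with $N+j-1\ge N$ gives a term of order $\gamma\lg e/N^2$. The extra $\tfrac12\lg e/N^2$ in the stated bound presumably absorbs the second-order Taylor correction; the task is to make the constant come out to exactly $\gamma+1/2$, which requires tracking the $O(u^2)$ term carefully. The lower bound \eqref{eqL4-2} is obtained symmetrically, using $\gamma>2$ so that $i-j\ge\gamma-2$ (accounting for the clipping at $j=1$ and the shift) to keep the gap bounded below.

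Once \eqref{eqL4-1} is in place, the chain of bounds in Lemma~\ref{lm-2} does the rest: the hypothesis $Q(\alpha_i)\le Q^*_\gamma(\alpha_i)$ combined with \eqref{eqL4-1} yields $Q(\alpha_i)<Q^*(\alpha_i)+\tfrac{(\gamma+1/2)\lg e}{N^2}$, which is exactly the form \eqref{eqL2-1} with $\eta=(\gamma+1/2)\lg e$. Applying the corresponding conclusion \eqref{eqL2-4} then gives \eqref{eqL4-3} immediately. The reason the paper states both \eqref{eqL4-1} and \eqref{eqL4-2}, even though only the upper bound is needed for \eqref{eqL4-3}, is likely to certify that the shifted distribution $Q^*_\gamma$ is a genuine, achievable stationary distribution sandwiching the optimum --- the lower bound guarantees the shift is not so aggressive that the $O(1/N)$ rate degrades.

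The main obstacle I anticipate is the bookkeeping around the clipping operator $[i-\gamma]_1=\max\{i-\gamma,1\}$ and the requirement that $\gamma$ be an integer-compatible shift so that $N+[i-\gamma]_1$ lands on a valid state index; the estimates must hold uniformly across the regime $i\le\gamma+1$ (where the clip is active) and $i>\gamma+1$ (where it is not), and the two-sided constants $\gamma+1/2$ and $(\gamma-2)/4$ suggest the Taylor remainders must be bounded in both directions without losing the right leading constant. The purely analytic inequalities are routine once the regimes are separated, so the real care is in verifying the worst case over $i$ produces precisely these constants rather than something looser.
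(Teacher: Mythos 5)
Your proposal follows essentially the same route as the paper's proof in Appendix~G: both bounds come from writing each term as $\lg\bigl(1+\tfrac{1}{N+m-1}\bigr)$, applying the two-sided estimate $(\lg e)(\tfrac1u-\tfrac{1}{2u^2})<\lg(1+\tfrac1u)<(\lg e)\tfrac1u$, reducing the leading difference to $\tfrac{i-[i-\gamma]_1}{(N+[i-\gamma]_1-1)(N+i-1)}\le\tfrac{\gamma}{N^2}$ with the $\tfrac12$ absorbing the second-order remainder, and then feeding the upper bound into Lemma~2 with $\eta=(\gamma+1/2)\lg e$ to get \eqref{eqL4-3}. That is exactly the paper's argument for \eqref{eqL4-1} and \eqref{eqL4-3}.

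One caveat on the lower bound: your stated mechanism, $i-[i-\gamma]_1\ge\gamma-2$, is not where the constant $(\gamma-2)/4$ comes from and is false in the clipped regime (for $i\le\gamma-1$ one has $i-[i-\gamma]_1=i-1$, which can be $0$). In the paper the $-2$ arises from subtracting the second-order term $\tfrac{1}{2(N+[i-\gamma]_1-1)^2}$ and the $4$ from bounding both denominators by $2N$, i.e.\ $\tfrac{\gamma}{4N^2}-\tfrac{1}{2N^2}$; the paper's derivation itself tacitly assumes $i>\gamma$ so that $i-[i-\gamma]_1=\gamma$ exactly, and indeed \eqref{eqL4-2} fails at $i=1$ where $Q^*_\gamma(\alpha_1)=Q^*(\alpha_1)$. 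Since only the upper bound \eqref{eqL4-1} is used for \eqref{eqL4-3} (and the lower bound is only invoked later, in \eqref{eq6-22}, to show $\sum_iQ^*_\gamma(\alpha_i)>1$, for which holding on the unclipped indices suffices), this does not affect the conclusion you are asked to prove, but if you execute the lower bound you should restrict it to $i>\gamma$ rather than claim it uniformly.
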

The proofs of Lemmas \ref{lm-3} and \ref{lm-4} are given in Appendix \ref{App-G}.

We note that since $\lg\frac{N+i}{N+i-1}$ is a monotonically decreasing function of $i$,
we have $\lg\frac{N+i}{N+i-1}< \lg\frac{N+[i-\gamma]_+}{N+[i-\gamma]_+-1}$.
In the case of $Q(\alpha_i)=Q^*(\alpha_i)$,  relation $Q(\alpha_{i-1})>Q(\alpha_i)$ must hold for every $i$.
However, $Q(\alpha_i)$ satisfying $Q(\alpha_i)\leq Q^*_{\gamma}(\alpha_i)$ has some degree of freedom,
and it is allowed that multiple states $\alpha_i$ takes the same stationary probability\footnote{When $p(s)=p(s')$ for $s\neq s'$, we guess that
the optimal sAEDS has the same code structure for $\mathcal{X}_s$ and $\mathcal{X}_{s'}$ and hence multiple $Q(\alpha_i)$ take the same value in such cases.
}.

From \eqref{eqL4-1} and \eqref{eqL4-2}, we have  
\begin{align}
Q^*_{\gamma}(\alpha_i) -Q^*(\alpha_i)=\Theta\left(\frac{1}{N^2}\right)>0. \label{eq6-22}
\end{align}
Hence, $Q^*_{\gamma}(\alpha_i)$ is not a probability distribution because of $\sum_{i=1}^NQ^*_{\gamma}(\alpha_i)=\sum_{i=1}^N[Q^*(\alpha_i)+\Theta(1/N^2)]=1+\Theta(1/N)>1$. This fact 
enables that $Q(\alpha_i)$ satisfies  $Q(\alpha_i) \leq Q^*_{\gamma}(\alpha_i)$ for all $i$
as shown in Theorem \ref{th-6}.

When $N$ is sufficiently large, we can satisfy $q(s)=N_s/N\approx p(s)$, and hence $D(p\|q)\approx 0$,
by selecting $N_s$ and $N$ appropriately. Hence, for simplicity, we consider the case of $q(s)=N_s/N=p(s)$.
Then the following theorem holds.

\begin{theorem} \label{th-6}
When $q(s)=N_s/N=p(s)$, the stationary probability $Q(\alpha_i)$ of the sAEDS given in
Fig.~\ref{fig6} satisfies $Q(\alpha_i)\leq Q^*_{\gamma}(\alpha_i)$ for all $\alpha_i\in\mathcal{X}$ 
and a certain constant $\gamma>0$,
and  the average code length $L$ of the sAEDS is upper bounded by
\begin{align}
L \leq H(p)+O\left(\frac{1}{N}\right). \label{eq6-23}
\end{align}
\end{theorem}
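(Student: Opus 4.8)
The plan is to apply Lemma~\ref{lm-4} with the specific probability distribution $Q^*_\gamma$, so the entire burden of the proof reduces to exhibiting a constant $\gamma>0$ for which the \emph{actual} stationary probability $Q(\alpha_i)$ of the sAEDS in Fig.~\ref{fig6} is dominated everywhere by $Q^*_\gamma(\alpha_i)$. Once the pointwise bound $Q(\alpha_i)\le Q^*_\gamma(\alpha_i)$ is established for all $i$, inequality \eqref{eqL4-3} gives $L<H(p)+D(p\|q)+(\gamma+1/2)\lg e / N$, and under the assumption $q(s)=N_s/N=p(s)$ we have $D(p\|q)=0$, so the stated bound $L\le H(p)+O(1/N)$ follows immediately with the constant absorbed into the $O$-notation.

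The heart of the argument is therefore a comparison of the genuine stationary distribution with the shifted comparison sequence $Q^*_\gamma$. First I would write down the balance equation \eqref{eq3-1}, $Q(x)=p(s)\sum_{\hat{x}\in\mathcal{F}_x^+}Q(\hat{x})$ valid for each $x\in\mathcal{X}_s$, specialized to the state structure of Fig.~\ref{fig6}: the states $\alpha_i$ are ordered by decreasing $Q$-value, and each $\mathcal{F}^+_x$ is a contiguous block of states whose size is $2^{\kappa_s-1}$, $2^{\kappa_s}$, or $B_s+D_s$. Because the blocks $\{\mathcal{F}^+_x\mid x\in\mathcal{X}_s\}$ partition $\mathcal{X}$ (relation \eqref{eq2-10}) and $q(s)=N_s/N$, one can bound each $Q(\alpha_i)$ above by $p(s)$ times the sum of $Q$ over a block of prescribed cardinality. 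The strategy I expect to use is an inductive comparison in the ordering index $i$: assuming $Q(\alpha_j)\le Q^*_\gamma(\alpha_j)$ has been established for the indices feeding into the block that determines $Q(\alpha_i)$, substitute these bounds into the balance equation and verify that the resulting upper bound does not exceed $Q^*_\gamma(\alpha_i)=\lg\frac{N+[i-\gamma]_1}{N+[i-\gamma]_1-1}$.

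The crucial feature that makes this closure possible is exactly the ``slack'' quantified in \eqref{eq6-22}: since $Q^*_\gamma(\alpha_i)-Q^*(\alpha_i)=\Theta(1/N^2)$ and $\sum_i Q^*_\gamma(\alpha_i)=1+\Theta(1/N)>1$, the comparison sequence overshoots the true $\ell_1$-mass by a margin of order $1/N$. This surplus is what lets a fixed-point / monotonicity argument absorb the per-state discrepancies and allow the genuine normalized distribution to sit under $Q^*_\gamma$ everywhere. I would make this precise by treating the map $T$ sending a candidate distribution to its one-step image under \eqref{eq3-1} as monotone in the pointwise order, showing $Q^*_\gamma$ is a super-solution, i.e.~$T(Q^*_\gamma)\le Q^*_\gamma$ componentwise for $\gamma$ chosen large enough (but still $O(1)$), and then invoking that the true stationary $Q$, being the fixed point reached by iterating $T$ from any starting point below $Q^*_\gamma$, remains below $Q^*_\gamma$.

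The main obstacle I anticipate is verifying the super-solution inequality $T(Q^*_\gamma)\le Q^*_\gamma$ uniformly in $i$, because the block sizes $|\mathcal{F}^+_x|$ differ across the three groups in Fig.~\ref{fig6} and near the boundaries between groups (and near $i=1$, where the truncation $[i-\gamma]_1$ in \eqref{eqL2-6-2} activates) the telescoping of the logarithmic terms $\lg\frac{N+i}{N+i-1}$ is least forgiving. Controlling the accumulation of the $\Theta(1/N^2)$ per-state gaps so that it never erodes below zero over a block of up to $2^{\kappa_s}\approx N/N_s$ states is the delicate estimate; here the lower bound \eqref{eqL4-2} and the summation identity \eqref{eq6-6} are the tools that guarantee the margin is genuinely positive, fixing the admissible range of $\gamma$. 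Everything after that — substituting $D(p\|q)=0$ and reading off \eqref{eq6-23} from \eqref{eqL4-3} — is routine.
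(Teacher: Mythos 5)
Your proposal follows essentially the same route as the paper's Appendix H: the paper likewise verifies, block by block for the three groups of states in Fig.~\ref{fig6}, that $p(s)\sum_{\hat{x}\in\mathcal{F}^+_x}Q^*_\gamma(\hat{x})\le Q^*_\gamma(\alpha_{i'})$ using the telescoping of $\lg\frac{N+i}{N+i-1}$, the bound $2^{\kappa_s-1}<N/N_s\le 2^{\kappa_s}$, and the $\Theta(1/N^2)$ slack, and then concludes via Lemma~\ref{lm-4}. Your explicit framing of this as a super-solution/monotone fixed-point argument ($T(Q^*_\gamma)\le Q^*_\gamma$ componentwise, hence the stationary $Q$ stays below $Q^*_\gamma$) is a slightly cleaner packaging of the paper's concluding inference, but the substance is identical.
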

The proof of Theorem \ref{th-6} is given in Appendix \ref{App-H}.
Since the sAEDS of Fig.~\ref{fig6}  is equivalent to the tANS treated in \cite{YD2019}, the bound \eqref{eq6-23}
also holds for the optimal tANS.

\section{Concluding Remarks}
We proposed the AEDS in this paper. Although the AEDS can be considered as a generalization of tANS,
the code class of the AEDS is much broader than that of the tANS.
We showed that Type-I and Type-{\rtwo} AEDSs can  be easily constructed based on a code tree, and
if the half subtree of the Huffman code has a probability weight larger than 0.61803 (resp.~0.56984),
Type-I AEDS with 2 states (resp.~Type-{\rtwo} AEDS with 5 states) based on the Huffman code tree can attain an average code length shorter than the Huffman code. It is worth noting that these kinds of AEDS cannot be realized as a tANS.
We also derived several upper bounds of average code length $L$ for the sAEDS, and we showed that
the optimal sAEDS with $N$ states can attain $L\leq H(P)+O(1/N)$.
Although these bounds also hold for the optimal tANS, deriving them within the framework of tANS seems complicated.

As shown for Type-I and Type-{\rtwo} AEDSs, an AEDS with a few $N$ states can attain a better compression than the Huffman code. But, Type-I and Type-{\rtwo} AEDSs may not be optimal in all AEDSs with a given $N$.
It is an interesting future work to derive the optimal AEDS that attains the minimal average code length
for a given $N$.

\vspace{0.5cm}
\appendix

\subsection{Average code length and redundancy of phased-in code}\label{App-I}
In this appendix, we evaluate the average code length of the phased-in code \cite{phasedin}\cite{phasedin-2} for a probability distribution $Q=\{Q(\hat{x}) \;|\; \hat{x}\in\mathcal{X}, |\mathcal{X}|=M\}$.
We represent the code length of $\hat{x}\in\mathcal{X}$ by $l_{\text{pi}}(\hat{x})$
and the average code length by $L_{\text{pi}}(M|Q)=\sum_{\hat{x}\in\mathcal{X}} Q(\hat{x})l_{\text{pi}}(\hat{x})$.
Note that for $k=\lceil\lg M\rceil$, the phased-in code has $2^k-M$ codewords with $l_{\text{pi}}(\hat{x})=k-1$  and $2M-2^k$ codewords with $l_{\text{pi}}(\hat{x})=k$.

First we consider the case that $Q$ is a uniform distribution $Q_{u}=\{Q(\hat{x})=1/M\; \text{for all $\hat{x}\in\mathcal{X}$}\}$. In this case, the redundancy of the phased-in code is defined by $\mu_{\text{pi}}(M)=L_{\text{pi}}(M|Q_u)-\lg M$.
Then, the following lemma holds.\footnote{From \cite[Theorem 2]{Gallager}, we know $\mu_{\text{pi}}(M)\leq (1/M)+\sigma$. But, \eqref{eq-A4} is tighter than this bound.}
\begin{lemma} \label{lemma-pi-u}
When $Q$ is a  uniform distribution $Q_{u}=\{Q(\hat{x})=1/M\; \text{for all $\hat{x}\in\mathcal{X}$}\}$, the average code length $L_{\mathrm{pi}}(M|Q_u)$ and
the redundancy $\mu_{\mathrm{pi}}(M)$ of the phased-in code satisfy
\begin{align}
\lg M \leq & L_{\mathrm{pi}}(M|Q_u) \leq \lg M+\sigma, \label{eq-A5}\\
0\leq & \mu_{\mathrm{pi}}(M)\leq \sigma,  \label{eq-A4}
\end{align}
where $\sigma=\lg\lg e + 1 -\lg e\approx 0.08607$.
\end{lemma}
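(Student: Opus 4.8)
The plan is to reduce both displayed inequalities to a one-variable elementary extremum problem. First I would compute the average code length exactly. Since the uniform distribution assigns weight $1/M$ to each of the $M$ symbols, and (as recalled above) the phased-in code has $2^k-M$ codewords of length $k-1$ and $2M-2^k$ codewords of length $k$ with $k=\lceil\lg M\rceil$, a direct count gives
\begin{align}
L_{\mathrm{pi}}(M|Q_u)=\frac{1}{M}\left[(2^k-M)(k-1)+(2M-2^k)k\right]=k+1-\frac{2^k}{M}.
\end{align}
Subtracting $\lg M$ yields the redundancy $\mu_{\mathrm{pi}}(M)=k+1-(2^k/M)-\lg M$, so it suffices to establish $0\le\mu_{\mathrm{pi}}(M)\le\sigma$; the bounds \eqref{eq-A5} on $L_{\mathrm{pi}}(M|Q_u)$ then follow at once by adding $\lg M$ back.

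Next I would introduce the single variable $x=2^k/M$. Because $k=\lceil\lg M\rceil$ forces $2^{k-1}<M\le 2^k$, the quantity $x$ ranges over $[1,2)$, and writing $\lg M=k-\lg x$ collapses the redundancy to
\begin{align}
\mu_{\mathrm{pi}}(M)=1-x+\lg x=:f(x).
\end{align}
The lemma is now exactly the claim that $0\le f(x)\le\sigma$, which I would verify by studying $f$ on the closed interval $[1,2]$ that contains this range. I would check the endpoint values $f(1)=0$ and $f(2)=0$, then compute $f'(x)=-1+(\lg e)/x$ and $f''(x)=-(\lg e)/x^2<0$, so that $f$ is strictly concave with a unique interior maximizer at $x^*=\lg e\approx1.4427$, which indeed lies in $[1,2]$.

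Concavity then delivers both bounds simultaneously. The maximum value is $f(x^*)=1-\lg e+\lg\lg e$, which is precisely $\sigma$, giving the upper bound $\mu_{\mathrm{pi}}(M)\le\sigma$. For the lower bound, a concave function vanishing at both endpoints of an interval is nonnegative throughout it, hence $f(x)\ge 0$ on $[1,2]$, with equality only at $x=1$, i.e.\ when $M$ is a power of two. I do not expect a genuine obstacle here: the one point needing care is the substitution bookkeeping, namely keeping the constraint $k=\lceil\lg M\rceil$ and the resulting range $x\in[1,2)$ exact, so that the maximizer $\lg e$ is certified to lie inside it. As an independent check on the lower bound, $\mu_{\mathrm{pi}}(M)\ge 0$ also follows from Shannon's source-coding inequality, since $\lg M$ is the entropy of $Q_u$ and the phased-in code is prefix-free.
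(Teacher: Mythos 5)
Your proposal is correct and follows essentially the same route as the paper: compute $L_{\mathrm{pi}}(M|Q_u)=k+1-2^k/M$ exactly, reduce the redundancy to a one-variable extremum problem, and locate the maximum value $\sigma$ and the zero endpoints. Your substitution $x=2^k/M\in[1,2)$ is just the reparametrization $x=2^a$ of the paper's variable $a=k-\lg M\in[0,1)$, under which your $1-x+\lg x$ becomes the paper's $a+1-2^a$.
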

\begin{proof}
For the uniform distribution $Q_u$, $L_{\text{pi}}(M|Q_u)$ and $\mu_{\text{pi}}(M)$ are given by
\begin{align}
L_{\text{pi}}(M|Q_u)&=\sum_{\hat{x}\in\mathcal{X}} \frac{1}{M}l_{\text{pi}}(\hat{x}) \nonumber\\
&= \frac{1}{M}\left[(2^k-M)(k-1)+(2M-2^k)k\right] \nonumber\\
&=k+1-\frac{2^k}{M}, \label{eq-A1}\\
\mu_{\text{pi}}(M)&=L_{\text{pi}}(M|Q_u)-\lg M \label{eq-A1-1}\\
&=k+1-\frac{2^k}{M}-\lg M. \label{eq-A2}
\end{align}
Defining $a=k-\lg M$,  $0\leq a <1$,  $\mu_{\text{pi}}(M)$ can be
represented by
\begin{align}
\mu_{\text{pi}}(M)=a +1 -\frac{2^{a+\lg M}}{M}=a+1-2^a.   \label{eq-A3}
\end{align}
Then, we can easily check that $\mu_{\text{pi}}(M)$
is maximized at $a=\lg\lg e$ and minimized at $a=0$ and $a\to1$. 
Therefore, $\mu_{\text{pi}}(M)$ satisfies \eqref{eq-A4},
and  \eqref{eq-A5} is obtained from \eqref{eq-A4} and \eqref{eq-A1-1}.
\end{proof}

Next we consider a general probability distribution $Q=\{Q(\hat{x}) \;|\; \hat{x}\in\mathcal{X}\}$.
We divide $\mathcal{X}$ into two subsets, $\hat{\mathcal{X}}$ with larger probabilities $Q(\hat{x})$ and $\check{\mathcal{X}}$ with smaller probabilities $Q(\hat{x})$ such that $|\hat{\mathcal{X}}|=2^k-M$ and $|\check{\mathcal{X}}|=2M-2^k$. We define $\hat{Q}$, $\check{Q}$, and $\nu_{M,Q}$ as follows.
\begin{align}
\hat{Q}&=Q(\hat{\mathcal{X}})=\sum_{\hat{x}\in\hat{\mathcal{X}}}Q(\hat{x}), \label{eq-A6}\\
\check{Q}&=Q(\check{\mathcal{X}})=\sum_{\hat{x}\in\check{\mathcal{X}}}Q(\hat{x}), \label{eq-A7}\\
\nu_{M,Q}&=\hat{Q}-\frac{2^k-M}{M}=\frac{2M-2^k}{M}-\check{Q}. \label{eq-A8}
\end{align}
Note that $\nu_{M,Q}$ represents the deviation of $Q$ from  the uniform distribution, and it satisfies  
\begin{align}
0\leq \nu_{M,Q}< \frac{2M-2^k}{M}\leq 1.  \label{eq-A9}
\end{align}

Then, the following lemma holds.
\begin{lemma} \label{lemma-pi-u}
The average code length $L_{\mathrm{pi}}(M|Q)$ of the phased-in code
satisfies
\begin{align}
L_{\mathrm{pi}}(M|Q)&=\lg M +\mu_{\mathrm{pi}}(M)-\nu_{M,Q}\label{eq-A10}\\
&\leq \lg M+\sigma-\nu_{M,Q}. \label{eq-A12}
\end{align}
\end{lemma}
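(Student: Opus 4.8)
The plan is to evaluate $L_{\mathrm{pi}}(M|Q)$ directly from its definition by exploiting the partition $\mathcal{X}=\hat{\mathcal{X}}\cup\check{\mathcal{X}}$, and then to match the outcome against the uniform-distribution formula already derived in the preceding lemma. The whole statement reduces to a short computation, so the main work is in setting up the partition correctly and recognizing the constant that appears.

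First I would split the defining sum $L_{\mathrm{pi}}(M|Q)=\sum_{\hat{x}\in\mathcal{X}}Q(\hat{x})l_{\mathrm{pi}}(\hat{x})$ using the fact that the phased-in code assigns length $k-1$ to every $\hat{x}\in\hat{\mathcal{X}}$ and length $k$ to every $\hat{x}\in\check{\mathcal{X}}$; this is consistent because $|\hat{\mathcal{X}}|=2^k-M$ and $|\check{\mathcal{X}}|=2M-2^k$ are exactly the numbers of short and long codewords. This gives $L_{\mathrm{pi}}(M|Q)=(k-1)\hat{Q}+k\check{Q}$. Since $\{\hat{\mathcal{X}},\check{\mathcal{X}}\}$ partitions $\mathcal{X}$ and $Q$ is a probability distribution, we have $\hat{Q}+\check{Q}=1$, so the expression collapses to $L_{\mathrm{pi}}(M|Q)=k-\hat{Q}$. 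Only the aggregate mass $\hat{Q}$ of the shorter codewords matters; the internal ordering of probabilities within each block never enters.

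The key substitution is then $\hat{Q}=\nu_{M,Q}+(2^k-M)/M$ from \eqref{eq-A8}, which yields $L_{\mathrm{pi}}(M|Q)=k+1-(2^k/M)-\nu_{M,Q}$. At this point I would recognize $k+1-2^k/M$ as $L_{\mathrm{pi}}(M|Q_u)$ from \eqref{eq-A1}, which equals $\lg M+\mu_{\mathrm{pi}}(M)$ by \eqref{eq-A1-1}; this establishes \eqref{eq-A10}. The inequality \eqref{eq-A12} then follows immediately from the bound $\mu_{\mathrm{pi}}(M)\leq\sigma$ proved in \eqref{eq-A4}. There is no substantive obstacle in this argument, as it is essentially a one-line computation once the partition is fixed. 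The only point worth verifying carefully is that $\nu_{M,Q}\geq 0$, which holds precisely because the larger probabilities were assigned to $\hat{\mathcal{X}}$, so $\hat{Q}$ is at least its uniform-case value $(2^k-M)/M$; this guarantees that $\nu_{M,Q}$ is a genuine nonnegative saving relative to the uniform distribution, matching the bound \eqref{eq-A9} stated just before the lemma.
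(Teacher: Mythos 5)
Your proposal is correct and follows essentially the same route as the paper: both reduce $L_{\mathrm{pi}}(M|Q)$ to $k-\hat{Q}$, substitute $\hat{Q}=\frac{2^k-M}{M}+\nu_{M,Q}$ from \eqref{eq-A8}, identify $k+1-\frac{2^k}{M}$ with $\lg M+\mu_{\mathrm{pi}}(M)$ via \eqref{eq-A2}, and invoke \eqref{eq-A4} for the final inequality. The only difference is that you spell out the intermediate step $(k-1)\hat{Q}+k\check{Q}$ explicitly, which the paper leaves implicit.
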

\begin{proof}
In the phased-in code,
$\hat{x}\in\hat{\mathcal{X}}$ is encoded into a $(k-1)$-bit codeword and $\hat{x}\in\check{\mathcal{X}}$ is encoded into a $k$-bit codeword.
Hence, the average code length $L_{\text{pi}}(M|Q)$ is given by
\begin{align}
L_{\mathrm{pi}}(M|Q)&=k-\hat{Q}\nonumber\\
&=^{*1} k-\left(\frac{2^k-M}{M}+\nu_{M,Q}\right) \nonumber\\
&=k+1-\frac{2^k}{M} -\nu_{M,Q} \nonumber\\
&=^{*2} \lg M +\mu_{\text{pi}}(M)-\nu_{M,Q} \nonumber\\
&\leq^{*3} \lg M +\sigma -\nu_{M,Q},
\end{align}
where $=^{*1}$, $=^{*2}$, and $\leq^{*3}$ are obtained from \eqref{eq-A8},
and \eqref{eq-A2} and \eqref{eq-A4}, respectively. 
\end{proof}

\subsection{Proof of Theorem \ref{theorem-I}}\label{App-A}

From the state transitions shown in Fig.~\ref{fig-I1}, the stationary probabilities $Q(\alpha_j)$ satisfy
\begin{align}
\sum_{j=1}^N Q(\alpha_j)&=1, \label{eq-I2-1}\\
Q(\alpha_{j+1})&=Q(\alpha_j) P_R\; \text{ for $1\leq j \leq N-1$}, \label{eq-I3}
\end{align}
and hence we have
\begin{align}
Q(\alpha_j)=\frac{P_R^{j-1}(1-P_R)}{1-P_R^N}. \label{eq-I4}
\end{align}
On the other hand,   the average code lengths 
$\sum_{s\in\mathcal{S}_R}  p(s)l(E_{\hat{x}}(s))$ and $\sum_{s\in\mathcal{S}_L}  p(s)l(E_{\hat{x}}(s))$
for $\hat{x}=\alpha_j$ are 
given in Table \ref{table-I1}.
\begin{table}[t]
\begin{center}
\caption{the average code lengths 
$\sum_{s\in\mathcal{S}_R}  p(s)l(E_{\hat{x}}(s))$ and $\sum_{s\in\mathcal{S}_L}  p(s)l(E_{\hat{x}}(s))$ for $\hat{x}=\alpha_j$}
\label{table-I1}
\begin{tabular}{r||c|c} 
 & for $\mathcal{S}_R$ & for $\mathcal{S}_L$\\ \hline
$1\leq  j \leq 2^k-N$  & $L_{T_R}$& $L_{T_L}+k P_L$ \\
$2^k-N+1 \leq j \leq N-1$\hspace{0.18cm}  & $L_{T_R}$ & $L_{T_L}+(k+1) P_L$\\
$j=N$\hspace{0.7cm}  & $L_{T_R}+P_R$ &$L_{T_L}+(k+1) P_L$
\end{tabular}
\end{center}
\end{table}
Therefore, from \eqref{eq3-3} and Table \ref{table-I1}, 
$L_N^\I$  is obtained as follows.
\begin{align}
L_N^\I&=\sum_{\hat{x}\in\mathcal{X}} Q(\hat{x})\sum_{s\in\mathcal{S}}  p(s)l(E_{\hat{x}}(s))\nonumber\\
&=\sum_{j=1}^{2^k-N} Q(\alpha_j)[L_{T_R}+L_{T_L}+kP_L]
+\sum_{j=2^k-N+1}^{N-1}Q(\alpha_j)[L_{T_R}+L_{T_L}+(k+1)P_L] \nonumber\\
&\hspace{5.15cm}+Q(\alpha_N)[L_{T_R}+P_R +L_{T_L}+(k+1)P_L]\nonumber\\
&=^{*1}\sum_{j=1}^{2^k-N} Q(\alpha_j)[L_T-P_R+(k-1)P_L] 
+\sum_{j=2^k-N+1}^{N-1} Q(\alpha_j)[L_T-P_R+kP_L]\nonumber\\
&\hspace{6.05cm}+Q(\alpha_N)[L_T+kP_L]\nonumber\\
&=^{*2} L_T -(1-Q(\alpha_N))P_R +kP_L-\sum_{j=1}^{2^k-N}Q(\alpha_j)P_L\nonumber\\
&=^{*3} L_T-\left(1-\frac{P_R^{N-1}(1-P_R)}{1-P_R^N}\right)P_R +k(1-P_R)
-\frac{1-P_R^{2^k-N}}{1-P_R^N}(1-P_R)\nonumber\\
&= L_T-\left[\frac{1-P_R^{N-1}}{1-P_R^N}P_R +\frac{1-P_R^{2^k-N}}{1-P_R^N}(1-P_R)-k(1-P_R)\right],
\label{eq-I5}
\end{align}
where $=^{*1}$, $=^{*2}$, and $=^{*3}$ hold from \eqref{eqT-1}, \eqref{eq-I2-1}, and \{\eqref{eq-I4} and $P_L=1-P_R$\}, respectively.

Hence, the reduction $\delta_N^\I(P_R)$ in \eqref{eq-I1} is given by
\begin{align}
\delta_N^\I(P_R)=&[L_T-L_N^\I]_0 \nonumber\\
=&\left[\frac{1-P_R^{N-1}}{1-P_R^N}P_R +\frac{1-P_R^{2^k-N}}{1-P_R^N}(1-P_R)-k(1-P_R)\right]_0.
\label{eq-I6}
\end{align}
When $N=2$, $\delta_2^\I(P_R)$ is given by
\begin{align}
\delta_2^\I(P_R)&=\left[\frac{P_R^2+P_R-1}{1+P_R}\right]_0.\label{eq-I7}
\end{align}
Then, we can easily check that $\delta_2^\I(P_R)$ is positive for $P_R>\omega^\I$ where $\omega^\I=(-1+\sqrt{5})/2\approx 0.6180$.

\subsection{Proof of Theorem \ref{theorem-II}}\label{App-B}

From the state transitions of Type-\rtwo~AEDS with 5 states shown in Fig.~\ref{fig-II1}, the stationary probabilities $Q(\alpha_j)$ satisfies
the following relations.
\begin{align}
\sum_{j=1}^5 Q(\alpha_j)&=1, \label{eq-II3}\\
Q(\alpha_1)&=(Q(\alpha_2)+Q(\alpha_3)+Q(\alpha_4)+Q(\alpha_5))\, P_L, \label{eq-II4}\\
Q(\alpha_2)&= Q(\alpha_1)\,P_L, \label{eq-II5}\\
Q(\alpha_3)&=(Q(\alpha_1)+Q(\alpha_2)+Q(\alpha_5))\, P_R, \label{eq-II6}\\
Q(\alpha_4)&=Q(\alpha_3)\,P_R, \label{eq-II7}\\
Q(\alpha_5)&=Q(\alpha_4)\,P_R. \label{eq-II8}
\end{align}
From \eqref{eq-II3}--\eqref{eq-II8}, we obtain
\begin{align}
Q(\alpha_1)&=\frac{P_L}{1+P_L}=\frac{1-P_R}{2-P_R}, \label{eq-II9}\\
Q(\alpha_2)&=\frac{P_L^2}{1+P_L} =\frac{(1-P_R)^2}{2-P_R}, \label{eq-II10}\\
Q(\alpha_3)&=\frac{P_R}{1+P_R+P_R^2}, \label{eq-II11}\\
Q(\alpha_4)&=\frac{P_R^2}{1+P_R+P_R^2}, \label{eq-II12}\\
Q(\alpha_5)&=\frac{P_R^3}{1+P_R+P_R^2}. \label{eq-II13}
\end{align}
From \eqref{eq3-3} and Fig~\ref{fig-II1}, the average code length $L^{\II}$ of Type-\rtwo~AEDS is given by
\begin{align}
L^{\II}&
=\sum_{\hat{x}\in\mathcal{X}} Q(\hat{x})\sum_{s\in\mathcal{S}}  p(s)l(E_{\hat{x}}(s))\nonumber\\
&=Q(\alpha_1)(L_{T_R}+P_R+L_{T_L})
+Q(\alpha_2)(L_{T_R}+2P_R+L_{T_L}+3P_L)+Q(\alpha_3)(L_{T_R}+L_{T_L}+P_L)\nonumber\\
&\hspace*{1cm}+Q(\alpha_4)(L_{T_R}+L_{T_L}+2P_L)+Q(\alpha_5)(L_{T_R}+2P_R+L_{T_L}+3P_L)\nonumber\\
&=^{*1} Q(\alpha_1)(L_T-P_L)+Q(\alpha_2)(L_T+P_R+2P_L)+Q(\alpha_3)(L_T-P_R)\nonumber\\
&\hspace*{1cm}+Q(\alpha_4)(L_{T}-P_R+P_L)+Q(\alpha_5)(L_T+P_R+2P_L)\nonumber\\
&=^{*2} L_T - Q(\alpha_1)(1-P_R)+Q(\alpha_2)(2-P_R)-Q(\alpha_3)P_R+Q(\alpha_4)(1-2P_R)+Q(\alpha_5)(2-P_R)\nonumber\\
&=^{*3} L_T -\frac{P_R^3-P_R^2+2P_R-1}{(2-P_R)(1+P_R+P_R^2)}, \label{eq-II14}
\end{align}
where$=^{*1}$, $=^{*2}$, and $=^{*3}$ hold because of \eqref{eqT-1}, \{\eqref{eq-II3} and $P_L=1-P_R$\}, and \eqref{eq-II9}--\eqref{eq-II13}, respectively.

Hence, $\delta^{\II}(P_R)$ in \eqref{eq-II1} is given by
\begin{align}
\delta^{\II}(P_R)&=[L^{\II}-L_T]_0 \nonumber\\
&=\left[\frac{P_R^3-P_R^2+2P_R-1}{(2-P_R)(1+P_R+P_R^2)}\right]_0. \label{eq-I15}
\end{align}
We can easily check that $\delta^{\II}(P_R)$ is positive for $P_R> \omega^{\II}\approx 0.56984$.

\subsection{Proof of Theorem \ref{th-3}}\label{App-C}

For $x\in\mathcal{X}_s$ and $\hat{x}\in\mathcal{F}^+_x$, we define $\tilde{Q}_s(x)=\sum_{\hat{x}\in\mathcal{F}_x^{+}}Q(\hat{x})$
and $\tilde{Q}_{s,x}(\hat{x})=Q(\hat{x})/\tilde{Q}_s(x)$. Then, $\{\tilde{Q}_{s,x}(\hat{x})\}$ becomes a probability distribution over $\mathcal{F}^{+}_x$.
Letting $l_{\text{pi}}(\hat{x})= l(E_{\hat{x}}(s))$ be the code length of the phased-in code and 
$L_{\text{pi}}(M_s|\tilde{Q}_{s,x})=\sum_{\hat{x}\in\mathcal{F}_x^{+}}l_{\text{pi}}(\hat{x})\tilde{Q}_{s,x}(\hat{x})$ 
be the average code length of the phased-in code over $\mathcal{F}_x^{+}$,
the average code length of the sAEDS is given from \eqref{eq3-4} as follows.
\begin{align}
L&=\sum_{s\in\mathcal{S}}p(s)\sum_{x\in\mathcal{X}_s}\sum_{\hat{x}\in\mathcal{F}_x^{+}} l_{\text{pi}}(\hat{x})Q(\hat{x})\nonumber\\
&=\sum_{s\in\mathcal{S}}p(s)\sum_{x\in\mathcal{X}_s} \tilde{Q}_s(x)\sum_{\hat{x}\in\mathcal{F}_x^{+}}l_{\text{pi}}(\hat{x})\tilde{Q}_{s,x}(\hat{x}) \nonumber\\
&=\sum_{s\in\mathcal{S}}p(s)\sum_{x\in\mathcal{X}_s} \tilde{Q}_s(x) L_{\text{pi}}(M_s|\tilde{Q}_{s,x}).\label{eq5-0}
\end{align}
Furthermore, using \eqref{eq-A9} and \eqref{eq-A12} in Appendix~\ref{App-I},  
$L$ can be bounded by
\begin{align}
L&\leq \sum_{s\in\mathcal{S}}p(s)\sum_{x\in\mathcal{X}_s} \tilde{Q}_s(x) 
(\lg M_s+\sigma) \label{eq5-1}\\
&=  \sum_{s\in\mathcal{S}}p(s) (\lg M_s +\sigma)  \nonumber\\
&= H(p) +D(p\|q) +\sigma, \label{eq5-3}
\end{align}
where the last equality holds because of $M_s=1/q(s)$.

\subsection{Proof of Theorem \ref{th-4}}\label{App-D}

In the same way as \eqref{eq5-1},  the average code length $L$ is 
bounded as follows.
\begin{align}
L &\leq \sum_{s\in\mathcal{S}}p(s)\left[(1-\tilde{Q}_{M_s+1}) (\lg M_s+\sigma)
+\tilde{Q}_{M_s+1}(\lg (M_s+1)+\sigma)\right]\nonumber\\
&\leq \sum_{s\in\mathcal{S}}p(s)\left[\lg (M_s+\tilde{Q}_{M_s+1})\right]+\sigma \nonumber\\
&=\sum_{s\in\mathcal{S}}p(s)\left[\lg \frac{N}{N_s}+ \lg \frac{M_s+\tilde{Q}_{M_s+1}}{N/N_s}\right]
+\sigma\nonumber\\
&=H(p) + D(p\|q) +\sigma+ \sum_{s\in\mathcal{S}}p(s)\lg \frac{M_s+\tilde{Q}_{M_s+1}}{N/N_s},
\label{eq5-6}
\end{align}
where the second inequality holds from Jensen's inequality of $\lg$ function,
and the final equality follows from $q(s)=N_s/N$.

\subsection{Proof of Theorem \ref{th-5}}\label{App-E}

The fixed length codes used in the AEDS of Fig.~\ref{fig-sAEDS-D3} can be constructed as follows. 
For $N=2^k$, we first construct a code tree $\mathsf{T}$ of a fixed length code with $k$ bits and the code tree $\mathsf{T}_{\text{pi}}$ of a phased-in code with $N_s$ leaves.
Then, by removing $\mathsf{T}_{\text{pi}}$ from the root of $\mathsf{T}$ as shown in Fig.~\ref{fig4}, 
we obtain $N_s$ code trees, each of which can be used as the code tree of  $\mathcal{B}^{(D)}_x$ corresponding to $\mathcal{F}^+_x$, $x\in\mathcal{X}_s$. 
\begin{figure}[t]
 \begin{center}
   \includegraphics[width=7cm]{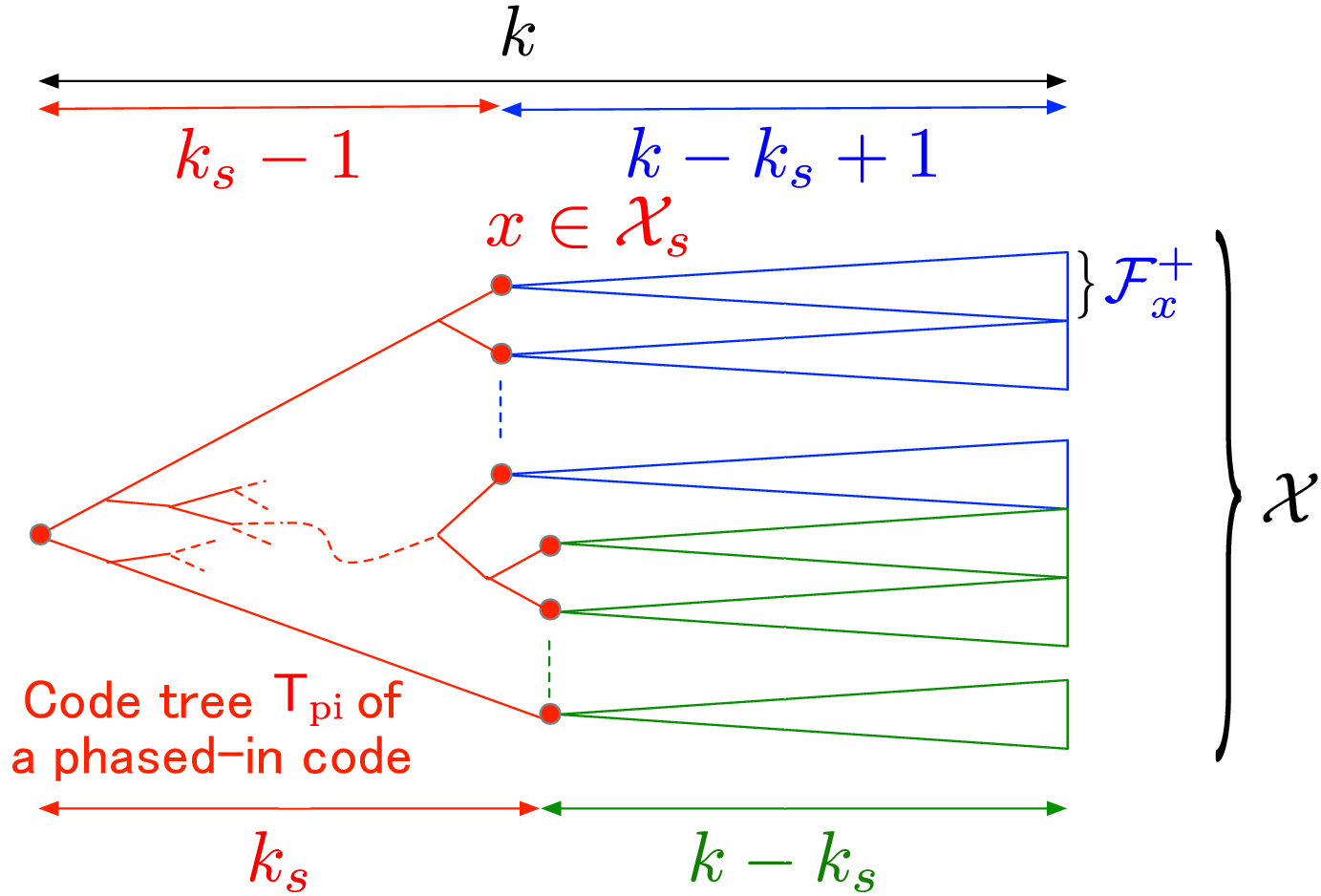}
    \caption{Code trees of $\mathcal{B}_x^{(D)}$ for $x\in\mathcal{X}_s$ in the case of $N=2^k$.}
      \label{fig4}
 \end{center}
\vspace*{-0.2cm}
\end{figure}

The average code length of the above constructed sAEDS is given from \eqref{eq3-4} by
\begin{align}
L&=\sum_{s\in\mathcal{S}}p(s)\sum_{x\in\mathcal{X}_s}\sum_{\hat{x}\in\mathcal{F}_x^{+}}
(k-l_{\text{pi}}(x))Q(\hat{x})\nonumber\\
&=\sum_{s\in\mathcal{S}}p(s)\left(k-\sum_{x\in\mathcal{X}_s}l_{\text{pi}}(x)\tilde{Q}_{s}(x)\right),\label{eq5-10-0}
\end{align}
where $\tilde{Q}_{s}(x)=\sum_{\hat{x}\in\mathcal{F}_x^+} Q(\hat{x})$ for $x\in\mathcal{X}_s$.
Since $l_{\text{pi}}(x)=k_s$ or $k_s-1$, we should assign $x\in \mathcal{X}_s$ 
with larger probabilities $\tilde{Q}_{s}(x)$ to a codeword with $k_s$ bits to minimize $L$ by maximizing $\sum_{x\in\mathcal{X}_s}l_{\text{pi}}(x)\tilde{Q}_{s}(x)$. 
However, if we control the assignment of $\mathcal{F}^+_x$ based on $\tilde{Q}_{s}(x)$,
the stationary probability distribution $\{Q(\hat{x}) |\hat{x}\in\mathcal{X}\}$ will change. 
Hence, we evaluate \eqref{eq5-10-0}  without changing the stationary probability distribution by the worst case, i.e.,~by the average code length $L_{\text{pi}}(N_s|\tilde{Q}_s)$ of the phased-in code, which gives the minimum of $\sum_{x\in\mathcal{X}_s}l_{\text{pi}}(x)\tilde{Q}_{s}(x)$ treated in Appendix \ref{App-I}.

From \eqref{eq5-10-0}, \eqref{eq-A10} in Appendix~\ref{App-I}, and $q(s)=N_s/N$,  we obtain
\begin{align}
L& \leq \sum_{s\in\mathcal{S}}p(s)\left(k-L_{\text{pi}}(N_s|\tilde{Q}_s)\right)\nonumber\\
&= \sum_{s\in\mathcal{S}}p(s)\left(\lg N - (\lg N_s+\mu_{\text{pi}}(N_s)-\nu_{N_s, \tilde{Q}_s})\right)\nonumber\\
&=H(p)+D(p\|q)+\sum_{s\in\mathcal{S}}p(s)(\nu_{N_s, \tilde{Q}_s}-\mu_{\text{pi}}(N_s))\nonumber\\
&\leq H(p)+D(p\|q)+\sum_{s\in\mathcal{S}}p(s)\nu_{N_s,\tilde{Q}_s}. \label{eq5-11}
\end{align}

\subsection{Proofs of Lemmas \ref{lm-1} and \ref{lm-2}}\label{App-F}

From \eqref{eq6-4} and Fig.~\ref{fig6}, $\tilde{Q}_s^*$ of $\{Q^*(\alpha_i)\}$ is given by
\begin{align}
\tilde{Q}_s^*&=\sum_{i=2^{\kappa_s}N_s-N+1}^NQ^*(\alpha_i)=\sum_{i=2^{\kappa_s}N_s-N+1}^N\lg\frac{N+i}{N+i-1}=\lg\frac{2N}{2^{\kappa_s}N_s}=1-\kappa_s+\lg \frac{N}{N_s}. \label{eq6-7}
\end{align}
Then, from \eqref{eq6-5}, \eqref{eq6-7}, and $q(s)=N_s/N$,  \eqref{eqL1-1} in Lemma \ref{lm-1} is obtained by
\begin{align}
L&=\sum_{s\in\mathcal{S}}p(s)\lg\frac{N}{N_s} \nonumber\\
&=\sum_{s\in\mathcal{S}} p(s)\lg \frac{1}{q(s)}  \nonumber\\
&=H(p)+D(p\|q). \label{eq6-8}
\end{align}

To prove Lemma \ref{lm-2}, we first consider the case  that $Q(\alpha_i)$ satisfies \eqref{eqL2-1}.
In this case, $\tilde{Q}_s$ and $\tilde{Q}_s^*$ defined by \eqref{eq6-4} and \eqref{eq6-7} 
satisfy 
\begin{align}
\tilde{Q}_s-\tilde{Q}_s^*
&=\sum_{i=2^{\kappa_s}N_s-N+1}^N (Q(\alpha_i)-Q^*(\alpha_i))\nonumber\\
& \leq \sum_{i=2^{\kappa_s}N_s-N+1}^N \frac{\eta}{N^2}\nonumber\\
&=(2N-2^{\kappa_s}N_s)\frac{\eta}{N^2}\nonumber\\
& \leq \frac{\eta N}{N^2} \nonumber\\
&=\frac{\eta}{N}.  \label{eq6-10}
\end{align}
Hence, \eqref{eqL2-4} in Lemma \ref{lm-2}  is derived from \eqref{eq6-5}, \eqref{eq6-7}, and \eqref{eq6-10}
as follows.
\begin{align}
L&= \sum_{s\in\mathcal{S}}p(s) (\kappa_s -1+\tilde{Q}^*_s + (\tilde{Q}_s-\tilde{Q}^*_s))\nonumber\\
&\leq  \sum_{s\in\mathcal{S}}p(s) \left(\lg\frac{N}{N_s}+\frac{\eta}{N}\right) \nonumber \\
&= H(p)+D(p\|q) +\frac{\eta}{N}.\label{eq6-11}
\end{align}

In the same way, if $Q(\alpha_i)$ satisfies \eqref{eqL2-2} and \eqref{eqL2-3},
we can derive \eqref{eqL2-5} and \eqref{eqL2-6}, respectively.

\subsection{Proofs of Lemmas \ref{lm-3} and \ref{lm-4}}\label{App-G}
To prove Lemma \ref{lm-3}, we first consider the probability distribution $\{Q^o(\alpha_i)\}$ defined by \eqref{eqL2-6-1}.
Then, $Q^o(\alpha_i)$ satisfies 
\begin{align}
Q^o(\alpha_i)-Q^*(\alpha_i)
&=\frac{\theta}{N+i-1}-\lg \frac{N+i}{N+i-1}\nonumber\\
&<^{*1}\frac{\lg \text{e}}{N+i-1}-\lg \left(1+ \frac{1}{N+i-1}\right)\nonumber\\
&<^{*2} \frac{\lg \text{e}}{2(N+i-1)^2}\nonumber\\
&\leq \frac{\lg \text{e}}{2N^2}, \label{eq6-14}
\end{align}
where $<^{*1}$ and $<^{*2}$ are obtained from inequalities \eqref{eq6-15} and \eqref{eq6-16}, respectively.
\begin{align}
1&=\sum_{i=1}^{N} Q^o(\alpha_i)=\sum_{i=1}^N\frac{\theta}{N+i-1}=\sum_{i=N}^{2N-1}\frac{\theta}{i}
> \theta\int_N^{2N}\frac{1}{u}\text{d}u=\theta \ln 2=\frac{\theta}{\lg \text{e}}.\label{eq6-15}
\end{align}
\begin{align}
(\lg \text{e})\left(\frac{1}{u}-\frac{1}{2u^2}\right) < \lg \left(1+\frac{1}{u}\right)<
(\lg \text{e}) \frac{1}{u} \quad\text{for $ u>1$}. \label{eq6-16}
\end{align}
Hence, in the case of $Q(\alpha_i)= Q^o(\alpha_i)$ for all $i$, the average codeword length $L$ 
satisfies \eqref{eqL3-2} from Lemma \ref{lm-2} and \eqref{eq6-14}.

To prove Lemma \ref{lm-4}, we next consider  $\{Q^*_{\gamma}(\alpha_i)\}$ defined by
\eqref{eqL2-6-2}.
Then, $Q^*_{\gamma}(\alpha_i)$ satisfies
\begin{align}
 Q^*_{\gamma}(\alpha_i)-Q^*(\alpha_i) 
&=\lg\frac{N+[i-\gamma]_+}{N+[i-\gamma]_+-1} -\lg \frac{N+i}{N+i-1}\nonumber\\
& = \lg\left(1+\frac{1}{N+[i-\gamma]_+-1}\right) -\lg\left(1+\frac{1}{N+i-1}\right)\nonumber\\
&<^{*2}(\lg \text{e})\left(\frac{1}{N+[i-\gamma]_+-1}-\frac{1}{N+i-1}
+\frac{1}{2(N+i-1)^2}\right)\nonumber\\
&= (\lg \text{e})\left(\frac{i-[i-\gamma]_+}{(N+[i-\gamma]_+-1)(N+i-1)}+\frac{1}{2(N+i-1)^2}\right)\nonumber\\
&\leq (\lg \text{e})\left(\frac{\gamma}{(N+[i-\gamma]_+-1)(N+i-1)}+\frac{1}{2(N+i-1)^2}\right)\nonumber\\
&<\frac{(\gamma+1/2)\lg \text{e}}{N^2}. \label{eq6-19}
\end{align}

The lower bound of $Q^*_{\gamma}(\alpha_i)-Q^*(\alpha_i)$ can also be derived as follows.
\begin{align}
Q^*_{\gamma}(\alpha_i)-Q^*(\alpha_i)
&= \lg\frac{N+[i-\gamma]_+}{N+[i-\gamma]_+-1}-\lg\frac{N+i}{N+i-1}\nonumber\\
&=  \lg\left(1+ \frac{1}{N+[i-\gamma]_+-1}\right)-\lg\left(1+\frac{1}{N+i-1}\right) \nonumber\\
&>^{*2}  (\lg \text{e})\left[\frac{1}{N+[i-\gamma]_+-1}-\frac{1}{2(N+[i-\gamma]_+-1)^2}-\frac{1}{N+i-1} \right]\nonumber\\
&\geq  (\lg \text{e})\left[\frac{\gamma}{(N+[i-\gamma]_+-1)(N+i-1)}-\frac{1}{2(N+[i-\gamma]_+-1)^2}\right]\nonumber\\
&>  (\lg \text{e})\left[\frac{\gamma}{4N^2}-\frac{1}{2N^2}\right]\nonumber\\
& =\frac{(\gamma-2)\lg \text{e}}{4N^2}.
\label{eq6-21}
\end{align}
Hence, in the case of $Q(\alpha_i)\leq Q^*_{\gamma}(\alpha_i)$ for all $i$, the average codeword length $L$ 
satisfies \eqref{eqL4-3} from Lemma \ref{lm-2} and \eqref{eq6-19}.

\vspace{0.5cm}
\subsection{Proof of Theorem \ref{th-6}}\label{App-H}

In the case of  $N_s/N=p(s)$,  
the following relation holds from \eqref{eq6-0}.
\begin{align}
2^{k_s-1}< \frac{N}{N_s}=\frac{1}{p(s)} \leq 2^{k_s}. \label{eq-B1}
\end{align}
Then, we show that for $\mathcal{X}_s=\{\s_1, \s_2, \cdots, \s_{N_s}\}\subset \mathcal{X}=\{\alpha_1, \alpha_2, \cdots,
\alpha_N\}$, 
each $Q(\s_j)$ satisfies $Q(\s_j)\leq Q^*_{\gamma}(\alpha_{i'})$ for a certain $i', 1\leq i'\leq N$
when each $Q(\alpha_i)$ satisfies $Q(\alpha_i)\leq Q^*_{\gamma}(\alpha_i)=Q^*(\alpha_i)+\Theta(1/N^2)$
where $\Theta(1/N^2)>0$.
For $\lg$ function, we use the relation 
\begin{align}
 \frac{\lg \text{e}}{u} -\lg \left(1+\frac{1}{u}\right)=\Theta\left(\frac{1}{u^2}\right),
\label{eq-B2}
\end{align}
which satisfies $\zeta\times\Theta(1/u^2)=\Theta(1/u^2)$ for any constant $\zeta>0$.

First we consider the case of $1\leq j \leq A_s$.
\begin{align}
Q(\s_j)&=p(s) \sum_{i=B_s+(j-1)2^{k_s-1}+1}^{B_s+j 2^{k_s-1}}Q(\alpha_i)\hspace{3cm} \nonumber\\
&\leq \frac{N_s}{N} \sum_{i=B_s+(j-1)2^{k_s-1}+1}^{B_s+j 2^{k_s-1}} Q^*_{\gamma}(\alpha_i)\nonumber\\
&= \frac{N_s}{N} \sum_{i=B_s+(j-1)2^{k_s-1}+1}^{B_s+j 2^{k_s-1}} \left[\lg \frac{N+i}{N+i-1} +\Theta\left(\frac{1}{N^2}\right) \right]\nonumber\\
&=\frac{N_s}{N} \left[\lg \frac{N+B_s+j 2^{k_s-1}}{N+B_s+(j-1)2^{k_s-1}} +2^{k_s-1}\Theta\left(\frac{1}{N^2}\right)\right] \nonumber\\
&=\frac{N_s}{N} \left[\lg \left(1+\frac{2^{k_s-1}}{N+B_s+(j-1)2^{k_s-1}}\right)+2^{k_s-1}\Theta\left(\frac{1}{N^2}\right)\right] \nonumber\\
&<^{*3}\frac{N_s}{N} \left[(\lg e) \frac{2^{k_s-1}}{N+B_s+(j-1)2^{k_s-1}}+2^{k_s-1}\Theta\left(\frac{1}{N^2}\right)\right] \nonumber\\
&<^{*4}(\lg e) \frac{1}{N+B_s+(j-1)2^{k_s-1}} + \Theta\left(\frac{1}{N^2}\right) \nonumber\\
&=^{*3}\lg \left(1+\frac{1}{N+B_s+(j-1)2^{k_s-1}}\right)+ \Theta\left(\frac{1}{N^2}\right)\nonumber\\
&=\lg \frac{N+B_s+(j-1)2^{k_s-1}+1}{N+B_s+(j-1)2^{k_s-1}} +\Theta\left(\frac{1}{N^2}\right)  \nonumber\\
&= Q^*_{\gamma}(\alpha_{i'}),\label{eq-B3}
\end{align}
where $<^{*3}$ and $=^{*3}$ hold from \eqref{eq-B2}\footnote{
Note that for constants $\zeta_1, \zeta_2>0$, we have $\zeta_1\Theta(1/N^2)+\zeta_2\Theta(1/N^2)=\Theta(1/N^2)$. However since $\zeta_1\Theta(1/N^2)-\zeta_2\Theta(1/N^2)$ is not always positive, an inequality is used at $<^{*3}$.},
$<^{*4}$ holds since we have $1/2\leq (N_s/N)2^{k_s-1}<1$ from \eqref{eq-B1}, and $i'$ is defined by $i'=B_s+(j-1)2^{k_s-1}+1$.

Next we derive  $Q(\s_{A_s+j})$ for $1\leq j \leq N_s-A_s-1$. 
By letting $M_s=\left\lfloor N/N_s\right\rfloor$, we have
\begin{align}
Q(\s_{A_s+j})
&=p(s) \sum_{i=2^{k_s}N_s-N+(j-1)2^{k_s}+1}^{2^{k_s}N_s-N+j 2^{k_s}}Q(\alpha_i) \nonumber\\
&\leq \frac{N_s}{N} \sum_{i=2^{k_s}N_s-N+(j-1)2^{k_s}+1}^{2^{k_s}N_s-N+j 2^{k_s}} 
Q^*_{\gamma}(i)\nonumber\\
&= \frac{N_s}{N} \sum_{i=2^{k_s}N_s-N+(j-1)2^{k_s}+1}^{2^{k_s}N_s-N+j 2^{k_s}} \left[\lg \frac{N+i}{N+i-1} +\Theta\left(\frac{1}{N^2}\right) \right]\nonumber\\
&=\frac{N_s}{N} \left[\lg \frac{2^{k_s}N_s+j 2^{k_s}}{2^{k_s}N_s+(j-1)2^{k_s}} +2^{k_s}\Theta\left(\frac{1}{N^2}\right)\right] \nonumber\\
&=\frac{N_s}{N} \left[\lg \left(1+\frac{1}{N_s+(j-1)}\right)+2^{k_s}\Theta\left(\frac{1}{N^2}\right)\right] \nonumber\\
&<^{*3}\frac{N_s}{N} \left[(\lg e) \frac{1}{N_s+(j-1)}+2^{k_s}\Theta\left(\frac{1}{N^2}\right)\right] \nonumber\\
&=^{*5}(\lg e) \frac{1}{N+(j-1)\frac{N}{N_s}} + \Theta\left(\frac{1}{N^2}\right) \nonumber\\
&\leq^{*6}(\lg e) \frac{1}{N+(j-1)M_s} + \Theta\left(\frac{1}{N^2}\right) \nonumber\\
&=^{*3}  \lg \left(1+\frac{1}{N+(j-1)M_s}\right)+ \Theta\left(\frac{1}{N^2}\right)\nonumber\\
&=\lg \frac{N+(j-1)M_s+1}{N+(j-1)M_s} +\Theta\left(\frac{1}{N^2}\right) \hspace*{1.5cm} \nonumber\\
&= Q^*_{\gamma}(i'), \label{eq-B4}
\end{align}
where $=^{*5}$ holds since we have  $1\leq (N_s/N)2^{k_s}<2$ from \eqref{eq-B1},
$\leq^{*6}$ holds from $M_s\leq N_s/N$, and $i'$ is defined by $i'=(j-1)M_s+1$.

Finally we derive $Q(\s_j)$ for $j=N_s$.
\begin{align}
Q(\s_j)
&= p(s)\left[\sum_{i=1}^{B_s} Q(\alpha_i)+\sum_{i=N-D_s+1}^N Q(\alpha_i)\right]\nonumber\\
&\leq \frac{N_s}{N}\left[\sum_{i=1}^{B_s}\lg \frac{N+i}{N+i-1}+\sum_{i=N-D_s+1}^N \lg \frac{N+i}{N+i-1}
+(B_s+D_s)\Theta\left(\frac{1}{N^2}\right)\right]\nonumber\\
&=\frac{N_s}{N}\left[\lg \frac{N+B_s}{N}+\lg \frac{2N}{2N-D_s}  +(B_s+D_s)\Theta\left(\frac{1}{N^2}\right)\right]\nonumber\\
&=\frac{N_s}{N}\left[\lg \left(1+ \frac{2B_s+D_s}{2N-D_s}\right)+(B_s+D_s)\Theta\left(\frac{1}{N^2}\right)\right]\nonumber\\
&=^{*7}\frac{N_s}{N}\left[\lg \left(1+ \frac{1}{N_s+C_s}\right)+(B_s+D_s)\Theta\left(\frac{1}{N^2}\right)\right]\nonumber\\
&<^{*3} \frac{N_s}{N}\left[(\lg e) \frac{1}{N_s+C_s}+(B_s+D_s)\Theta\left(\frac{1}{N^2}\right)\right]\nonumber\\
&=^{*8}(\lg e)\frac{1}{N+C_s\frac{N}{N_s}}+\Theta\left(\frac{1}{N^2}\right)\nonumber\\
&\leq^{*6} (\lg e)\frac{1}{N+C_sM_s}+\Theta\left(\frac{1}{N^2}\right)\nonumber\\
&=^{*3}  \lg \left(1+\frac{1}{N+C_sM_s}\right)+\Theta\left(\frac{1}{N^2}\right)\nonumber\\
&=  \lg \frac{N+C_sM_s+1}{N+C_sM_s}+\Theta\left(\frac{1}{N^2}\right)\nonumber\\
&=Q^*_{\gamma}(\alpha_{i'}), \label{eq-B5}
\end{align}
where $=^{*7}$ holds from \eqref{eq6-3} and relation $2N-2^{k_s}N_s-D_s=C_s2^{k_s}$ shown 
 in Fig.~\ref{fig6},
$=^{*8}$ holds since we have $0< (N_s/N)(B_s+D_s)<(N_s/N)2^{k_s}<2$ from \eqref{eq6-3} and \eqref{eq-B1},
and $i'$ is defined by $i'=C_sM_s+1$.

We can conclude from \eqref{eq-B3}--\eqref{eq-B5} that
every $Q(\s_j)$ satisfies $Q(\s_j)\leq Q^*_{\gamma}(\alpha_{i'})$  if $Q(\alpha_i)\leq Q^*_{\gamma}(\alpha_i)$ for all $i$.
If $p(s)=p(s')$ for $s\ne s'$, it may occur that $Q(\s_j)=Q(\s'_{j'})$.
However,  if $Q(\s_j)\leq Q^*_{\gamma}(\alpha_{i'})$, $Q(\s_j)$ also satisfies $Q(\s_j)\leq Q^*_{\gamma}(\alpha_i)$
for $1\leq i \leq i'$. Hence, even in the case that several $Q(\s_j)$ take the same value,
$\{Q(\s_j), 1\leq j \leq N_s, s\in\mathcal{S}\}$ can be fitted to $\{Q^*_{\gamma}(\alpha_i), 1\leq i\leq N\}$.

From all of the above cases, we can conclude that the stationary probability distribution $\{Q(\alpha_i)\}$ of the sAEDS shown in Fig.~\ref{fig6} satisfies $Q(\alpha_i)\leq Q^*_{\gamma}(\alpha_i)$ for all $\alpha_i\in\mathcal{X}$.
Hence, from Lemma \ref{lm-4}, the sAEDS in Fig.~\ref{fig6} satisfies \eqref{eq6-23}.

\newpage

\end{document}